\def\moverlay{\mathpalette\mov@rlay}
\def\mov@rlay#1#2{\leavevmode\vtop{%
   \baselineskip\z@skip \lineskiplimit-\maxdimen
   \ialign{\hfil$\m@th#1##$\hfil\cr#2\crcr}}}
\newcommand{\charfusion}[3][\mathord]{
    #1{\ifx#1\mathop\vphantom{#2}\fi
        \mathpalette\mov@rlay{#2\cr#3}
      }
    \ifx#1\mathop\expandafter\displaylimits\fi}
\newcommand{\cupdot}{\charfusion[\mathbin]{\cup}{\cdot}}
\g@addto@macro\bfseries{\boldmath}
\g@addto@macro\mdseries{\unboldmath}
\g@addto@macro\normalfont{\unboldmath}
\g@addto@macro\rmfamily{\unboldmath}
\g@addto@macro\upshape{\unboldmath}
\g@addto@macro\bfseries{\boldmath}
\def\thmhead@plain#1#2#3{%
  \thmname{#1}\thmnumber{\@ifnotempty{#1}{ }\@upn{#2}}%
  \thmnote{ {\the\thm@notefont\unboldmath(#3)}}}
\let\thmhead\thmhead@plain
\renewcommand*{\multicitedelim}{\addcomma\space}
    \newlength{\temp@x}%
    \newlength{\temp@y}%
    \newlength{\temp@w}%
    \newlength{\temp@h}%
    \def\my@coords#1#2#3#4{%
      \setlength{\temp@x}{#1}%
      \setlength{\temp@y}{#2}%
      \setlength{\temp@w}{#3}%
      \setlength{\temp@h}{#4}%
      \adjustlengths{}%
      \my@pdfliteral{\strip@pt\temp@x\space\strip@pt\temp@y\space\strip@pt\temp@w\space\strip@pt\temp@h\space re}}%
      \def\my@pdfliteral#1{\pdfliteral page{#1}}% I don't know why % this command...
      \def\adjustlengths{}%
      \def\my@pdfliteral #1{}% isn't equivalent to this one
      \def\adjustlengths{\setlength{\temp@h}{-\temp@h}\addtolength{\temp@y}{1in}\addtolength{\temp@x}{-1in}}%
    \def\Hy@colorlink#1{%
      \begingroup
        \ifHy@ocgcolorlinks
          \def\Hy@ocgcolor{#1}%
          \my@pdfliteral{q}%
          \my@pdfliteral{7 Tr}% Set text mode to clipping-only
        \else
          \HyColor@UseColor#1%
        \fi
    }%
    \def\Hy@endcolorlink{%
      \ifHy@ocgcolorlinks%
        \my@pdfliteral{/OC/OCPrint BDC}%
        \my@coords{0pt}{0pt}{\pdfpagewidth}{\pdfpageheight}%
        \my@pdfliteral{F}% Fill clipping path (the url's text) with
                           % current color
        %
        \my@pdfliteral{EMC/OC/OCView BDC}%
        \begingroup%
          \expandafter\HyColor@UseColor\Hy@ocgcolor%
          \my@coords{0pt}{0pt}{\pdfpagewidth}{\pdfpageheight}%
          \my@pdfliteral{F}% Fill clipping path (the url's text)
                             % with \Hy@ocgcolor
        \endgroup%
        \my@pdfliteral{EMC}%
        \my@pdfliteral{0 Tr}% Reset text to normal mode
        \my@pdfliteral{Q}%
      \fi
      \endgroup
    }%
\colorlet{DarkRed}{red!50!black}
\colorlet{DarkGreen}{green!50!black}
\colorlet{DarkBlue}{blue!50!black}
\declaretheorem[numberwithin=section]{theorem}
\declaretheorem[numberlike=theorem]{lemma}
\declaretheorem[numberlike=theorem]{definition}
\declaretheorem[numberlike=theorem]{claim}
\declaretheorem[numberlike=theorem, style=remark]{remark}
\DeclareMathOperator{\poly}{poly}
\DeclareMathOperator{\Span}{span}
\DeclareMathOperator{\rk}{rk}
\newcommand{\T}{\mathcal T}
\newcommand{\B}{\mathcal B}
\newcommand{\M}{\mathcal M}
\newcommand{\I}{\mathcal I}
\renewcommand{\P}{\mathcal P}
\newcommand{\eps}{\varepsilon}
\renewcommand{\epsilon}{\varepsilon}
\title{Dynamic Matroids: Base Packing and Covering}
\author{
Tijn de Vos\thanks{TU Graz, Austria. This research was funded in whole or in part by the Austrian Science Fund (FWF) \url{https://doi.org/10.55776/P36280}. For open access purposes, the author has applied a CC BY public copyright license to any author-accepted manuscript version arising from this submission.}
\and \textcircled{r}\thanks{The author ordering was randomized using \url{https://www.aeaweb.org/journals/policies/random-author-order/} generator. It is requested that citations of this work list the authors separated by \texttt{\textbackslash textcircled\{r\}} instead of commas.} \and
Mara Grilnberger\thanks{Department of Computer Science, University of Salzburg, Austria. This research has been supported by the EXDIGIT (Excellence in Digital Sciences and Interdisciplinary Technologies) project, funded by Land Salzburg under grant number 20204-WISS/263/6-6022. This project has received funding from the European Research Council (ERC) under the European Union's Horizon 2020 research and innovation programme (grant agreement No 947702).} 
}
\date{}
\begin{document}
\begin{titlepage}
\maketitle
\begin{abstract}
In this paper, we consider dynamic matroids, where elements can be inserted to or deleted from the ground set over time. The independent sets change to reflect the current ground set. As matroids are central to the study of many combinatorial optimization problems, it is a natural next step to also consider them in a dynamic setting. The study of dynamic matroids has the potential to generalize several dynamic graph problems, including, but not limited to, arboricity and maximum bipartite matching. We contribute by providing efficient algorithms for some fundamental matroid questions. 

In particular, we study the most basic question of maintaining a base dynamically, providing an essential building block for future algorithms. We further utilize this result and consider the elementary problems of base packing and base covering. We provide a deterministic algorithm that maintains a $(1\pm \varepsilon)$-approximation of the base packing number $\Phi$ in $O(\Phi \cdot \text{poly}(\log n, \varepsilon^{-1}))$ queries per update. Similarly, we provide a deterministic algorithm that maintains a $(1\pm \varepsilon)$-approximation of the base covering number $\beta$ in $O(\beta \cdot \text{poly}(\log n, \varepsilon^{-1}))$ queries per update. Moreover, we give an algorithm that maintains a $(1\pm \varepsilon)$-approximation of the base covering number $\beta$ in $O(\text{poly}(\log n, \varepsilon^{-1}))$ queries per update against an oblivious adversary.

These results are obtained by exploring the relationship between \emph{base collections}, a generalization of tree-packings, and base packing and covering respectively. We provide structural theorems to formalize these connections, and show how they lead to simple dynamic algorithms. 
\end{abstract}

\vfill

\textbf{Acknowledgements.} The authors would like to thank Aleksander Christiansen for the preliminary discussions that were the inspiration for this research.

\thispagestyle{empty}

\newpage
\thispagestyle{empty}
\tableofcontents

 \newpage
% \listoftodos

\end{titlepage}
\newpage

\section{Introduction}
Matroids generalize different mathematical concepts such as graphs and vector spaces. They have applications in combinatorial optimization, geometry, topology, network theory, and coding theory~\cite{oxley92,HlinenyW06,welsh2010matroid,recski2013matroid,Fujita24}. In particular, matroid problems are often seen as ``the problems where greedy algorithms are effective" (see, e.g.,~\cite{schrijver2003combinatorial}). Although there are dynamic algorithms for specific matroids, the work on dynamic algorithms for general matroids is limited. Concurrent work by Chandrasekaran, Chekuri, and Zhu~\cite{chandrasekaran2025online} also initiated the somewhat related study of \emph{online} matroids, where the matroid is slowly revealed over time. With this paper, we would like to develop the study of dynamic matroids by providing efficient algorithms for some fundamental matroid questions. In particular, we study the most basic question of maintaining a base dynamically, providing essential building blocks for future algorithms. Let us start with some definitions.

\paragraph{Matroids.}
Formally, a matroid is defined as a tuple $\M = (E, \I)$, where $E$ is a finite \emph{ground set} of elements and $\I\subseteq \P(E)$ a family of \emph{independent sets}, such that the following three properties hold: 
 1) Non trivial: $\emptyset \in \I$.
    2) Downward closure: if $A\in \I$ and $A' \subseteq A$, then $A' \in \I$. 
    3) Exchange property: if $A,B\in \I$ and $|A|> |B|$, then there is an element $e\in A\setminus B$ such that $B\cup \{e\} \in \I$.

We denote $n:=|E|$ to be the size of the ground set. The \emph{rank} of a set $A\subseteq E$ is defined as the size of the largest independent set it contains:

$
    \rk(A) := \max\limits_{\substack{A'\in \I \text{ s.t.}\\ A' \subseteq A}} |A'|$.

We say that $B\subseteq E$ is a \emph{base} of $\M$ if it is a maximal independent set, i.e., $\rk(B)=|B|=\rk(E)$. Given a weight function on the elements, a \emph{minimum weight base} is a base of the smallest total weight.  
Since $\I$ can be as large as $2^{|E|}$, it is often not given explicitly, but implicitly via oracle access. In this paper, we use a \emph{rank oracle}, which provides $\rk(A)$ upon a query $A\subseteq E$.

\paragraph{Matroid Problems.}
Two classic matroid problems are matroid union and matroid intersection. They reduce to each other in polynomial time, see, e.g., \cite{edmonds1970submodular,lawler1970optimal}.  These problems generalize many graph problems, such as packing disjoint spanning trees, computing the arboricity, bipartite matching, see, e.g., \cite{schrijver2003combinatorial}.
Base packing and base covering can be seen as instances of matroid union. In this paper, we investigate these two fundamental cases in the dynamic setting. 

One well-studied example of a matroid is the \emph{graphic matroid}, where $E$ is some set of edges and $A\subseteq E$ is independent if and only if it is acyclic. Base packing corresponds to packing disjoint spanning trees and the base covering number corresponds to the arboricity of the graph. Throughout, we generalize results for the graphic matroid and we will state the algorithms for graphic matroids as a comparison. 

Algorithms for base packing and covering have been studied for a long time, see, e.g.,~\cite{knuth1973matroid,Cunningham86,Karger93,Karger98,ChekuriQ17,Blikstad21}. 
The state of the art is given by
Quanrud~\cite{Quanrud24}. They provide exact algorithms that use $\tilde O(n+k\cdot \rk(E)^2)$ independence-queries\footnote{For simplicity, we use the notation $\tilde O(f):=O(f\poly\log f)$.}, where $k$ is the packing/covering number respectively. They also provide $(1+\eps)$-approximations in $\tilde O(n/\eps)$ independence-queries.

\paragraph{Dynamic Matroids.}
For dynamic matroids, we consider element insertions and deletions to the ground set.\footnote{The \emph{decremental} version (deletions only) of this has been studied in~\cite{BlikstadMNT23}, and we argue below that this \emph{fully dynamic} version is the natural and most general definition.} 
When an element $e$ is deleted, the independent sets $\I$ simply restrict to the sets without $e$: $\{I\in \I : e\notin I\}$. In other words, $\M$ is restricted to $E\setminus \{e\}$. When an element $e$ is inserted, the adversary also decides on a collection $\I_e$ such that $(E\cup\{e\}, \I \cup \I_e)$ is a matroid and $e\in I$ for every $I\in \I_e$. This means that inserting and then deleting the same element results in the same matroid. The algorithm receives the element updates and can query the (new) independence sets. 

%Note that in dynamic graph algorithms, queries have a different role. For example, when solving all-pairs shortest path problems, it might not be feasible to maintain the solutions explicitly, since writing each entry of the solution can be the bottleneck, rather than the computation. In such situations a distance oracle is maintained. The query time then corresponds to the time needed to answer the question. In this work, we maintain all answers explicitly. Hence our queries always refer to the matroid oracle. 

An alternative definition of dynamic matroids could be as follows. One could allow for updates to $\mathcal{I}$ that do not stem from an element deletion or insertion. However, if we can completely change $\mathcal{I}$ in a single update, then we could go from any matroid on $n$ elements to any other matroid on $n$ elements. This means that any lower bound for a static algorithm carries over to \emph{each} update of the dynamic algorithm. Hence, recomputing from scratch is the best one could do. On the other extreme, we could only allow adding or deleting \emph{a single set} $I$ from $\I$. We note that, in general, such a change to $\I$ will no longer guarantee that it is a matroid: consider adding $I$ to $\I$. By the downward closure property, all subsets of $I$ also have to be in $\I$. We consider the most restrictive (and hence most general) version of this, where we only add $I$ if $I\setminus \{e\}$ is already in $\I$ for some element $e$. The alternative view is that the element $e$ is added to the ground set\footnote{If $e$ was already part of another independent set $e\in I' \in\I$, we can model this by deleting and inserting~$e$.}, and the independent sets are updated accordingly. 

We distinguish two types of adversaries: an \emph{oblivious adversary} fixes the updates beforehand, independent of the random choices in the algorithm, while an \emph{adaptive adversary} determines the next update depending on the current state of the algorithm. In this paper, our algorithms are either deterministic, hence hold against an adaptive adversary, or are randomized and hold against an oblivious adversary.

We note that dynamic matroids have been studied in the realm of submodular function maximization over dynamic matroids, see, e.g., \cite{MirzasoleimanK017,ChenP22,BanihashemBGHJM24}.

\paragraph{Scope.}
We approach this model by first investigating the classical matroid problem of a minimum weight base and then considering the fundamental problems of base packing and covering. Although the former has been studied in~\cite{BlikstadMNT23} (see the discussion below \Cref{lm:dyn_min_base}), there are no results regarding the latter two. However, in the special case of the graphic matroid, all three problems have been well studied.

%Dynamic greedy algorithms have also been studied in the case of spanners. There is a folklore greedy algorithm that gives an optimal spanner. Bhattacharya, Saranurak, and Sukprasert~\cite{BhattacharyaSS22} show that this can lead to dynamic algorithms with low recourse. However, the update time for this greedy algorithm is polynomial. This shows that greedy algorithms do not always interact well with dynamic updates. For matroids, where static greedy algorithms are very successful, we attempt to find the approaches that allow for efficient dynamic updates. 

\paragraph{Minimum Weight Base.}
First, we give our result for maintaining a minimum weight base. 
\begin{restatable}{proposition}{DynMinBase}\label{lm:dyn_min_base}
    There exists a deterministic algorithm that, given a dynamic matroid $\M$ with weight function $w\colon E \to [1, \dots, W]$, maintains a minimum weight base, where each update uses $O(\log n)$ rank-queries. 
\end{restatable}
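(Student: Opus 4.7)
The plan is to maintain the minimum weight base $B$ explicitly, together with its complement $E \setminus B$, both stored in balanced search trees ordered by the composite key $(w(e), \operatorname{id}(e))$; this tiebreaker yields a strict total order and makes the MWB unique. Each update reduces to one rank query that detects a structural change, plus a binary search that identifies the element (if any) that must be swapped.

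For an insertion of $e$ with weight $w(e)$, I would first query $\rk(B \cup \{e\})$; if this exceeds $|B|$ then $e$ extends the base and the new MWB is $B \cup \{e\}$. Otherwise $B \cup \{e\}$ is dependent and contains a unique fundamental circuit $C(e,B)$ through $e$. The greedy characterization of the MWB then shows that the new MWB is $(B \setminus \{b^\star\}) \cup \{e\}$ whenever $b^\star := \arg\max_{b \in C(e,B) \setminus \{e\}} w(b)$ satisfies $w(b^\star) > w(e)$, and is $B$ itself otherwise. The key observation for locating $b^\star$ is that $C(e,B)$ is the only circuit inside $B \cup \{e\}$, so for any weight threshold $t$ the inclusion $C(e,B) \subseteq B_{\le t} \cup \{e\}$ holds iff $B_{\le t} \cup \{e\}$ is dependent, which is decided by a single query comparing $\rk(B_{\le t} \cup \{e\})$ to $|B_{\le t}|$. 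Binary searching over $t$ in the combined weight-identifier order thus pins down $b^\star$ using $O(\log(Wn))$ rank queries.

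For a deletion of $b$, a lookup determines whether $b \in B$; if not, the MWB is unchanged. If $b \in B$, then $B \setminus \{b\}$ is independent of size $|B|-1$, and the new MWB is either $B \setminus \{b\}$ (when the matroid rank has dropped) or $(B \setminus \{b\}) \cup \{c^\star\}$, where $c^\star$ is the minimum-weight element of $E \setminus B$ for which $(B \setminus \{b\}) \cup \{c^\star\}$ is independent. Since a valid replacement of weight at most $t$ exists iff $\rk((B \setminus \{b\}) \cup (E \setminus B)_{\le t}) = |B|$, a symmetric binary search finds $c^\star$ in $O(\log(Wn))$ rank queries.

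The step requiring the most care is justifying that the element identified by the binary search is indeed the correct one for the new MWB. For insertions this follows from the uniqueness of $C(e,B)$ as the only circuit of $B \cup \{e\}$, so the threshold dependence test flips precisely when the maximum-weight element of the circuit is crossed. For deletions it follows from the exchange property combined with the observation that the greedy run on the updated matroid coincides with the old one on every element lighter than $b$, which forces the minimum-weight usable candidate as the unique replacement. Both facts are immediate from the greedy characterization of the minimum weight base, and the claimed bound on rank queries per update follows.
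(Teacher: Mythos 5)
Your proposal is correct and takes essentially the same approach as the paper: both detect the swap element by a monotone binary search over weight thresholds using rank queries (testing $\rk(B_{\le t}+e)$ for an insertion and $\rk\bigl((B- e)\cup E_{\le t}\bigr)$ for a deletion), yielding $O(\log(Wn))$ rank-queries per update with lexicographic tie-breaking for uniqueness. The only difference is one of detail: you assert the cycle/cut-style exchange rules as immediate from the greedy characterization, whereas the paper proves these facts for general matroids explicitly via (iterated) circuit-exchange arguments.
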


Blikstad, Mukhopadhyay, Nanongkai, and Tu~\cite{BlikstadMNT23} provide an algorithm for maintaining a minimum weight basis under deletions. Each update requires $\tilde O(\sqrt{\rk(E)})$ worst-case rank-queries. However, using a `dynamic rank oracle', they also ensure $\tilde O(\sqrt{\rk(E)})$ worst-case update \emph{time}, as it can only answer queries, where the answer can be computed efficiently on a concrete matroid. The algorithm is based on the MST algorithm with $\tilde O(\sqrt{|V|})$ update time by Frederickson~\cite{Frederickson85}, combined with the sparsification technique of Eppstein, Galil, Italiano, and Nissenzweig~\cite{EppsteinGIN97}. 
It seems likely that such a result can be extended to the fully dynamic setting. Blikstad et al.~\cite{BlikstadMNT23} use this as a subroutine and the authors seem to have optimized their result for their application in solving matroid union. 
The difference to our result, \Cref{lm:dyn_min_base}, is that we have a much lower query time, but do not give guarantees on the update time, since the computation time for answering a query depends on the concrete matroid.

Both the algorithm~\cite{BlikstadMNT23} and our algorithm from \Cref{{lm:dyn_min_base}} use a rank oracle. 
Sometimes, algorithms are developed using an \emph{independence oracle}, which only provides whether $A\in \I$. Clearly, the rank oracle is stronger than the independence oracle. Statically, it is even known that it is \emph{strictly} stronger: computing a minimum weight base can be done with $n$ simultaneous rank-queries, while this needs $\tilde \Omega(n^ {1/3})$ rounds of simultaneous independence-queries~\cite{KarpUW88}. 
It would be interesting to see if there exists a decremental algorithm for maintaining a minimum weight base using a sublinear number of independence queries. 

\paragraph{Minimum Weight Base in the Graphic Matroid.}
The dynamic minimum spanning tree (MST) problem is one of the most studied problems in dynamic graph algorithms, see, e.g.,~\cite{Frederickson85, EppsteinITTWY92,EppsteinGIN97,Frederickson97,AlbertsH98,HenzingerK01,HolmLT01, NanongkaiSW17,Wulff-Nilsen17}. The special case of an unweighted graph is the dynamic spanning tree problem (see, e.g.,~\cite{ HenzingerT97,HenzingerK99,Thorup00,PatrascuT07,KapronKM13,GibbKKT15,Wulff-Nilsen16a, NanongkaiS17,HuangHKPT23}), which also has close ties to dynamic connectivity. 
Let us highlight the state of the art: Holm, de Lichtenberg, and Thorup~\cite{HolmLT01} maintain an MST deterministically with
$O(\log^4 |E|)$ amortized update time. Nanongkai, Saranurak, and Wulff-Nilsen~\cite{NanongkaiSW17}~provide a Las Vegas algorithm with $|V|^{o(1)}$ worst-case update time.

\paragraph{Packing and Covering.}
The other two problems we treat in this paper are \emph{base packing} and \emph{base covering}. In \emph{base packing} the goal is to pack as many disjoint bases in $\M$ as possible. Formally, we define the \emph{(fractional) packing number} $\Phi_\M$ of a matroid $\M$ as follows\begin{equation*}\Phi_\M := \min_{\substack{A\subseteq E\ \rm{s.t.} \\ \rk(\overline A) < \rk(E)}}\frac{|A| }{\rk(E)-\rk(\overline A)}.
\end{equation*}

This is also known as matroid strength. The integral packing number is $\lfloor \Phi_\M \rfloor$. Edmonds~\cite{Edmonds65} showed that $\lfloor \Phi_\M \rfloor$ equals the number of disjoint bases that can be packed in $\M$. 

Dual to base packing, in \emph{base covering} the goal is to cover $\M$ by as few bases as possible.
Formally, we define the \emph{(fractional) covering number} $\beta_\M$ of a matroid $\M$ as follows
\begin{equation*}\beta_\M := \max_{\substack{A\subseteq E\ \rm{s.t.} \\  A \neq \emptyset}}\frac{|A| }{\rk(A)}.
\end{equation*}
This is also known as matroid density. We define the integral covering number as $\lceil \beta_\M\rceil$. Edmonds~\cite{Edmonds65partition} showed that $\lceil \beta_\M\rceil$ equals the minimum number of bases necessary to cover $\M$.
We omit the subscript if the matroid is clear from the context.

Computing exact packing and covering is a hard question. Even in certain specific matroids, like the graphic matroid (see the paragraph below) this is relatively slow. 
In this paper, we aim for efficient algorithms with $\poly\log n$ queries per update. In particular, we give the first dynamic algorithms for approximating the fractional packing number and the fractional covering number.

\begin{restatable}{theorem}{DynPackDet}\label{thm:dyn_pack_det}
    Let $\eps\in (0,1/2)$ be a parameter and let $\M$ be a dynamic matroid that at any moment contains at most $n$~elements. If $\Phi$ is upper bounded by $\Phi_{\max}$, we can deterministically maintain a $(1\pm \eps)$-approximation of the fractional packing number with $O(\Phi_{\max}^2\cdot \eps^{-4} \cdot\log^3 n)$ worst-case rank-queries per update or $O(\Phi_{\max}\cdot \eps^{-4} \cdot\log^3 n)$ amortized rank-queries per update.
\end{restatable}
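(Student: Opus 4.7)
The plan is to approximate the fractional base packing LP by the multiplicative weights update framework and to maintain the resulting object dynamically using \Cref{lm:dyn_min_base} as the workhorse. The dual of the packing LP asks for a nonnegative weight function $w$ on $E$ of minimum total weight subject to every base having total weight at least $1$; the standard iterative algorithm starts with uniform weights and, for $T=\Theta(\Phi_{\max}\eps^{-2}\log n)$ iterations, picks a minimum weight base $B_i$ and multiplies $w(e)$ by $(1+\eta)$ for every $e\in B_i$, with $\eta=\Theta(\eps)$. Standard multiplicative weights arguments then show that the resulting \emph{base collection} $B_1,\dots,B_T$ satisfies $T/\max_{e\in E}c(e)\in\bigl[(1-\eps)\Phi,(1+\eps)\Phi\bigr]$, where $c(e)=|\{i:e\in B_i\}|$. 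This is the structural theorem the paper advertises as the bridge between base collections and base packing; I would prove it by mirroring Young-style multiplicative weights analyses previously used for tree packing.

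For the dynamic algorithm, I would keep all $T$ bases alive simultaneously, with $B_i$ stored in its own copy of the dynamic minimum weight base data structure of \Cref{lm:dyn_min_base}, operating under the weight function $w^{(i)}(e)=(1+\eta)^{c_{i-1}(e)}$ that depends on the earlier bases. Since exponents are bounded by $T$ and $\eta=\Theta(\eps)$, the weights range over $W=n^{O(1/\eps)}$, so each invocation of \Cref{lm:dyn_min_base} costs $O(\log(nW))=O(\eps^{-1}\log n)$ rank-queries. A ground-set update is propagated sequentially through $B_1,\dots,B_T$: first the insertion or deletion is applied to $B_1$, then the load counters $c_1$ are updated, the induced weight changes are fed into $B_2$, and so on down the chain.

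The central obstacle is bounding the cascade: a single ground-set update may trigger a swap in some $B_i$, which changes an entry of $c_i$, which in turn induces weight changes for $B_{i+1}$, potentially triggering more swaps. The key amortized bound, which I would prove by a potential argument tracking the evolution of the total weight $\sum_e w^{(i)}(e)$, is that the total number of base-level reconfigurations per ground-set update is $O(T\cdot \eps^{-1}\log n)$; combined with the $O(\eps^{-1}\log n)$ cost per reconfiguration this gives $O(\Phi_{\max}\eps^{-4}\log^3 n)$ amortized rank-queries per update. The approximation value $T/\max_e c(e)$ is maintained in $O(\log n)$ extra time per reconfiguration using a max-heap keyed on $c(e)$, and rebuilds triggered by a constant-factor change in $\Phi_{\max}$ are absorbed by the standard doubling trick.

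To upgrade this to the worst-case bound, I would use the standard de-amortization by two parallel copies: alongside the live structure that answers queries, a fresh copy is rebuilt in the background at a pace spread over the next $\Theta(\Phi_{\max})$ updates, and the two roles are swapped on completion. Each update then performs $\Theta(\Phi_{\max})$ units of background work in addition to its own cost, multiplying the per-update bound by $\Phi_{\max}$ and yielding the claimed $O(\Phi_{\max}^2\eps^{-4}\log^3 n)$ worst-case rank-query complexity.
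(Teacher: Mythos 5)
Your structural component is essentially the paper's: with exponential weights $(1+\eta)^{c(e)}$ the chosen base in each round coincides (under consistent tie-breaking) with a minimum weight base with respect to the loads themselves, since matroid greedy depends only on the ordering of the elements; so your MWU collection is exactly a greedy base collection of size $\Theta(\Phi_{\max}\log n/\eps^2)$, and the approximation claim is what \Cref{lem:greedy_collnew} provides. (Two caveats: ``mirroring the tree-packing analysis'' hides the genuinely matroid-specific steps, namely that a minimum weight base of $\M$ restricts/contracts to minimum weight bases of $\M|A$ and $\M\cdot A$; and your weight range is $W=n^{\Theta(\Phi_{\max}/\eps)}$, not $n^{O(1/\eps)}$, so a binary search over actual weights costs $O(\Phi_{\max}\eps^{-1}\log n)$ queries per base update --- you should search over the integer load exponent, i.e., effectively use the loads as weights, as the paper does.)

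The genuine gaps are in the dynamic accounting. For the amortized bound, your claim that a potential argument on $\sum_e w^{(i)}(e)$ bounds the cascade by $O(T\eps^{-1}\log n)$ reconfigurations per update is unsubstantiated, and it cannot hold as stated without further mechanism: the low-recourse argument rests on each element lying in only $O(\log n/\eps^2)$ of the first $\Theta(\Phi\log n/\eps^2)$ bases, which requires the collection size to track the \emph{current} $\Phi$, and it fails outright in the cases the paper treats separately --- an insertion that increases the rank forces $\Phi'=1$ and puts the new element into \emph{every} base (symmetrically for such deletions), and an adversary can toggle such an element, so your ``doubling trick with full rebuilds'' blows the budget (a rebuild of $T$ greedy bases costs $\Omega(nT)$ queries and cannot be charged to a single update). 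The paper instead buckets the bases by scale $2^i$, defers updates to currently irrelevant buckets in per-bucket queues, replays them lazily (insertions before deletions, in weight order) when a bucket becomes relevant, and shows the $\Phi$-jump cases have zero recourse because the element enters or leaves all bases. For the worst-case bound, your two-copy de-amortization does not work: background reconstruction costs $\Omega(n)$ queries and spreading it over $\Theta(\Phi_{\max})$ updates far exceeds the budget, and in any case the live copy can still incur its full cascade on a single update, so running a second copy does not cap per-update cost. The worst-case bound is in fact the easy direction: the cascade from one ground-set update touches at most $|\B|$ bases and causes at most one change per base of the collection per touched base, i.e.\ $O(|\B|^2)$ base-level updates of $O(\log n)$ rank-queries each (\Cref{lm:dyn_greedy_base_col}), which already gives $O(\Phi_{\max}^2\eps^{-4}\log^3 n)$ directly.
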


Since we give the first dynamic algorithm, our result can only be compared to using deterministic static algorithms to recompute the fractional packing number from scratch after every update. Using the state of the art by Chekuri and Quanrud~\cite{ChekuriQ17}, we get $\tilde{O}(n\Phi_{\max} / \epsilon^2)$ queries per update.\footnote{Chekuri and Quanrud~\cite{ChekuriQ17} only require the weaker independence-queries. However, even using a rank-oracle there is no known algorithm with $o(n)$ queries. The same holds for the results by \cite{ChekuriQ17,Quanrud24} stated below.}  Hence, our algorithm provides an exponential improvement in terms of $n$.

\begin{restatable}{theorem}{DynCovDet}\label{thm:dyn_cov_det}
    Let $\eps\in (0,1/2)$ be a parameter and let $\M$ be a dynamic matroid that at any moment contains at most $n$~elements. If $\beta$ is upper bounded by $\beta_{\max}$, we can deterministically maintain a $(1\pm \eps)$-approximation of the fractional covering number with $O(\beta_{\max}^2\cdot \eps^{-4} \cdot\log^3 n)$ worst-case rank-queries per update. 
\end{restatable}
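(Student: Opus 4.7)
The plan is to reduce the dynamic approximation of the fractional covering number $\beta$ to maintaining a \emph{base collection}: a multiset $\B=(B_1,\dots,B_T)$ of bases of $\M$. Define the load $c_\B(e):=|\{i:e\in B_i\}|$. If $\min_e c_\B(e)\ge (1-\eps)T/\beta$, then the scaled collection $\{x_{B_i}=1/\min_e c_\B(e)\}$ is a feasible fractional base cover of value $\tilde\beta := T/\min_e c_\B(e)$, so $\tilde\beta\ge \beta$; the load bound then yields $\tilde\beta\le \beta/(1-\eps)$, making $\tilde\beta$ a $(1\pm O(\eps))$-approximation of $\beta$ (the constant is absorbed by rescaling $\eps$). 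The estimator can be read off from an auxiliary min-heap on the loads.

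The first ingredient is a structural theorem asserting that such a collection exists with $T=O(\beta_{\max}\cdot \log n\cdot \eps^{-2})$. I would prove it by the standard multiplicative weights update (MWU) scheme for the fractional base covering LP: initialize $y^{(0)}_e=1$; in iteration $i$ let $B_i$ minimize $\sum_{e\in B_i} y^{(i-1)}_e$ and set $y^{(i)}_e = y^{(i-1)}_e\cdot \exp(-\eta\,\mathbf{1}[e\in B_i])$ with $\eta=\Theta(\eps)$. Standard MWU analysis then gives the load bound. After a per-iteration rescaling, the weights are positive integers of magnitude $W = \poly(n)$, so calls to the minimum weight base subroutine are cheap.

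For the dynamic algorithm, I would maintain each $B_i$ as a minimum weight base under the weights $y^{(i-1)}$ using \Cref{lm:dyn_min_base}, at a cost of $O(\log(Wn))$ rank-queries per weight change. A single insertion or deletion in $\M$ first updates $B_1$; any resulting swap in $B_i$ perturbs a few coordinates of $y^{(i)}$, which may trigger a swap in $B_{i+1}$, and so on through the chain. Combining the bound on the number of propagated changes with the per-change cost, and with $T$ chosen as above, yields the claimed worst-case complexity of $O(\beta_{\max}^2\log^3 n\cdot \eps^{-4})$ rank-queries per update. Determinism is inherited from \Cref{lm:dyn_min_base}, hence the bound holds against an adaptive adversary.

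The hard part will be controlling the cascade through the $T$ dependent MWU layers. A priori, a single update to $\M$ could cause multiple swaps in $B_1$, each producing several weight changes in $y^{(1)}$, each triggering several swaps in $B_2$, with a blow-up exponential in $T$. To prevent this, I would argue via the matroid exchange axiom together with canonical tie-breaking in \Cref{lm:dyn_min_base} that a single weight change causes at most one swap in the corresponding minimum weight base, so the cascade behaves as a chain of length $T$ with bounded branching, rather than an exponentially branching tree. A careful scheduling of the propagated work across updates --- rather than any randomization --- will then be needed to turn the natural amortized bound into a worst-case one, matching the statement of \Cref{thm:dyn_cov_det}.
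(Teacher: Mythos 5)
Your high-level plan (maintain a collection of $\Theta(\beta_{\max}\log n/\eps^2)$ bases built by a load-sensitive greedy rule, maintain each base with the dynamic minimum weight base routine of \Cref{lm:dyn_min_base}, and read the estimate off the minimum load via a heap) is essentially the paper's approach, and your LP-feasibility argument for $\tilde\beta\ge\beta$ is a reasonable substitute for the paper's structural theorem $1/\min_e\ell^*(e)=\beta$. However, two concrete problems remain. First, your MWU oracle points the wrong way: with $y^{(i)}_e=\exp(-\eta L^{(i)}(e))$ the weights are \emph{small} on well-covered elements, so the base minimizing $\sum_{e\in B} y^{(i-1)}_e$ prefers elements that are already heavily covered and avoids the under-covered ones, which drives the minimum load down rather than up. You need the base maximizing $\sum_{e\in B} y_e$, or equivalently (since all bases have cardinality $\rk(E)$ and any strictly increasing reweighting preserves minimum weight bases) the minimum weight base with respect to the integer loads themselves. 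That is exactly the paper's greedy base collection (analyzed in \Cref{lem:greedy_collnew} with $\gamma=\beta$), and it also removes your need to round exponential weights, since the weights are then integers bounded by $|\B|$.

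Second, and more seriously, the cascade bound is not established. "A single weight change causes at most one swap" gives bounded branching, but bounded branching is not enough: one swap in $B_i$ changes the loads of \emph{two} elements seen by $B_{i+1}$, so your argument still permits on the order of $2^{T}$ total changes, which is precisely the blow-up you set out to exclude, and deferring to "careful scheduling" to turn an amortized bound into a worst-case one is not an argument (no scheduling is needed in the paper). The missing idea is the cancellation observation behind \Cref{lm:dyn_greedy_base_col}: if an element $f$ is swapped into an earlier base (increasing its load) and this later causes $f$ to be evicted from a subsequent base, then $f$'s load is back to its pre-update value, so downstream bases see no net change from $f$. Bookkeeping this shows that each membership of the updated element in a base triggers at most one change per base of the collection, hence $O(|\B|)$ changes per membership and $O(|\B|^2)$ minimum-weight-base updates in total, each costing $O(\log(|\B|n))$ rank-queries by \Cref{lm:dyn_min_base}; with $|\B|=\Theta(\beta_{\max}\log n/\eps^2)$ this gives the stated worst-case bound $O(\beta_{\max}^2\eps^{-4}\log^3 n)$. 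Without this (or an equivalent) recourse argument, your complexity claim does not follow.
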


Again, there are no preexisting dynamic algorithms to compare our result to.
Even static deterministic base covering is not as well studied. Quanrud~\cite{Quanrud24} conjectures that the techniques from Chekuri and Quanrud~\cite{ChekuriQ17} extend to deterministic approximate base covering in $\tilde O(n \beta_{\max}/\eps^2)$ queries, which would transfer to $\tilde O(n \beta_{\max}/\eps^2)$ queries per update. 

Using randomness, we obtain an algorithm independent of the covering number against an oblivious adversary. 
\begin{restatable}{theorem}{DynCov}\label{thm:dyn_cov}
    Let $\eps\in (0,1/2)$ be a parameter and let $\M$ be a dynamic matroid that at any moment contains at most $n$~elements.  There exists a fully dynamic algorithm that maintains a $(1\pm \eps)$-approximation of the fractional covering number with $O(\log^6 n/\eps^8)$ worst-case rank-queries per update. The algorithm is correct with high probability against an oblivious adversary.  
\end{restatable}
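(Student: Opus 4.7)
The plan is to amplify the deterministic algorithm of \Cref{thm:dyn_cov_det} via geometric subsampling so that the effective covering number at each level is only $O(\log n / \eps^2)$, removing the dependence on $\beta$. Concretely, I would maintain $L := \lceil \log_2 n \rceil + 1$ independently sampled subsets $S_0, \dots, S_L \subseteq E$, where each element is included in $S_i$ independently with probability $p_i := 2^{-i}$. Since the adversary is oblivious, the samples are independent of the update sequence, and each insertion or deletion touches $O(L)$ subsamples. On each restricted matroid $\M_i := \M \restriction S_i$, I would run the algorithm of \Cref{thm:dyn_cov_det} with approximation parameter $\eps/3$ and threshold $\beta_{\max} := C\log n / \eps^2$ for a sufficiently large constant $C$, discarding the output at levels where the maintained estimate exceeds the threshold.

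The main ingredient is a sampling lemma for the fractional covering number: there exists a constant $C$ such that if $S \subseteq E$ is obtained by including each element independently with probability $p$ and $p\,\beta_\M \geq C\log n/\eps^2$, then $\beta_{\M \restriction S} = (1 \pm \eps/3)\, p\, \beta_\M$ with high probability. The lower bound is easy: if $A^*$ attains $\beta_\M = |A^*|/\rk(A^*)$, a Chernoff bound on $|A^* \cap S|$ forces density at least $(1 - \eps/3)\,p\,\beta_\M$ in $S$, since rank can only drop. The upper bound is the main obstacle because \emph{a priori} any of the exponentially many subsets of $S$ could be a witness. I would tackle this by grouping candidate witnesses by rank and by matroid closure, showing that within each rank class one needs to take a union bound over only a polynomial family before applying Chernoff, in the spirit of the matroid-sparsification arguments of Karger and of Chekuri and Quanrud~\cite{ChekuriQ17}.

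Given this lemma, the algorithm outputs $\widehat\beta := \beta_{\M_{i^*}} / p_{i^*}$, where $i^*$ is the largest level whose maintained estimate is at least $C\log n/(2\eps^2)$. The choice of constant guarantees $p_{i^*}\,\beta_\M \in [C\log n/(4\eps^2),\, 2C\log n/\eps^2]$, so the subroutine at level $i^*$ stays within its $\beta_{\max}$ budget and returns a $(1\pm\eps/3)$-approximation of $p_{i^*}\,\beta_\M$; rescaling by $1/p_{i^*}$ and absorbing the two $(1\pm\eps/3)$ factors yields the desired $(1\pm\eps)$-approximation of $\beta_\M$. The worst-case per-update cost is $L$ times the per-level subroutine cost, namely $O(\log n) \cdot O\!\left((\log n/\eps^2)^2 \cdot \eps^{-4} \cdot \log^3 n\right) = O(\log^6 n / \eps^8)$ rank-queries. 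A union bound over the polynomially many updates and the $L$ levels preserves correctness with high probability against the oblivious adversary.
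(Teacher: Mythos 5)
Your proposal is correct and follows essentially the same route as the paper: uniform subsampling at $O(\log n)$ geometric levels, running the bounded-$\beta$ machinery of \Cref{thm:dyn_cov_det} (i.e., a greedy base collection of size $\Theta(\log^2 n/\eps^4)$ via \Cref{lm:dyn_greedy_base_col}) on each sampled matroid, a sampling lemma whose lower tail is a direct Chernoff bound on the optimal witness and whose upper tail restricts witnesses to closed sets (spans) grouped by rank before union bounding, and a level-selection rule based on the maintained estimates, yielding the same $O(\log^6 n/\eps^8)$ worst-case query bound against an oblivious adversary. The only imprecision is calling the per-rank family ``polynomial'': there are up to $n^r$ spans of rank $r$, and the union bound closes only because the Chernoff exponent scales linearly with $r$ (failure probability $n^{-\Omega(c\,r)}$), which is exactly how the paper's proof handles it.
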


Once again, we compare to recomputing from scratch after every update using a static algorithm and obtain an exponential improvement. The state of the art for computing the static fractional covering number is by Quanrud~\cite{Quanrud24} and would result in $\tilde O(n + \rk(E) / \epsilon^3)$ queries per update. The same technique cannot be applied in the packing case to remove the dependence on $\Phi_\text{max}$, as we discuss in \Cref{sc:dynCov}.

\paragraph{Packing and Covering in the Graphic Matroid.}
Dynamic tree-packing has been implicitly studied by Thorup~\cite{Thorup07}. This paper uses dynamic tree-packing due to its relation to min-cut. However, it implies a $(1\pm\eps)$-approximation of the fractional tree packing number\footnote{We realize that the naming conventions here overlap. Rather than disregarding the conventions completely, we write `tree-packing' for the collection of trees and `tree packing' when talking about the tree packing number.} $\Phi\le \Phi_{\max}$ in $\tilde O(\Phi_{\max}^2 \log^6|E|/\eps^{-4})$ amortized update time. 

De Vos and Christiansen~\cite{deVosC24} give fully dynamic algorithms for $(1\pm\eps)$-approximate arboricity with $O(\poly(\log |E|, \eps^{-1}))$ update time against an adaptive adversary. 
Banerjee, Raman, and Saurabh~\cite{BanerjeeRS20} maintain the exact arboricity with $\tilde O(|E|)$ update time. 
As such, we believe that an algorithm with $\poly \log(|E|)$ queries per update for \emph{exact} packing/covering for general matroids requires a breakthrough in techniques. We focus on obtaining $(1\pm \eps)$-approximate results with $\poly \log(|E|)$ queries per update.

\paragraph{Other Applications.}
The graphic matroid is one of the most studied matroids, where the above references show that dynamic base packing and covering has been studied directly. In other matroids, this is not the case. However, base packing and covering (and hence their dynamic versions) have other interesting applications. Here, we mention two: the linear matroid and the partition matroid, see e.g.~\cite{oxley92} for the definitions and connections. 

In the linear matroid, base packing corresponds to the ability to decompose a matrix into many full-rank disjoint column sets. Base covering corresponds to finding multiple sets that form a basis for the spanned vector space, which has applications in (network) coding, see, e.g.,~\cite{dougherty2007networks,casazza2013introduction}.
In the partition matroid and generalizations thereof, base packing and covering correspond to multi-way assignment and various types of scheduling, see, e.g., \cite{burkard1990constrained,kawase2021optimal}. 

Moreover, base packing and covering are highly connected to Shannon switching games~\cite{shannon1948mathematical,lehman1964solution}.

\subsection{Technical Overview}
Our main technical tool are \emph{base collections}, which we will introduce first. Then, we will sketch how we use them to obtain our packing and covering results, \Cref{thm:dyn_pack_det,thm:dyn_cov_det,thm:dyn_cov}.

\subsubsection{Base Collections}\label{sec:overview_base_collactions}
An important tool for dynamic arboricity is tree-packing. This concept has first appeared in the seminal works by Nash-Williams~\cite{Nash61} and Tutte~\cite{Tutte61}. It has also been well studied in its relation to the minimum cut of the graph~\cite{gabow1995matroid, karger2000minimum, ThorupK00,Thorup07,Thorup08,ChekuriQ019, DBLP:conf/stoc/DagaHNS19,DoryEMN21,deVosC24}. 
We generalize the concept of tree-packings to matroids, and call it a \emph{base collection}, to avoid confusion with respect to packing disjoint bases.

A \emph{base collection} $\B$ is a family of bases $B$ for the matroid $\M$, allowing multiple occurrences of the same base. The \emph{load} of an element $e\in E$ is defined as $L^{\B}(e):=|\{B\in \B: e\in B\}|$. The \emph{relative load} is defined as  $\ell^{\B}(e)=L^{\B}(e)/|\B|$. Whenever the base collection is clear from context, we omit the superscript. 

Next, we want to define some `ideal' relative load. Hereto, we first prove a corollary using Edmond's theorem on base packing (\Cref{thm:matroids_edmonds}). 
We show that the lowest possible maximum load in base collections relate to the base packing number. 
\begin{restatable}{corollary}{matroidNW} \label{cor_phi} We have that $\max_{\mathcal B}\frac{1}{\max\limits_{e\in E} \ell^\B(e)} = \Phi_{\mathcal{M}}$.
\end{restatable}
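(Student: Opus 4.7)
The plan is to establish the equality via two matching inequalities, interpreting $\max_\B 1/\max_e \ell^\B(e)$ as a fractional matroid packing number and then invoking Edmonds' theorem (\Cref{thm:matroids_edmonds}) to link it back to $\Phi_\M$.

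For the upper bound $\max_\B 1/\max_e \ell^\B(e) \leq \Phi_\M$, I would fix an arbitrary base collection $\B$ and an arbitrary set $A\subseteq E$ with $\rk(\overline A) < \rk(E)$. The key observation is that every base $B\in \B$ satisfies $|B\cap A| \geq \rk(E) - \rk(\overline A)$, since $B\cap \overline A$ is an independent set of size at most $\rk(\overline A)$ while $|B| = \rk(E)$. Summing this inequality over $B\in\B$ and swapping the order of summation gives $\sum_{e\in A} L^\B(e) \geq |\B|\cdot(\rk(E) - \rk(\overline A))$, and averaging over $A$ produces an element $e\in A$ with $\ell^\B(e) \geq (\rk(E)-\rk(\overline A))/|A|$. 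Taking the minimum over admissible $A$ yields $\max_e \ell^\B(e) \geq 1/\Phi_\M$, which is exactly the desired inequality.

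For the matching lower bound, I would construct a witness base collection using the $k$-fold parallel extension $\M^{(k)}$ of $\M$, whose ground set is $E\times [k]$ and whose rank function is $\rk^{(k)}(S) := \rk(\pi(S))$ for the projection $\pi\colon E\times [k] \to E$. A direct calculation based on the defining expression shows that the minimizers for $\Phi_{\M^{(k)}}$ can be taken to contain either all $k$ copies or none of each element, giving $\Phi_{\M^{(k)}} = k\cdot \Phi_\M$. Applying Edmonds' theorem to $\M^{(k)}$ then yields a packing of $\lfloor k\Phi_\M\rfloor$ pairwise disjoint bases $B_1,\dots,B_t$ in $\M^{(k)}$. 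Since each $B_i$ has size $\rk(E)$ with $\rk(\pi(B_i)) = \rk(E)$, the projection $\pi(B_i)$ is a base of $\M$, and disjointness in $E\times[k]$ translates to each $e\in E$ appearing in at most $k$ of the projected bases. Thus $\B := \{\pi(B_1),\dots,\pi(B_t)\}$ is a base collection with $|\B| = \lfloor k\Phi_\M\rfloor$ and $L^\B(e) \leq k$.

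The main obstacle is then to make $1/\max_e \ell^\B(e) \geq \Phi_\M$ hold with equality rather than up to a floor. Since $\Phi_\M$ is defined as a minimum of rational numbers with integer numerators and positive integer denominators, it is itself rational, so I can choose $k$ to be a positive integer making $k\Phi_\M$ an integer, eliminating the floor. The resulting $\B$ then satisfies $\max_e \ell^\B(e) \leq k/(k\Phi_\M) = 1/\Phi_\M$, combining with the upper bound to give the claimed identity and simultaneously showing that the supremum is attained.
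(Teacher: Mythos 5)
Your proof is correct and follows essentially the same route as the paper: the same averaging argument (every base must contain at least $\rk(E)-\rk(\overline A)$ elements of $A$) for the inequality $\le$, and Edmonds' theorem applied after multiplying elements into parallel copies for the inequality $\ge$. Your explicit $k$-fold extension $\M^{(k)}$ with $\Phi_{\M^{(k)}}=k\Phi_\M$ and the projection of the disjoint bases back to $\M$ is just a more carefully spelled-out version of the paper's terse ``copy each element $\rk(E)!$ times, w.l.o.g.\ $\Phi$ is an integer'' step.
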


However, for purpose of analysis, we would like a \emph{specific} packing with this property. More precisely, we would like the \emph{loads} corresponding to that packing. This we call the \emph{ideal relative loads}, a generalization of the `ideal tree-packing' for the graphic matroid~\cite{Thorup07}. 

\paragraph{Ideal Relative Loads.}
We define the concept of ideal loads, which are hard to compute, but capture the structural properties of the graph well. 
First, we define the \emph{restricted matroid} $\M|A$ as $\M|A := (A, \mathcal{I}\;|\;A)$, where $\mathcal{I}|A := \{ X \in \mathcal{I} \; | \; X \subseteq A\}$.
Then, we assign ideal relative loads $\ell^*(e)$ for all $e \in E$:\begin{itemize}
\item Let $A_0 \subset E$ be a set such that $\frac{|A_0| }{\rk(E)-\rk(\overline{A_0})} = \Phi_\mathcal{M}$.
\item For all $e \in A_0$, set $\ell^*(e) = 1/\Phi_\mathcal{M}$.
\item Recurse on the matroid $\mathcal{M}|\overline{A_0}$.
\end{itemize}
We show that this is well-defined, meaning that the resulting ideal relative load values are independent of the concrete choices of the sets $A_0 \subseteq E$ if there are multiple such sets that give $\frac{|A_0| }{\rk(E)-\rk(\overline{A_0})} = \Phi_\mathcal{M}$, in \Cref{sc:structural}. 

\paragraph{Greedy Base Collections.}
 In this paper, we consider base collections $\B = \{B_1, \ldots, B_k\}$ built greedily as follows: the $i$-th base $B_i$ is a minimum weight base where the weights for each element are given by the relative loads induced by the base collection up until this base $\{B_1, \ldots, B_{i-1}\}$.

In the graphic matroid, it has been shown~\cite{Thorup07} that a greedy tree-packing $\T$ of size $\Theta (\Phi \log |V|/ \eps^2)$ approximates the ideal packing well, resulting in every element having a relative load that differs from the ideal relative load by a small additive error as follows
\begin{equation}
    |\ell^\T(e) - \ell^*(e)| \leq \eps/\Phi \label{eq:thorup_approx}
\end{equation}
for all $e \in E$.
We show a stronger statement for general matroids, for which we introduce the parameter $\gamma$ to provide a better trade-off between the error and the number of greedy bases. 
The parameter $\gamma$  satisfies $\Phi \le \gamma \le \beta$, which implies that $\Phi \le \beta$ -- which we have not shown yet. When considering the integer equivalents, the intuition is that $\lfloor\Phi\rfloor$ is the number of disjoint bases that fit in the matroid, and $\lceil\beta\rceil $ is the number of bases needed to cover the matroid, hence clearly $\lfloor\Phi\rfloor\le\lceil\beta\rceil $. The fact that also $\phi \le \beta$ is somewhat more technical, but easy to see using the structural results of \Cref{sec:overview_structural}.
\begin{restatable}{lemma}{GreedyColNew}\label{lem:greedy_collnew}
    Let $\gamma \in [\Phi, \beta]$ and let $\B$ be a greedy base collection with $|\B|\geq  3 \gamma \log n /\eps^2$. Then 
    \begin{equation}\label{eq:greedy_coll_high}
        |\ell^{\B}(e) - \ell^*(e)| \leq \eps \ell^*(e)
    \end{equation}
    for all $e\in E$ with $\ell^*(e) \geq 1/\gamma$ and 
    \begin{equation}\label{eq:greedy_coll_low}
        |\ell^{\B}(e) - \ell^*(e)| \leq \eps/\gamma
    \end{equation}
    for all $e \in E$ with $\ell^*(e) \leq 1/\gamma$.
\end{restatable}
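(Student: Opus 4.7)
The plan is to generalize Thorup's analysis of greedy tree-packings for the graphic matroid to general matroids, using the recursive principal-partition structure that defines $\ell^*$ as the organizing backbone of the proof. First I would fix the recursion notation: let $A_0,A_1,\ldots,A_s$ denote the sets produced in the definition of $\ell^*$, so $A_i$ achieves the minimum in the definition of $\Phi$ for the restricted matroid $\M_i := \M|\overline{A_0 \cup \ldots \cup A_{i-1}}$, and the common value of $\ell^*(e)$ for $e \in A_i$ is $\ell^*_i := 1/\Phi_{\M_i}$. Structural results from Section~\ref{sc:structural} give that these values are strictly increasing along the recursion, so $\ell^*(e) \geq 1/\gamma$ exactly for elements belonging to the "high" layers.

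Next I would establish the easy half (loads not much smaller than $\ell^*$) by a simple counting argument. Any base $B$ of $\M$ satisfies $|B \cap (A_i \cup \ldots \cup A_s)| \geq \sum_{j \geq i} |A_j|\,\ell^*_j$, because the portion of $B$ in $\overline{A_i \cup \ldots \cup A_s}$ is independent there and so is bounded in size by the corresponding rank. Summing over $B \in \B$ and normalizing by $|\B|$ yields, for every $i$, the inequality $\sum_{j \geq i} \sum_{e \in A_j} \ell^{\B}(e) \geq \sum_{j \geq i} |A_j|\,\ell^*_j$, which is the tight lower bound on cumulative loads along the peeling. Thus any element whose load falls substantially below $\ell^*$ must be compensated by some other element whose load rises correspondingly.

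The main work is the matching upper bound, and here I would use a potential-function / multiplicative-weights style argument on the greedy process. At each step, the greedy base has minimum total weight under the current load vector, hence weight at most that of a base drawn from an optimal fractional packing; evaluating this upper bound against the current $\ell^{\B}$ and summing along the iterations gives the standard regret-style contraction toward $\ell^*$. The threshold $|\B| \geq 3\gamma \log n / \eps^2$ is exactly what is needed for a Chernoff-type concentration at scale $\eps^2/\gamma$ together with a union bound over the $n$ elements, and it splits naturally into the two cases: the regret gets absorbed as a multiplicative $\eps \ell^*(e)$ error on elements with $\ell^*(e) \geq 1/\gamma$, and as an additive $\eps/\gamma$ error on elements with $\ell^*(e) \leq 1/\gamma$.

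The main obstacle I expect is the upper-bound step at deeper layers of the recursion, where $\ell^*(e)$ is significantly larger than $1/\Phi$. In Thorup's graphic setting this is handled by decomposing the graph into subgraphs of known tree-packing number; the matroid analogue requires verifying that the greedy process, when restricted to $\overline{A_0 \cup \ldots \cup A_{i-1}}$, still behaves as a min-weight-base rule on $\M_i$ up to a controllable error coming from the interaction between elements in $A_0 \cup \ldots \cup A_{i-1}$ and elements deeper in the recursion. This structural lift from graphs to matroids is the essentially new ingredient relative to Thorup's argument, and I expect it to absorb the bulk of the technical effort.
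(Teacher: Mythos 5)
Your overall strategy (compare the greedy bases against an ideal distribution of bases along the principal-partition recursion, and run a Chernoff/multiplicative-weights style potential argument with threshold $3\gamma\log n/\eps^2$) is indeed the skeleton of the paper's proof, which derandomizes via Young's pessimistic estimators against the distribution $\Pi_\M$ of \Cref{lem_prob}. But there are two genuine gaps. First, your ``easy half'' is wrong as stated: for an arbitrary base $B$ the counting argument only gives $|B\cap(A_i\cup\dots\cup A_s)|\ge \rk(E)-\rk(\overline{A_i\cup\dots\cup A_s})$, which can be strictly smaller than $\sum_{j\ge i}|A_j|\ell^*_j=\rk(A_i\cup\dots\cup A_s)$. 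Concretely, for the graphic matroid of $K_4$ with a path of three edges attached between two of its vertices, the layers are $A_0=$ path (ideal load $2/3$) and $A_1=E(K_4)$ (ideal load $1/2$), and the spanning tree consisting of the path plus two star edges of $K_4$ meets $A_1$ in only $2<3=|A_1|\ell^*_1$ elements. (The valid cumulative inequality holds for prefixes of the recursion, i.e.\ the high-load layers, not suffixes.) Moreover, even a correct cumulative bound plus ``compensation'' cannot give the per-element lower bounds $\ell^\B(e)\ge(1-\eps)\ell^*(e)$ and $\ell^\B(e)\ge\ell^*(e)-\eps/\gamma$; the paper proves the lower direction with a separate $(1-\eps)$-estimator, and this is where the contraction matroid $\M\cdot A$ and \Cref{claim_contracted_base} are needed.

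Second, in your main upper-bound step, minimality of the \emph{total} weight of the greedy base does not control the potential, because the pessimistic estimator is a sum of exponentiated loads over a sublevel set $A=\{e:\ell^*(e)\le d\}$ only. One must show that the greedy base's contribution \emph{restricted to $A$} is at most that of any base drawn from $\Pi_\M$ restricted to $A$. The paper does this exactly (not ``up to a controllable error'') via two structural facts: any minimum-weight base $B$ of $\M$ satisfies $B\cap A\subseteq B_A$ for some minimum-weight base $B_A$ of $\M|A$ (\Cref{claim_restricted_base}), and dually $B\cap A$ contains a minimum-weight base of the contraction $\M\cdot A$ (\Cref{claim_contracted_base}), combined with the fact that ideal bases intersect $A$ in bases of $\M|A$ (resp.\ $\M\cdot A$) because ideal loads are monotone along the recursion. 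You correctly flag this as the new matroid-specific ingredient, but you leave it unproven, and you also do not construct the distribution with marginals exactly $\ell^*$ (\Cref{lem_prob}), which itself requires an argument that every base in an optimal collection meets $A_0$ in exactly $\rk(E)-\rk(\overline{A_0})$ elements. As it stands, the proposal identifies the right framework but omits precisely the lemmas that make the per-element bounds go through.
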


The improvement over \cite{Thorup07} in the special case of the graphic matroid can be seen as follows: 
\Cref{eq:thorup_approx} can be reformulated, by picking $\epsilon$ accordingly, to obtain $ |\ell^{\B}(e) - \ell^*(e)| \leq \eps/\gamma $. However, it needs $|\B|=\Theta\left( \tfrac{\gamma^2}{\Phi}\log n/\eps^2\right)$ bases. 
That is, it has a \emph{quadratic} dependence on $\gamma$, where we obtain a \emph{linear} dependence. 

In other words, Thorup~\cite{Thorup07} showed that the values are an \emph{additive approximation}, since the error is independent of $\ell^*(e)$. We show a \emph{multiplicative approximation}, since the error depends linearly on $\ell^*(e)$. For our application, the latter leads to stronger results. To be precise, \Cref{lem:greedy_collnew} shows that a collection of at least $ 3 \beta \log n /\eps^2$ bases gives a $(1 \pm \eps)$-approximation of the covering number (using \Cref{thm:struct_covering}). A generalization of Thorup's version to matroids would give $|\B|=\Theta\left( \tfrac{\beta^2}{\Phi}\log n/\eps^2\right)$. 

The proof of \Cref{lem:greedy_collnew} utilizes a technique by Young \cite{young95}. We consider a distribution of bases, such that picking the bases for a base collection randomly from this distribution would result in the relative loads being ideal in expectation. Then we analyze the number of times the relative load
of an element does not approximate its ideal load well. We do this by replacing the randomly picked bases one after the other using a greedy approach to compute the new base. During this process, we consider pessimistic estimators for the number of violations at every step and show that they cannot be increased when a greedy base is added. This is also where the main difference to the graph case lies, as we need to consider the matroids restricted to elements or contracted to elements with certain ideal loads. The proof using the contraction is complicated by the fact that there is no notion of vertices in a general matroid. However, it still holds that any minimum weight base of $\M$ contains a minimum weight base for the matroid contracted to a subset.

To the best of our knowledge, the community was not aware of this property, given in \Cref{lem:greedy_collnew}, for the graphic matroid either. In particular, it means that we simplify the result of de Vos and Christiansen~\cite{deVosC24} for arboricity. They introduce an intricate procedure to artificially maintain $\Phi\approx \beta$ by adding virtual edges.  

\paragraph{Dynamic Data Structure for Base Collections.}
Similar to the case of the graphic matroid, we show that we can maintain a greedy base collection efficiently under dynamic updates. The main building block for this is maintaining a dynamic minimum weight base, \Cref{lm:dyn_min_base}. We maintain this by exploiting their greedy properties, combined with a binary search, to design a simple, efficient algorithm. For details, see \Cref{sc:greedy}. 

We then build a greedy base collection, by maintaining a minimum weight base for every base in the collection. We show how to handle the updates themselves and the recourse from changes in the weights due to the update. 
This gives the following result. 

\begin{restatable}{lemma}{GreedyBaseCol}\label{lm:dyn_greedy_base_col}
    There exists a deterministic algorithm that, given a matroid $\M$, maintains a greedy base collection $\B$ of size $|\B|$ with $O(|\B|^2\log(|\B|n))$ worst-case rank-queries per update.
\end{restatable}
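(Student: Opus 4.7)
The plan is to run an instance of the dynamic minimum weight base algorithm from \Cref{lm:dyn_min_base} for each base in the collection. For $i\in\{1,\ldots,|\B|\}$, the $i$-th instance maintains $B_i$ as a minimum weight base under the integer weights $w_i(e):=1+|\{j<i : e\in B_j\}|\in\{1,\ldots,|\B|\}$. By \Cref{lm:dyn_min_base}, every elementary update on this instance---an element insertion or deletion---costs $O(\log(|\B|n))$ rank queries, and a single change of $w_i(e)$ can be simulated by deleting $e$ from the $i$-th instance and reinserting it at its new weight. Maintaining this invariant ensures that $\B$ is greedy at all times.

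Given an update to $\M$, I propagate it through the bases in order $i=1,\ldots,|\B|$. At level $i$ I (a)~forward the insertion or deletion to the $i$-th instance; (b)~apply all pending weight updates caused by the symmetric differences $\Delta_j:=B_j^{\mathrm{new}}\oplus B_j^{\mathrm{old}}$ for $j<i$, aggregating multiple flips of the same element across earlier levels into a single net weight change per affected element; and (c)~record $\Delta_i$ and push its elements onto the pending-change queues of levels $i+1,\ldots,|\B|$. After the sweep, every $B_i$ is again a minimum weight base for the refreshed $w_i$, so the greedy invariant is restored.

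The main obstacle is to bound the cascade and obtain the $O(|\B|^2\log(|\B|n))$ worst-case query bound. Since a single elementary update on a min-weight-base instance changes the base by at most one exchange, we have $|\Delta_i|\leq 2 u_i$, where $u_i$ is the number of elementary updates performed at level $i$; aggregation by element gives $u_i=O(|\Delta_1\cup\cdots\cup\Delta_{i-1}|+1)$. A naive recurrence on these inequalities would blow up exponentially in $i$, so the analysis instead uses a charging argument: each element-base membership flip produced during the sweep is visited at most once per subsequent level, and the total number of flips triggered by one matroid update across the entire collection is bounded by $O(|\B|)$ using the one-exchange property of the min-weight-base updates. This yields $\sum_i u_i=O(|\B|^2)$, and multiplying by the per-call cost $O(\log(|\B|n))$ completes the bound.
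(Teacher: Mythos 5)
Your algorithmic setup is the same as the paper's: one instance of \Cref{lm:dyn_min_base} per base, weights $w_i(e)=1+|\{j<i:e\in B_j\}|$, weight changes simulated by delete/reinsert, and a level-by-level sweep that propagates the symmetric differences. The gap is exactly in the step you yourself flag as the main obstacle. You correctly discard the naive recurrence, but replace it with the bare assertion that ``the total number of flips triggered by one matroid update across the entire collection is bounded by $O(|\B|)$ using the one-exchange property of the min-weight-base updates.'' The one-exchange property per elementary update is precisely what produces the exponential recurrence; it does not by itself bound the total number of membership flips, so this claim is unsupported. It is also stronger than what the paper establishes: the update element $e$ can join or leave up to $|\B|$ bases, each such direct change starts its own cascade, and the paper only proves that \emph{each cascade} causes at most one change per subsequent base, i.e.\ $O(|\B|)$ flips per cascade and $O(|\B|^2)$ elementary operations in total. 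Note also that if the true flip count is $\Theta(|\B|^2)$, your own charging rule (``each flip is visited once per subsequent level'') would give $O(|\B|^3)$ elementary updates, so your final bound really does stand or fall with the unproven $O(|\B|)$ claim.

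The missing idea is a cancellation argument. If an element $f$ had its weight increased because it entered an earlier base of the cascade, and this increase later causes $f$ to be replaced by some $g$ in a subsequent base, then from that point on $f$'s load equals its pre-update value, so $f$ stops propagating and the single perturbation is handed on to $g$; the symmetric statement holds for weight decreases. Hence at any moment each cascade carries at most one element whose weight differs from its value before the update, so it triggers $O(1)$ elementary updates per level. With at most $|\B|$ cascades, at most $|\B|$ levels each, and $O(\log(Wn))$ rank-queries per elementary update with $W\le|\B|$, this gives the claimed $O(|\B|^2\log(|\B|n))$ bound. Without this (or an equivalent) argument, nothing in your proposal rules out, say, a deletion of an element contained in all $|\B|$ bases launching $|\B|$ cascades each of which flips an element in many later bases, and your charging scheme has no foundation.
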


\subsubsection{Structural Results}\label{sec:overview_structural}
Next, we consider how (ideal) base collections are related to packing and covering.

Note that, by the definition of $\ell^*$, the maximum load in base collections is related to the base packing number.
\begin{remark}\label{remark}
    We have that $\frac{1}{\max\limits_{e\in E}\ell^*(e)} = \Phi_\M$.
\end{remark}

We also show that the minimum load in base collections is related to the base covering number. This generalizes the result of de Vos and Christiansen~\cite{deVosC24}, who show the analogous result in the graphic matroid, where the covering number is called the arboricity. This insight is the base of the recent breakthrough by Cen et al.~\cite{CenFLLP25} that gives the first improvement on the running time of computing the arboricity in thirty years.

\begin{restatable}{theorem}{matroidPacking}\label{thm:struct_covering}
We have that
    $ \frac{1}{\min\limits_{e\in E}\ell^*(e)} = \max\limits_{\substack{A\subseteq E\ \rm{s.t.} \\ A \neq \emptyset}} \frac{|A|}{\rk(A)}$.
\end{restatable}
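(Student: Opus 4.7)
\medskip
\noindent\textbf{Proof plan.} Write $\M_0 := \M$ and, for $i \geq 0$, let $A_i \subseteq E(\M_i)$ be the minimizer chosen at step~$i$ of the recursion defining~$\ell^*$, and $\M_{i+1} := \M_i | \overline{A_i}$. By construction $\ell^*(e) = 1/\Phi_{\M_i}$ for $e \in A_i$, and the recursion terminates at some level~$k$ with $A_k = E(\M_k)$ and $\overline{A_k} = \emptyset$.

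The first step is to show $\Phi_{\M_0} \leq \Phi_{\M_1} \leq \dots \leq \Phi_{\M_k}$ by an uncrossing argument. For any valid candidate $B$ in $\M_1$ (i.e., $B \subseteq \overline{A_0}$ with $\rk(\overline{A_0}\setminus B) < \rk(\overline{A_0})$), the union $A_0 \cup B$ is a valid candidate in $\M_0$, because $\overline{A_0 \cup B} = \overline{A_0} \setminus B$. The minimality of $A_0$ then yields $|A_0|/(\rk(E) - \rk(\overline{A_0})) \leq (|A_0| + |B|)/(\rk(E) - \rk(\overline{A_0}\setminus B))$, and rearranging gives
\[
\frac{|B|}{\rk(\overline{A_0}) - \rk(\overline{A_0}\setminus B)} \;\geq\; \frac{|A_0|}{\rk(E) - \rk(\overline{A_0})} \;=\; \Phi_{\M_0}.
\]
Minimizing over~$B$ gives $\Phi_{\M_1} \geq \Phi_{\M_0}$; iterating the argument in the restricted matroids yields the full chain. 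Consequently, $1/\Phi_{\M_i}$ is non-increasing, so $\min_{e \in E} \ell^*(e) = 1/\Phi_{\M_k}$, which equals $\rk(E(\M_k))/|E(\M_k)|$ since $\overline{A_k} = \emptyset$. As $E(\M_k)$ is a valid candidate in the maximization defining $\beta_\M$, this already shows $\beta_\M \geq 1/\min_e \ell^*(e)$.

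For the reverse inequality I invoke \Cref{lem:greedy_collnew}: for any $\delta > 0$, a sufficiently large greedy base collection $\B$ satisfies $\|\ell^{\B} - \ell^*\|_\infty \leq \delta$. For any nonempty $A \subseteq E$, the averaging identity
\[
\sum_{e \in A} \ell^{\B}(e) \;=\; \frac{1}{|\B|} \sum_{B \in \B} |B \cap A| \;\leq\; \rk(A)
\]
holds because $B \cap A$ is independent for every base $B \in \B$. Sending $\delta \to 0$ gives $\sum_{e \in A} \ell^*(e) \leq \rk(A)$, so $\min_{e \in A} \ell^*(e) \leq \rk(A)/|A|$, and $\min_{e \in E} \ell^*(e) \leq \min_{e \in A} \ell^*(e)$; maximizing over $A$ yields $\beta_\M \leq 1/\min_e \ell^*(e)$, closing the argument.

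The main obstacle is carrying out the uncrossing step cleanly: one must verify that $A_0 \cup B$ is really a valid candidate in $\M_0$ (this follows from $\rk(\overline{A_0}\setminus B) < \rk(\overline{A_0}) \leq \rk(E)$ whenever $B$ is valid in $\M_1$) and track the resulting submodular inequalities. Closing via \Cref{lem:greedy_collnew} is the natural self-contained route; an alternative would be to invoke Edmonds' matroid base polytope theorem to place $\ell^*$ directly in $P(\M)$ and skip the approximation step.
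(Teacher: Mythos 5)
Your argument is correct, and it splits into a half that mirrors the paper and a half that genuinely departs from it. For the direction $\beta_\M \ge 1/\min_e \ell^*(e)$ you do essentially what the paper does: you re-derive the monotonicity of the $\Phi$-values along the recursion (your uncrossing inequality is just the contrapositive form of \Cref{lem_phi_nondec}) and read off the value at the final recursion level, where $\overline{A_k}=\emptyset$ forces $\Phi_{\M_k}=|A_k|/\rk(A_k)$. For the direction $\beta_\M \le 1/\min_e \ell^*(e)$ you diverge: the paper proves it structurally, showing for every $Y$ that $|A_i\cap Y|\le \Phi_{\M_i}\rk(A_i\cap Y)$ at each level and that $\sum_i \rk(A_i\cap Y)\le \rk(Y)$, both via submodularity of the rank function; you instead observe that any base collection satisfies the averaging bound $\sum_{e\in A}\ell^{\B}(e)=\tfrac{1}{|\B|}\sum_{B\in\B}|B\cap A|\le \rk(A)$ and transfer this to $\ell^*$ by approximation. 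This is legitimate and not circular, since the paper's proof of \Cref{lem:greedy_collnew} (and of \Cref{lem_prob}) nowhere uses the present theorem; two small points deserve care, though. First, \Cref{lem:greedy_collnew} does not literally give $\|\ell^\B-\ell^*\|_\infty\le\delta$ for arbitrary $\delta$; you should instantiate it with $\gamma=\Phi$ (valid since $\Phi\le\beta$, e.g.\ via $A=E$) so that by \Cref{remark} every element falls under the $\eps/\gamma$ additive bound, and then take $\eps\le\delta\Phi$ before letting $\delta\to 0$ — legitimate because $\ell^*$ is a fixed vector. Second, a lighter dependency achieving the same is \Cref{lem_prob} directly: $\ell^*(e)=\Pr_{R\in\Pi_\M}(e\in R)$ gives $\sum_{e\in A}\ell^*(e)=\EX_R|R\cap A|\le\rk(A)$ with no limit argument and without invoking the heaviest lemma in the paper. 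The trade-off between your route and the paper's: yours makes the ``$\le$'' direction a two-line averaging argument once the approximation machinery is granted, while the paper's submodularity proof is self-contained within the structural section and additionally yields the per-level inequalities (which are what make the structural picture of the ideal loads transparent); both are sound.
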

The proof follows the same lines as~\cite{deVosC24}, but has some more subtleties. To see this, we recall that the covering number is defined as 
\begin{equation*}
    \beta_\M := \max_{\substack{A\subseteq E\ \rm{s.t.} \\  A \neq \emptyset}}\frac{|A| }{\rk(A)}.
\end{equation*}

For the special case of graphic matroids, the arboricity of a graph $G=(V,E)$ is defined as $
 \max_{\substack{S \subseteq V \ \rm{s.t.} \\|S| >1 }}\frac{|E(S)| }{|S| - 1}$,
where $E(S)$ is the set of edges of $G[S]$.
The fact that the denominator of the equation changes from the size of a set to the rank of a set impacts the proof. The reason is that $|S \cupdot T|=|S| + |T|$ but we do not always have $\rk(A\cupdot B) \neq \rk(A) + \rk(B)$. 
We show that the \emph{submodularity} of the rank function (see, e.g, \cite{schrijver2003combinatorial}) suffices, i.e.,
\begin{equation*}
    \rk(A\cup B) + \rk(A \cap B) \leq \rk(A) + \rk(B).
\end{equation*}

\subsubsection{Dynamic Packing and Covering}
Using the results thus far, the worst-case result for dynamic base packing, \Cref{thm:dyn_pack_det}, follows quite immediately: using \Cref{remark}, we know that the packing number can be expressed in terms of the ideal relative loads. We can approximate this by a greedy base collection, see \Cref{lem:greedy_collnew} and for more details refer to the proof of \Cref{thm:dyn_pack_det} in \Cref{sc:dynPack}. And finally, we know how to maintain this under dynamic updates using \Cref{lm:dyn_greedy_base_col}.

The amortized result in \Cref{thm:dyn_pack_det} uses the fact that an element is not contained in every base of the collection, and hence a better recourse argument is possible. This follows the same lines of argument as the case of the graphic matroid~\cite{deVosC24} but has some nuances. We want to highlight a special case: an insertion can \emph{decrease} $\Phi$ (similarly, a deletion can \emph{increase} $\Phi$). Note that in graphs, this does not appear: such updates change the graph from disconnected to connected and the packing number of a disconnected graph is 0.
 In a matroid, the addition of this element needs to increase the overall rank, the packing number is not 0 and $e$ is part of \emph{any} base in $\M$ after the update. This breaks some of the argumentation for graphic matroids. In \Cref{sc:dynPack} we show how to handle such technicalities efficiently. 

 The deterministic result for dynamic base covering, \Cref{thm:dyn_cov_det}, is obtained in a similar manner. The number of worst-case rank-queries depends on the value known to upper bound the covering number at any point in time. To remove this dependency we make use of a sampling technique.

\paragraph{Sampling.} 
We use \emph{uniform sampling}, where every element is sampled with equal probability. This a standard technique in designing (graph) algorithms (see e.g.~\cite{McGregorTVV15}). The approach is similar to the sampling in multi-graphs when maintaining the arboricity~\cite{deVosC24}.
Although the sampling itself is the same as for multi-graphs -- it is uniform sampling over the elements -- the analysis is more involved for matroids.

Suppose we know (an approximation of) $\beta$. 
The idea is to sample with $p= \frac{\log n}{\eps^2 \beta}$, such that the covering number in the resulting sampled matroid will be $\Theta(\log n/\eps^2)$ with high probability. A constant approximation of $\beta$ suffices, which we can get from the approximation before the update. In total, we maintain $\log n$ copies of the algorithm: one for each possible estimate of $\beta$. 
The $\log n/\eps^2$ term here stems from the use of Chernoff bounds. 

Given this uniform sampling probability $p$, we consider the expressions $\frac{|S|}{\rk(S)}$ for each set $S$. 
Consider a set $S$ that maximizes this -- the case of sets with smaller values is omitted in this overview for brevity, see \Cref{thm:dyn_cov} for details. We show that their value will remain between $(1 - \eps) p \beta$ and $(1 + \eps) p \beta$ with high probability, where $\beta$ is the covering number in the original matroid. 
Obviously, the size of the set,~$|S|$, behaves accordingly. 
The hard part is showing that the rank of the set, $\rk(S)$, behaves well under sampling, i.e., does not change by more than a factor $1\pm \eps$ with our choice of $p$.
For graphic matroids, the expression simplifies by $\rk(S)=\#\text{vertices in }S$, which is clearly not affected by sampling edges. Hence these complications only occur for general matroids.

To investigate this, we first fix a rank $r$ and consider all sets $S$ of $\rk(S)=r$. When inspecting $\frac{|S|}{\rk(S)}$, we remark for our applications that we are only interested in the \emph{maximum} among such sets, which restricts the number of possible sets to $n^r$. This allows us to use a union bound to obtain our result. For more details, see \Cref{sc:dynPack} 

For the case of base packing a similar approach cannot be used. 
Again, the goal is to show how $\frac{|S|}{\rk(E)-\rk(\overline S)}$ behaves under uniform sampling. As before $|S|$ can be bounded by a Chernoff bound to be $(1\pm\eps)p|S|$. However, with the techniques known to us, it seems hard to show that $\rk(E)-\rk(\overline S)$ changes by at most a factor $(1\pm \eps)$ \emph{with high probability}. It remains an open question to obtain a dynamic base packing algorithm that updates independent of the packing number. 

% We would need to consider sets that would reduce the rank by a certain amount when removed from the matroid. The relevant sets $S$ can then have varying different ranks as well as different structures depending on the sampled elements. In particular, we do not have a property similar to $S$ having $S = \Span_\M(S)$ that would allow us to bound the number of relevant sets per rank nicely.

For graphic matroids, \cite{deVosC24} show a trick against \emph{adaptive adversaries}, where each vertex has ownership over some of the edges. Whenever one of its edges is affected by an update, it resamples all its edges. It turns out that this is strong enough to obtain results against adaptive adversaries. For general matroids, it is unclear how to bucket the elements in a way such that resampling one bucket every update protects against an adaptive adversary. It remains an open question how to design efficient algorithms in this case.

\subsubsection*{Independent Work}
The concurrent work of Arkhipov and Kolmogorov~\cite{arkhipov2025greedy} also obtains two of our results: \Cref{lem:greedy_collnew} for the special case that $\gamma =\beta$ and \Cref{thm:struct_covering}. They use these results to show that they can maintain an approximation to the density of a graph by packing pseudoforests. 

\subsection{Organization}
In \Cref{sc:greedy}, we show how to maintain a minimum weight base dynamically. 
In \Cref{sc:dynPC}, we show that dynamic base collections can be used to obtain the dynamic packing and covering results. 
Hereto, we show the structural results on base packing, base covering, and their connection to base collections in \Cref{sc:structural}. And finally, we show that greedy base collections approximate the ideal base collections and how to maintain them in \Cref{sc:greedyIdeal}. 

 \newpage
\section{Maintaining a Dynamic Minimum Weight Base}\label{sc:greedy}
In this section, we give an algorithm to maintain a minimum weight base.

\paragraph{Minimum Weight Base.}
Given a universe $E$, and weight function $w\colon E \to [1, \dots, W]$, a \emph{minimum weight base} is a base $B$ of minimum total weight $w(B)=\sum_{e\in B}w(e)$. Note that if the weights are unique, the minimum weight base is unique. For simplicity, we would like the weights to be unique without increasing the maximum weight. The minimum weight base of a matroid can be computed using a simple greedy approach as shown in~\cite{gale68, Edmonds71}. Here, the exact weights of the elements are never taken into account. It suffices to consider the order on the elements implied by their weights. Thus we can essentially re-weight the elements at each update, assigning each element a unique weight in $[1, \dots, n]$ respecting, firstly, the weights given by $w$, and secondly, in case of equal weight, the lexicographic order of the unique ids.\footnote{Note that after each update, many weights can change. However, this does not require additional queries, so does not affect our query complexity.}
For a \emph{dynamic minimum weight base}, the goal is to maintain a minimum weight base under element insertions and deletions. 
More generally, we can also allow for weight increases or decreases, but this can also be modeled by deleting the element and inserting it again with the new weight.

\paragraph{Preliminaries.}
The \emph{span} of a set $A \subseteq E$ is denoted by $\Span(A)$ and is defined as the set $\{ e \in E | \rk(A) = \rk(A + e)\}$. We say that the set $A$ spans element $e \in E$ or that $A$ spans a set $B \subseteq E$ if $e \in \Span(A)$ or $B \subseteq \Span(A)$, respectively. The set $A$ always spans itself as well as all other elements that can be added to $A$ without increasing the size of the maximal independent subset.

A \emph{circuit} $C\subseteq E$ is an inclusion-wise minimal dependent set of elements, i.e., $C \notin \I$ and for every $x\in C$, $C\setminus\{x\}\in \I$. The unique circuit in $S + e$ for $S \in \I$ and $ S + e \not \in \I$ is denoted by $C( S + e)$. We have the following lemma concerning circuits. 
\begin{lemma}[\cite{schrijver2003combinatorial}]\label{lm:circuits}
    Let $C$ and $C'$ be circuits. If $x\in C\cap C' $ and $y\in C\setminus C'$, then there exists a circuit in $(C\cup C') \setminus \{x\}$ containing $y$. 
\end{lemma}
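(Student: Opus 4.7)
The plan is to recast the conclusion as a statement about the span function. A set $D$ contains a circuit through $y \in D$ if and only if $y \in \Span(D \setminus \{y\})$: given a basis $B$ of $D \setminus \{y\}$ with $y \in \Span(B)$, the fundamental circuit $C(B + y)$ lies in $D$ and passes through $y$; the converse is immediate, since if $C'' \subseteq D$ is a circuit containing $y$ then $C'' \setminus \{y\}$ spans $y$ by minimality of $C''$. Setting $D := (C \cup C') \setminus \{x\}$, the problem therefore reduces to proving
\begin{equation*}
    y \in \Span\bigl((C \cup C') \setminus \{x, y\}\bigr).
\end{equation*}

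I would then carry out this reduction in two short span-chasing steps. First, because $C'$ is a circuit, $C' \setminus \{x\}$ is independent and satisfies $\Span(C' \setminus \{x\}) = \Span(C') \ni x$. The hypothesis $y \in C \setminus C'$ gives $y \notin C'$, which is precisely what I need to conclude $C' \setminus \{x\} \subseteq (C \cup C') \setminus \{x, y\}$, and hence $x \in \Span((C \cup C') \setminus \{x, y\})$ by monotonicity of span. Consequently,
\begin{equation*}
    \Span\bigl((C \cup C') \setminus \{x, y\}\bigr) = \Span\bigl((C \cup C') \setminus \{y\}\bigr).
\end{equation*}
Second, because $C$ is a circuit, $y \in \Span(C \setminus \{y\}) \subseteq \Span((C \cup C') \setminus \{y\})$. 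Chaining the two yields $y \in \Span((C \cup C') \setminus \{x, y\})$, which, by the reformulation above, produces a circuit in $D = (C \cup C') \setminus \{x\}$ passing through $y$.

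I expect the main subtlety to be noticing the ``absorption'' step: the reason we can delete $x$ from the set without losing $y$ from the span is that $C' \setminus \{x\}$ already lies inside $(C \cup C') \setminus \{x, y\}$, which holds only because $y \notin C'$. This asymmetric role of $x$ (shared by both circuits) and $y$ (unique to $C$) is really what the argument exploits; any attempt to run the proof with the roles swapped would fail at this point. The remaining ingredients are the elementary facts that a circuit minus any one of its elements is independent and has the same span as the full circuit, and that span is monotone, so no rank or submodularity machinery is needed.
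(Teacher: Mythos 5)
Your proof is correct. Note that the paper itself does not prove this lemma at all: it is quoted from Schrijver's book as a known fact (the strong circuit elimination property), so there is no in-paper argument to compare against. Your span-based derivation is a valid, self-contained proof and is arguably cleaner than the usual textbook route, which typically proceeds by induction on $|C\cup C'|$ from the weak elimination axiom. The reduction ``$D$ contains a circuit through $y$ iff $y\in\Span(D\setminus\{y\})$'' is sound in both directions (maximal independent subset plus fundamental circuit one way, minimality of a circuit the other), and the two span-chasing steps are exactly right: $x\in\Span(C'\setminus\{x\})\subseteq\Span((C\cup C')\setminus\{x,y\})$ uses $y\notin C'$, which lets you absorb $x$ and conclude $\Span((C\cup C')\setminus\{x,y\})=\Span((C\cup C')\setminus\{y\})$, and then $y\in\Span(C\setminus\{y\})$ finishes it. One small caveat on your closing remark: the absorption step $\Span(A\cup\{x\})=\Span(A)$ for $x\in\Span(A)$ is itself a (standard) closure property whose usual proof does invoke rank submodularity or the augmentation axiom, so the argument is not quite free of that machinery --- it is merely hidden inside a standard fact about the span operator. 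This does not affect correctness.
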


\paragraph{Algorithm.}
We base our algorithm on two fundamental properties of minimum spanning trees (see e.g., \cite{kleinbergT06}), which also hold for minimum weight bases in matroids. 
\begin{itemize}
    \item The \emph{cycle property}: For any cycle, the edge with the largest weight cannot be in the MST.
    \item The \emph{cut property}: For any cut, the edge with the smallest weight that crosses the cut is in the MST.
\end{itemize}
In the proof below, we implicitly use and prove these properties for matroids. Note that in this context, a `cut' is a minimal set of elements that reduces the rank.

\DynMinBase*
\begin{proof}
Without loss of generality, we assume that the weights are unique. We can do tie-breaking of elements with equal weight by lexicographic order. 

We introduce the following notation: the sets $B_{\le t} := \{f \in B: w(f) \le t\}$ and similarly ${E_{\le t} := \{f \in E: w(f) \le t\}}$. Let us first consider element insertion. If the insertion increases the rank of the matroid, we can simply add the new element to $B$ to get a minimum weight base for the new matroid. Otherwise, suppose $e$ is inserted, finding the circuit in $B + e$ can be costly. However, we just need to find the element $f$ of highest weight on this circuit (possibly $e$ itself if it was inserted with a high weight). Formally, we do this as follows. 
\begin{itemize}
    \item Find min $t$ s.t.\ $\rk(B_{\leq t}) = \rk(B_{\leq t} + e)$.
    \item Output the unique element $f$ with weight $t$. 
\end{itemize} 

Note that now $f$ is the element of highest weight on the circuit of $B + e$. 
Next, we show that we can use this method to maintain a minimum base correctly over a sequence of insertions that do not increase the rank of the matroid. We assume that before the insertion $B$ was a base with minimum weight. Note that all subsets of $E$ that were previously independent, are still independent after the addition of $e$ to the matroid. The algorithm outputs the element $f$. In case (a) $w(f) > w(e)$ the maintained base $B'$ is set to $B - f + e$, otherwise in case (b) $w(f) < w(e)$ $B' = B$ remains unchanged. Now assume towards a contradiction that after the insertion there is another base $B''$ such that $w(B'') < w(B')$. Note that $e \in B''$, otherwise $B$ would not have been a minimum weight base. Further, there is an element $e'$ in $C(B+e) - e$ with $B'' - e + e' \in \mathcal{I}$. To see this, let $E_c$ be the set of elements of $c - B''$ for a circuit $c$. If adding an element of $E_{C(B+e)}$ to $B''$ results in a circuit containing $e$, we are done. Otherwise, we know that $e \not \in C(B'' + e'')$ for any $e'' \in E_{C(B+e)}$. Considering such an element $e_1$ from $E_{C(B+e)}$, by \Cref{lm:circuits} we know that there is a circuit $c_1 \subseteq (C(B'' + e_1) + C(B + e)) - e_1$ with $e \in c_1$. If $|E_{c_1}| = 1$ we are done, otherwise (since $c_1 \subseteq B'' + E_{C(B+e)} - e_1$) we can continue by removing the next element $e_2 \in c_1 \setminus B''$ in the same way to find a circuit $c_2 \subseteq (C(B'' + e_2) + c_1) - e_2\subseteq B'' + E_{c_1} - e_2$ with $e \in c_2$. We can continue in the same way, further restricting the number of elements in the next circuit that are not from $B''$ until we get a circuit $c' \subseteq B'' + e_j$ with $e \in c'$. Hence, for the element $e_j \in C(B+e)-e$ we have $B'' - e + e_j$ is a base even before the insertion of $e$. Since $e_j \in C(B+e)$, we have $w(e_j) \leq w(f)$.
    Now, in case (a) where the maintained base was set to $B - f + e$ we get
    \begin{align*}
    w(B''-e+e_j) &= w(B'') - w(e) + w(e_j) < w(B - f + e) - w(e) + w(e_j)\\
    &= w(B) -w(f) + w(e_j) \leq  w(B).
    \end{align*}
    In case (b) where $B$ remained unchanged we have
    \begin{equation*}
    w(B''-e+e_j) = w(B'') - w(e) + w(e_j) < w(B),
    \end{equation*}
    where the last inequality is due to $w(e_j) \leq w(f) < w(e)$. In either case, this contradicts the assumption that $B$ was a base of minimum weight before the insertion.

    Next, we consider the deletion of an element $e$, which restricts the matroid to $E - e$. If the deletion decreases the rank of the matroid, the element cannot be replaced and $B \setminus \{e\}$ gives the new minimum weight base.
    Otherwise, we need to find an element $f$ of minimum weight such that $B \cup \{f\}$ contains a circuit in the matroid before the deletion. Similar to an insertion, the approach is as follows.
    \begin{itemize}
        \item Find min $t$ s.t.\ $\rk(B - e +E_{\leq t}) = \rk(B - e)+1$.
        \item Output the unique element $f$ with weight $t$. 
    \end{itemize} 
    
    Again, assume that before the insertion $B$ was a base with minimum weight and also that $e \in B$ otherwise $B$ stays a minimum weight base. Then $f$ is the element of smallest weight that is not already spanned by $B- e$ (if no such element exists, $e$ was part of all bases before the deletion and $B-e$ is still the minimum weight base). It remains to argue that $B' = B - e + f$ has the smallest total weight after the deletion. Assume there was another base $B''$ for the restricted matroid with $w(B'') < w(B-e+f)$. Then $B'' + e$ contains a cycle $C$ and there is an element $e' \in C$ that is not in $B$. Note that $\{b \in B | w(b) < w(e)\} = \{b \in B'' | w(b) < w(e)\}$ and therefore $w(e') > w(e)$. Now we consider the base for the original matroid $B'' + e - e'$. But then $B$ would not have been a minimum weight base before the deletion, since
    \begin{equation*}
    w(B''+e-e') = w(B'') + w(e) - w(e') < w(B') + w(e) - w(e') \leq w(B') + w(e) - w(f) = w(B)
    \end{equation*}
    where the last inequality follows from $w(f) \leq w(e')$, which remains to be argued. We do so by showing that $e'$ is also an element along with $f$ that is not spanned by $B-e$. Assume towards a contradiction that $B-e$ spans $e'$. Then there is a circuit $C = C(B-e + e')$ where $e'$ has the highest weight out of all elements on the circuit. Since the circuit $c$ also exists in the matroid after the deletion of $e$, $B''$ cannot be a minimum weight base after the deletion, as it contains $e'$. This is because there is an element of smaller weight on $C$ that can be exchanged for $e'$ in $B''$. Similar to the argument in the case of an insertion, we consider the circuit $C(B'' + d)$ for some $d \in C$. If $e'$ is on that circuit we have found the element to exchange $e'$ with. Otherwise there is a circuit $C'$ in $(C + C(B''+d)) - d$ that contains $e'$. If $C'$ contains only one element not from $B''$, we are done. Otherwise we can find a new circuit containing $e'$ with less elements from $C - B''$ as described above until there is only one element left. Hence, $e'$ is not spanned by $B-e$ and $B'$ is the new minimum weight base.

    For the number of queries, note that we can find the minimum with a binary search over the space of all adjusted weights $[1,\dots, n]$. 
\end{proof}
We use \Cref{lm:dyn_min_base} to maintain a greedy base collection. The details are in \Cref{sc:greedyIdeal}.

\newpage
\section{Dynamic Packing and Covering}\label{sc:dynPC} 

In this section, we combine the structural results from \Cref{sc:structural} with the results on (dynamic) base collections from \Cref{sc:greedy,sc:greedyIdeal} to obtain efficient algorithms for base packing and base covering. 
Some of the theorems follow directly by maintaining a greedy base collection of a large enough size. For other results, we use additional techniques specific to the dynamic challenges at hand. 

\subsection{Dynamic Matroid Packing}\label{sc:dynPack}
% \subsubsection{Deterministic Algorithm} 
Next, we combine our results thusfar to obtain our dynamic base packing results. 
By \Cref{remark}, \Cref{lem:greedy_collnew}, and \Cref{lm:dyn_greedy_base_col}, we obtain the worst-case result almost directly. To obtain better bounds with amortization, some more work is required. 

\DynPackDet*
\begin{proof}
    We maintain a base collection $\B$ of $\Theta\left( \frac{\Phi_{\max} \log n}{\epsilon^2}\right)$ greedy bases and maintain the maximum $\ell^{\B}(e)$ using a max-heap directly giving the approximation. 
    
    First, we argue that the maintained value is indeed a $(1\pm \eps)$-approximation of the packing number. We want to show 
    \begin{equation*}
    (1-\eps)\Phi \leq \frac{1}{\max_{e\in E} \ell^\B(e)} \leq (1+\eps)\Phi.
    \end{equation*}
    We show the left hand side, the right hand side is then analogous.
    
    From \Cref{lem:greedy_collnew} we know that if we maintain a base collection of size $\Omega(\Phi \log n/\eps'^2)$, $\ell^\B(e) \geq \ell^*(e) - \frac{\eps'}{\Phi}$ for all $e \in E$ for an $\eps' > 0$ to be defined later, this gives
    \begin{equation*}
    \max_{e \in E} \ell^\B(e) \geq \max_{e \in E} \ell^*(e) - \frac{\eps'}{\Phi}. 
    \end{equation*}
    So for $\eps' := \eps/(1 + \eps)$ we get 
    \begin{equation*}
        \frac{1}{\max_{e\in E} \ell^\B(e)} \leq \frac{1}{\max_{e \in E} \ell^*(e) - \frac{\eps}{(1 + \eps)\Phi}} = \frac{1}{\frac{1}{\Phi} - \frac{\eps}{(1 + \eps)\Phi}} = (1 + \eps) \Phi,
    \end{equation*}
    using that $\max_{e\in E}\ell^*(e)=\Phi^{-1}$, see also \Cref{remark}. Similarly, 
    \begin{align*}
        \frac{1}{\max_{e\in E} \ell^\B(e)} &\ge \frac{1}{\max_{e \in E} \ell^*(e) + \frac{\eps}{(1 + \eps)\Phi}} = \frac{1}{\frac{1}{\Phi} + \frac{\eps}{(1 + \eps)\Phi}} = \left( \frac{1+\eps}{1+\eps+\eps}\right) \Phi\\
        &= \left(1- \frac{\eps}{1+2\eps}\right) \Phi\ge (1-\eps)\Phi,
    \end{align*}

    Next, we consider the number of queries per update. As updating $\B$ can require $|\B|^2$ updates to a greedy base in $\B$ maintaining the base collection takes $O\left( \frac{\Phi_{\max}^2 \log^3(n)}{\epsilon^4}\right)$ queries for any update (worst-case), see \Cref{lm:dyn_greedy_base_col}.

    For an amortized update time, we can reduce this using a technique by \cite{deVosC24}. 
    We note that any element $e$ is in 
    \begin{equation*}
        L^\B(e)=\ell^\B(e)|\B|\le \max_{e'\in E}\ell^\B(e') |\B| \le \frac{|\B|}{(1+\eps)\Phi} \le \frac{|\B|}{\Phi}
    \end{equation*}
    bases. 
    We want to exploit this fact, together with the fact that we only need to consider the first $\Theta(\Phi \log n/\eps^2)$ bases of $\B$ when the base packing number is $\Phi$ by \Cref{lem:greedy_collnew}.
    To exploit this, we partition $\B$ into buckets, where the $i$th bucket contains $2^i$ bases, with $i=0, \dots, \Theta(\log |\B|)$. Intuitively, at any point in time when the packing number is $\Phi$ and $3\Phi\log n/\eps^2 \in (2^i, 2^{i+1}]$, we can process an update $e$ in the bases of the first $i+1$ buckets efficiently -- which is we all we need to output our estimate of $\Phi$. More formally, denote $\Phi$ for the packing number before the update and $\Phi'$ for the packing number after the update.
    
    First, we consider an insertion such that $\Phi' \ge \Phi$. We have $\Phi' \le \Phi+1\le 2\Phi$. Note that after insertion, $e$ will be contained in at most $2\cdot 2^{i+2}/\Phi'= \Theta( \log n/\eps^2)$ bases. When $e$ is part of some base $B_j$, this can lead to at most $|\B|$ recourse in all of $\B$ (see \Cref{lm:dyn_greedy_base_col} for the proof). So inserting $\B$ takes at most $|\B|+ \Theta(|\B|\log n/\eps^2)= \Theta(|\B|\log n/\eps^2)$, where the first factor $\B$ is due to determining which bases to insert $e$ in. 

    It remains to show how to make the changes to bases in the buckets $j>i$. That is where the amortization comes in: rather than performing these updates now. We initialize a priority queue of updates for each bucket. We claim that when a bucket $j$ becomes relevant, i.e., $3\Phi\log n/\eps^2 \in (2^j, 2^{j+1}]$, we can perform all updates from the queue in $\Theta(|\B|\log n/\eps^2)$ queries per update. Hereto, we first perform all insertions from the queue, and then all deletions. 

    Rather than performing one insertion at a time, we consider each base of the matroid, and perform all necessary insertions, where we prioritize the elements with a smaller weight -- this is easily done by maintaining the queue as a min-heap. This means that each inserted element will only be part of $O(\log n/\eps^2)$ bases, since this process is equivalent to computing the greedy bases from scratch. Each such insertion leads to a recourse of $O(|\B|)$. Hence in total, we use $O(|\B|\log n/\eps^2)$ queries per insertion. 

    The insertions now have ensured that the packing number in the matroid is actually high enough, so each element that is deleted, is part of $O(\log n/\eps^2)$ bases. 

    Note that the case that the update is a deletion $e$ such that $\Phi'\le \Phi$ is analogous. 

    We are left with two cases: the update is an insertion and $\Phi' < \Phi$, or the update is a deletion and $\Phi'> \Phi$. Note that for tree-packing in graphs, this does not appear: such updates change the graph from disconnected to connected and the packing number of a disconnected graph is 0. 

    Consider an insertion $e$ such that $\Phi' < \Phi$. The addition of this element needs to increase the overall rank of the matroid. In this case, $e$ is part of \emph{any} base in $\M$ after the insertion. To see this, consider the insertion of an element $e$ and let $E$ be the set of elements before the insertion with packing number $\Phi$. In the following, we do not need to consider sets $S \subseteq E + e$ that do not contain $e$ since in that case we have
        \begin{equation*}
            \frac{|S|}{\rk(E+e) - \rk((E+e) \setminus S)} =\frac{|S|}{\rk(E+e) - \rk((E\setminus S) + e)}  \geq \frac{|S|}{\rk(E) - \rk((E) \setminus S)}.
        \end{equation*}
        This is immediately obvious in the case where $\rk(E+e) = \rk(E)$. Otherwise $\rk(E+e) = \rk(E) + 1$ but in this case $\rk((E\setminus S) + e) = \rk(E\setminus S) + 1$ as well. Hence, these sets cannot be the reason the $\Phi$-value decreases. Now, we want to show that if it decreases, the rank of the matroid must increase. To this end,  assume that $\rk(E+e) = \rk(E)$ and consider any $S \subseteq E$ with $S \neq \emptyset$. Then we have
        \begin{equation*}
            \frac{|S + e|}{\rk(E+e) - \rk((E+e) \setminus (S + e))} =\frac{|S| + 1}{\rk(E) - \rk((E\setminus S))}  > \Phi.
        \end{equation*}
        It remains to consider the set $\{e\}$. By the definition, this cannot be the set giving the new $\Phi'$ as $\rk((E + e) \setminus \{e\}) = \rk(E) = \rk(E + e)$. Hence, the rank of the matroid has to increase when $\Phi$ decreases after the insertion of an element $e$. Further, we have that in this case we also $\Phi' = 1$ and therefore $e$ is in every base. This is because when $\rk(E + e) = \rk(E) + 1$ we have\begin{equation*}
            \frac{|\{e\}|}{\rk(E+e) - \rk(E)} = 1.
        \end{equation*} 
    For any other set $S$ containing $e$ we have that
     \begin{equation*}
            \frac{|S + e|}{\rk(E+e) - \rk((E+e) \setminus (S + e))} =\frac{|S| + 1}{\rk(E) - \rk((E\setminus S)) + 1}  \geq 1.
        \end{equation*}
        where the last inequality follows from $|S| \geq \rk(S)$ and the submodularity of the rank-function.   

        Let $E'$ be the new ground set $E' = E + e$ and let $\M'$ be the matroid after the insertion. Assume that $e$ is not in all bases after the insertion. Then there is a set $B$ that is a base of the matroid before and after the insertion of $e$. Hence, the overall rank does not change i.e. $\rk_\M(E) = \rk_{\M'}(E')$. Now, consider the set $A$ that gives $\Phi' = |A| / (\rk_{M'}(E') - \rk_{\M'}(E' \setminus A)$. The element $e \not \in A$ since otherwise we would have $\rk_{\M'}(E' \setminus A) = \rk_{\M'}(E \setminus (A - e))$ and therefore the set $A - e$ would give a smaller $\Phi'$ compared to $A$. Hence, $\rk_{\M'}(E' \setminus A) \geq \rk_{\M'}(E \setminus A) = \rk_{\M}(E \setminus A)$ and we further get
        \begin{equation*}
            \Phi > \frac{|A|}{\rk_{\M'}(E') - \rk_{\M'}(E' \setminus A)} \geq \frac{|A|}{\rk_{\M}(E) - \rk_{\M}(E \setminus A)}.
        \end{equation*}
        As this would give a smaller packing number than $\Phi$ for the matroid before the insertion, we get that $e$ must be in all bases of the new matroid after the insertion.
    This update is performed easily, since adding to all bases $B\in \B$ can be done in $|\B|$ updates to the corresponding minimum weight base -- where this update has no recourse. So it takes $O(|\B|\log n)$ rank-queries by \Cref{lm:dyn_min_base}. 

    Consider a deletion $e$ such that $\Phi' > \Phi$. As argued for the previous case, this means that $e$ was part of \emph{every} base in $\M$, so we can remove it easily from any $B\in \B$ n $|\B|$ updates to the corresponding minimum weight base -- where this update has no recourse. So it takes $O(|\B|\log n)$ rank-queries by \Cref{lm:dyn_min_base}. 
    
    We obtain a total of $O\left( \frac{\Phi_{\max} \log^3(n) }{\epsilon^4}\right)$ amortized rank-queries for any update. 
\end{proof}

\subsection{Dynamic Matroid Covering}\label{sc:dynCov}
Similar to the previous section, we now present our results for dynamic base covering.

\subsubsection{Deterministic}

First, we provide a deterministic result. 
We note that for base packing, we also have an algorithm with an amortized bound that depends linearly on $\Phi$. For base covering, we could obtain a result in a similar manner that would depend on $\beta_{\max}^2/\Phi$. However, in this case, we cannot relate $\Phi$ and $\beta_{\max}$: $\Phi$ can be constant, even if $\beta_{\max}$ is polynomial. 

\DynCovDet*
\begin{proof}
    Analogously to the packing case, we maintain a base collection $\B$ of $\Theta\left( \frac{\beta_{\max} \log n}{\epsilon^2}\right)$ greedy bases and maintain the minimum $\ell^{\B}(e)$ using a heap directly giving the approximation. 
    
    Again, we argue that the maintained value is a $(1\pm \eps)$-approximation of the covering number. From \Cref{lem:greedy_collnew} we get that
    \begin{equation*}
        (1-\epsilon) \frac{1}{\beta} \leq \min_{e\in E} \ell^B (e)\leq (1+\epsilon) \frac{1}{\beta}.
    \end{equation*}
    By using $\epsilon' := \frac{\epsilon}{1+\epsilon}$ to determine the size of the base collection we get, as for \Cref{thm:dyn_pack_det}: 
    \begin{equation*}
    (1-\eps)\beta \leq \frac{1}{\min_{e\in E} \ell^\B(e)} \leq (1+\eps)\beta.
    \end{equation*}  
    
    Maintaining the base collection takes $O\left( \frac{\beta_{\max}^2 \log^3(n)}{\epsilon^4}\right)$ queries for any update (worst-case), see \Cref{lm:dyn_greedy_base_col}. 
\end{proof}

\subsubsection{Oblivious adversary}
Next, we show how uniform sampling can help us to obtain update times independent of $\beta$ against oblivious adversaries. The approach uses a standard sampling technique (see, e.g., \cite{McGregorTVV15}) and is similar to the sampling to maintain arboricity in multigraphs, as shown in~\cite{deVosC24}.
\DynCov*
\begin{proof}
We create $O(\log n)$ matroids $\M_i$ by sampling each element with probability ${p_i = \frac{24c \log n}{2^i \eps^2}}$ (for some constant $c \ge 2$ to be set later and $i$ such that $p_i < 1$) to get the respective universe $E_i$. $\M_i$ is then given by the restriction of the original matroid to the sampled elements $\M_i=\M|E_i$. Now we want to show that if $\beta \in [2^{i-1}, 2^{i+2})$ then $\frac{1}{p_i}\beta_i$ is a $(1\pm \eps)$-approximation of $\beta$. We show this in two parts.

First, we argue that $\frac{1}{p_i}\beta_i \geq (1-\eps) \beta$. Consider a set $S \subseteq E$ with $\beta = \frac{|S|}{\rk_{\M}(S)}$. Let $S_i = S \cap E_i$, then we get
\begin{equation*}
\Pr\left( \frac{|S_i|}{p_i \rk_{\M_i}(S_i)} < (1-\eps) \beta\right) = \Pr\left( |S_i| < (1-\eps) p_i \beta \rk_{\M_i}(S_i) \right).
\end{equation*}
Since $\rk_{\M_i}(S_i) \leq \rk_\M(S) = \frac{|S|}{\beta}$ and $\mathbb{E}(|S_i|) = p_i |S|$ we get by a Chernoff bound
\begin{align*}
\Pr\left( |S_i| < (1-\eps) p_i \beta \rk_{\M_i}(S_i) \right) 
&\le \Pr\left( |S_i| < (1-\eps) p_i \beta \rk_{\M}(S) \right)
=\Pr\left( |S_i| < (1-\eps) p_i |S| \right)\\
&\leq e^{-\eps^2 p_i |S|/2} =  e^{-\eps^2 p_i \beta \rk_\M(S)/2} \leq e^{-\eps^2 p_i 2^{i-1} \rk_\M(S)/2}\\
&\leq n^{-c \rk_\M(S)}.
\end{align*}

Hence 
\begin{align*}
\Pr\left( \frac{|S_i|}{p_i \rk_{\M_i}(S_i)} < (1-\eps) \beta\right) \leq n^{-c \rk_\M(S)}.
\end{align*}

Now note that $\beta_i= \max_{\emptyset \neq S'\subseteq E_i} \frac{|S'|}{\rk_{\M_i}(S')} \geq \frac{|S_i|}{\rk_{\M_i}(S_i)}$, so w.h.p.\ $\frac{1}{p_i}\beta_i \geq (1-\eps) \beta$.

Second, we need to show that with high probability the covering number is bounded as follows $\frac{1}{p_i}\beta_i \leq (1+\eps) \beta$. 

Let $S_i \subseteq E_i$ be any non-empty subset. We will show that w.h.p.\ $\frac{|S_i|}{p_i\rk_{\M_i}(S_i)}\leq (1+\eps)\beta$.
%--
Hereto, we consider each possible rank $1 \leq r \leq \rk_{\M}(E_i) = \rk(\M |E_i)$ of a non-empty subset of $E_i$ separately. Let $S_i$ be the subset of $E_i$ with rank $r$ with maximum $|S_i|$. Note that for any other set $A \subseteq E_i$ with the same rank we have $\frac{|A|}{p_i\rk_{\M_i}(A)} \leq \frac{|S_i|}{p_i\rk_{\M_i}(S_i)}$.

Now we consider the set $S := \Span_{\M}(S_i)$. For this set we have (a) $\rk_{\M}(S) = \rk_{\M}(S_i) = \rk_{\M_i}(S_i)$ since $S_i \subseteq E_i$ and (b) $S \cap E_i = S_i$ since no element of $E_i$ can be added to $S_i$ without an increase in the rank. Therefore we get
\begin{align*}
\Pr\left( \frac{|S_i|}{p_i \rk_{\M_i}(S_i)} > (1+\eps) \beta\right) &= \Pr\left( \frac{|S \cap E_i|}{p_i \rk_{\M}(S)} > (1+\eps) \beta\right)\\ &= \Pr\left( |S \cap E_i| > (1+\eps) p_i \beta \rk_{\M}(S)\right).
\intertext{Further, since $\beta \rk_\M(S) \geq |S|$, using a Chernoff bound we get}
\Pr\left( |S \cap E_i| > (1+\eps) p_i \beta \rk_{\M}(S)\right) &\leq e^{-\eps^2 p_i \beta \rk_\M(S)/3} =  e^{-\eps^2 24 c \log (n) \beta \rk_\M(S)/3 \eps^2 2^i}\\
&\leq e^{-4c\log (n) \rk_\M(S)} \leq n^{-c (\rk_\M(S) + 2)}
\end{align*}
where the second to last inequality follows since $\beta \geq 2^{i-1}$. Note that for such a set $S$ we have that $S = \Span_{M}(S)$. Also, among the sets of maximum size with rank $r$ the set $S$ is unique. If there was another set $S_j \neq S_i$ with $\Span(S_j) = S$ , then $S_j \subseteq S$ and we would have an element $e \in S_j \setminus S_i$ and $|S_i + e| > |S_i|$ while $\rk_\M(S_i + e) = \rk_M(S_i)$. This contradicts the choice of $S_i$. Hence, for any rank $r$ it suffices to union bound over all sets $S$ with $S = \Span(S)$ and $\rk(S)=r$, there are at most $n^r$ such sets. So the statements holds for all such $S$ with probability $n^{-2c}$.

This allows us to then union bound over all possible ranks $r$ , there are at most $\rk_{\M}(E)$ many and $n$ updates, so the statement holds with probability $n^{-2c}\cdot n \cdot n = n^{2-2c}\leq n^{-c}$.

Hence, we get that $\beta_i \leq (1+ \eps) \beta p_i \leq 2^3 \frac{24 c \log n}{\eps^2} = O\left(\frac{\log n}{\eps^2}\right)$ when $\beta \leq 2^{i+2}$. The algorithm maintains a greedy base collection of size $\Theta\left(\frac{\log^2 n}{\eps^4}\right)$ for each of the $O(\log n)$ matroids $\M_i$ and one for $\M$ itself. Whenever $\beta \in [2^{i-1}, 2^{i+2})$ the collection corresponding to $\M_i$ gives the correct approximation. All other collections might give values that differ drastically from $\beta$ and are simply disregarded. At any point in time to know which $\M_i$ gives the correct value it suffices to consider the estimate from the previous update. This is because if the old estimate $\frac{\beta_i}{p_i}$ was in $[2^i, 2^{i+1})$, then after the update we have that $\beta < (1+\epsilon) \frac{\beta_i}{p_i} + 1 \leq 2^{i+2}$ and similarly $\beta \geq (1-\eps) \frac{\beta_i}{p_i} - 1 \geq 2^{i-1}$. If the current estimate is in $\Theta(\log n / \eps^2)$ the estimate on the original matroid is considered, otherwise the interval $[2^{i}, 2^{i+1})$ that contains the previous estimate determines the base collection $\M_i$ that is used for the estimate after the next update. Now it remains to analyze the runtime. For each matroid $\M_i$ we maintain a greedy base collection of size $O\left(\frac{\log^2 n}{\eps^4}\right)$. 

There are $O(\log n)$ matroids that could be affected by an update, so by \Cref{lm:dyn_greedy_base_col}, an update requires $O\left(\frac{\log^6(n)}{\eps^8}\right)$ rank-queries in the worst-case.
\end{proof}

\newpage

\section{Packing, Covering, and Base Collections}\label{sc:structural}
In this section, we provide some structural results regarding base packing, base covering, and their relationship to base collections. These results are the foundation for the algorithms of \Cref{sc:dynPC}. 

\subsection{Base Packing}
First, we examine how the largest possible relative load of any element in all possible base collections relates to the packing number. Recall that the following theorem gives the integral packing number of a matroid.
\begin{theorem}[\cite{Edmonds65}]\label{thm:matroids_edmonds}
    A matroid $\mathcal{M}$ can pack $k$ disjoint bases if and only if for every subset $A\subseteq E$ we have $
        |A|\ge k(\rk(E)-\rk(\overline A))$.
\end{theorem}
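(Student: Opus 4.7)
The plan is to prove the two directions separately. The forward direction is a short counting argument, while the reverse direction I would obtain from Edmonds' matroid union formula.

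For the forward direction ($\Rightarrow$), suppose $B_1,\dots,B_k$ are pairwise disjoint bases and fix $A\subseteq E$. Since $B_i\cap\overline{A}$ is an independent subset of $\overline{A}$, we have $|B_i\cap\overline{A}|\le \rk(\overline{A})$, so
\begin{equation*}
|B_i\cap A| \;=\; |B_i|-|B_i\cap\overline{A}| \;\ge\; \rk(E)-\rk(\overline{A}).
\end{equation*}
Summing over $i$ and using that the $B_i$ are disjoint gives $|A|\ge\sum_i|B_i\cap A|\ge k(\rk(E)-\rk(\overline{A}))$, as required.

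For the reverse direction ($\Leftarrow$), I would form the $k$-fold union matroid $\mathcal{M}^{(k)}=\mathcal{M}\vee\cdots\vee\mathcal{M}$, whose independent sets are precisely the disjoint unions $I_1\cupdot\cdots\cupdot I_k$ with each $I_j\in\I$. By Edmonds' matroid union theorem (cited in the introduction), its rank function satisfies
\begin{equation*}
\rk_{\mathcal{M}^{(k)}}(E) \;=\; \min_{B\subseteq E}\bigl(|B| + k\cdot\rk(\overline{B})\bigr).
\end{equation*}
The key observation is that $\mathcal{M}$ packs $k$ disjoint bases exactly when $\rk_{\mathcal{M}^{(k)}}(E)=k\cdot\rk(E)$, because a basis of $\mathcal{M}^{(k)}$ of that maximal size must decompose as $B_1\cupdot\cdots\cupdot B_k$ with $|B_j|=\rk(E)$ for each $j$, forcing every $B_j$ to be a base of $\mathcal{M}$. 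The hypothesis $|B|\ge k(\rk(E)-\rk(\overline B))$ rearranges precisely to $|B|+k\cdot\rk(\overline B)\ge k\cdot\rk(E)$, so the minimum in the formula is at least $k\cdot\rk(E)$, and the reverse inequality is trivial.

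The main obstacle is that this approach offloads the hard work onto the matroid union theorem. If one wanted a self-contained proof, the alternative would be a direct exchange/augmentation argument: starting from any $k$ pairwise disjoint independent sets and, while some $I_j$ is not a base, performing a sequence of exchanges (analogous to augmenting paths in matroid intersection) that strictly increases $\sum_j|I_j|$; the hypothesis is then invoked to rule out a ``blocking'' set witnessing that no such augmentation exists. Given that matroid union is already referenced in this paper, the clean route via the union formula is the one I would take.
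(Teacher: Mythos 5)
The paper does not prove this statement at all: it is quoted as Theorem~\ref{thm:matroids_edmonds} with a citation to Edmonds (1965) and used as a black box (its only role here is in the proof of \Cref{cor_phi}), so there is no in-paper argument to compare yours against. Judged on its own, your proof is correct. The forward direction is the standard counting argument and is airtight: $B_i\cap\overline A$ is independent inside $\overline A$, so $|B_i\cap A|\ge \rk(E)-\rk(\overline A)$, and disjointness lets you sum. The reverse direction via the $k$-fold matroid union is also sound: the rank formula $\rk_{\M^{(k)}}(E)=\min_{B\subseteq E}\bigl(|B|+k\rk(\overline B)\bigr)$ is exactly Edmonds--Nash-Williams matroid union applied to $k$ copies of $\M$, the hypothesis rearranges to say this minimum is at least $k\rk(E)$, taking $B=\emptyset$ gives the matching upper bound, and a maximum independent set of size $k\rk(E)$ in the union must split into $k$ disjoint independent sets each of size exactly $\rk(E)$, i.e., $k$ disjoint bases.

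The one caveat you already flag yourself: the matroid union rank formula is a result of essentially the same depth as the base packing theorem (historically they are near-equivalent, both due to Edmonds/Nash-Williams), so your argument is a reduction rather than a self-contained proof. That is perfectly acceptable here, since the paper itself treats the statement as imported background; a from-scratch proof would indeed go through the augmentation/blocking-set argument you sketch in your last paragraph, but nothing in the paper requires it.
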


To relate this to the graphic matroid, let $\P$ be a vertex partition of a graph $G=(V,E)$. Then the value $ \frac{|A| }{\rk(E)-\rk(\overline A)}$ corresponds to $ \frac{|E(G/\P)| }{|\P|-1}$, called the \emph{partition value} of $\P$.

Next, we use this theorem to show a duality between this value and base collection. 

\matroidNW*
\begin{proof}
    We prove the two inequalities "$\le$" and "$\ge$", starting with the former, which follows directly without using \Cref{thm:matroids_edmonds}. 

    "$\le$". Let $A^*\subseteq E$ be such that $\frac{|A^*| }{\rk(E)-\rk(\overline {A^*})}=\min\limits_{\substack{A\subseteq E\ \rm{s.t.}\\ \rk(\overline A) < \rk(E)}}  \frac{|A| }{\rk(E)-\rk(\overline A)}$. Now we note that any base $B$ needs to contain $\rk(E)-\rk(\overline {A^*})$ elements from $A^*$. So for any base collection $\B$, we have $\sum_{e\in A^*}\ell^\B(e)\geq \rk(E)-\rk(\overline {A^*})$. This means that on average for $e\in A^*$ we have $\ell^\B(e)\geq \frac{\rk(E)-\rk(\overline {A^*})}{|A^*|}$. In particular $\max_{e\in E}\ell^\B(e) \geq \max_{e\in A^*} \ell^\B(e)\geq \frac{\rk(E)-\rk(\overline {A^*})}{|A^*|}$. Equivalently, we get that for any base collection $\B$
    \begin{equation*}
        \frac{1}{\max_{e\in E}\ell^\B(e)} \le \frac{|A^*| }{\rk(E)-\rk(\overline {A^*})}.
    \end{equation*}
    Now we see that 
    \begin{equation*}
        \max_{\mathcal B}\frac{1}{\max_{e\in E} \ell^\B(e)} \le \frac{|A^*| }{\rk(E)-\rk(\overline {A^*})}= \min_{\substack{A\subseteq E\ \rm{s.t.}\\ \rk(\overline A) < \rk(E)}}  \frac{|A| }{\rk(E)-\rk(\overline A)}.
    \end{equation*}

    "$\ge$". W.l.o.g., we assume that $\min\limits_{\substack{A\subseteq E\ \rm{s.t.}\\ \rk(\overline A) < \rk(E)}}  \frac{|A| }{\rk(E)-\rk(\overline A)}$ is an integer. This can be achieved by copying each element in the matroid $\rk(E)!$ times, since $\rk(E)-\rk(\overline {A^*})\in \{1,2,\dots \rk(E)\}$.
    This has no impact on the analysis as it blows up the values on the left and right equally.
    
    Now suppose that $\min\limits_{\substack{A\subseteq E\ \rm{s.t.}\\ \rk(\overline A) < \rk(E)}}  \frac{|A| }{\rk(E)-\rk(\overline A)}=k$, by \Cref{thm:matroids_edmonds} this means that we can pack $k$ disjoint bases. Let $\B$ be the base collection consisting of these $k$ bases. Then for each element $e\in B\in \B$, we have $\ell^\B(e)=1/k$. For other elements, we have $\ell^\B(e)=0$. Hence we have $\max_{e\in E}\ell^\B(e)=1/k$. Equivalently
    \begin{equation*}
        \frac{1}{\max_{e\in E} \ell^\B(e)} = k = \min_{\substack{A\subseteq E\ \rm{s.t.}\\ \rk(\overline A) < \rk(E)}}  \frac{|A| }{\rk(E)-\rk(\overline A)}.
    \end{equation*}
    Hence clearly 
    \begin{equation*}
        \max_{\B}\frac{1}{\max_{e\in E} \ell^\B(e)} \ge k = \min_{\substack{A\subseteq E\ \rm{s.t.}\\ \rk(\overline A) < \rk(E)}}  \frac{|A| }{\rk(E)-\rk(\overline A)}.\qedhere
    \end{equation*}
\end{proof}

However, for purpose of analysis, we would like a \emph{specific} packing with this property. More precisely, we would like the \emph{loads} corresponding to that packing. Hence, we use the ideal relative loads. 

\paragraph{Ideal Relative Loads.}
Next, we want to show that the ideal relative loads are well-defined. Hereto, we first show that they will be non-decreasing. 

\begin{lemma}
\label{lem_phi_nondec}
    Let $\mathcal{M}_0 = \mathcal{M}$, and for $i\ge 1$ let $\mathcal{M}_i$ be the matroid considered in the $i$-th recursive step while assigning the ideal relative loads. The values of $\Phi$ are non-decreasing, meaning $\Phi_{\mathcal{M}_i} \leq \Phi_{\mathcal{M}_{i+1}}$ for all $i \geq 0$.
\end{lemma}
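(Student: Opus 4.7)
The plan is to argue by contradiction: suppose that $\Phi_{\mathcal{M}_{i+1}} < \Phi_{\mathcal{M}_i}$, and construct from the witness set of $\Phi_{\mathcal{M}_{i+1}}$ a subset of $E_i$ that achieves a strictly smaller ratio than $\Phi_{\mathcal{M}_i}$ in $\mathcal{M}_i$, contradicting the choice of $A_i$ (which was assumed to minimize this ratio).

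In detail, let $E_i$ be the ground set of $\mathcal{M}_i$, and let $A_i \subseteq E_i$ be the set chosen during the $i$-th recursive step, so that $\Phi_{\mathcal{M}_i} = |A_i| / (\rk(E_i) - \rk(\overline{A_i}))$; write $d_i := \rk(E_i) - \rk(\overline{A_i})$. Next, let $A' \subseteq \overline{A_i}$ witness $\Phi_{\mathcal{M}_{i+1}}$; that is, $A'$ is a non-empty subset of the ground set of $\mathcal{M}_{i+1} = \mathcal{M}_i \,|\, \overline{A_i}$ achieving the minimum, and I write $d' := \rk_{\mathcal{M}_{i+1}}(\overline{A_i}) - \rk_{\mathcal{M}_{i+1}}(\overline{A_i} \setminus A')$. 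I will then consider the set $A_i \cup A' \subseteq E_i$ as a candidate for the $\Phi$-defining minimum in $\mathcal{M}_i$.

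The crucial observation is that since $\mathcal{M}_{i+1}$ is a \emph{restriction} of $\mathcal{M}_i$ to $\overline{A_i}$, the rank function of $\mathcal{M}_{i+1}$ agrees with that of $\mathcal{M}_i$ on all subsets of $\overline{A_i}$; in particular the $d'$ computed in $\mathcal{M}_{i+1}$ equals the corresponding gap computed in $\mathcal{M}_i$. Combined with $A_i \cap A' = \emptyset$ (so $|A_i \cup A'| = |A_i| + |A'|$ and $\overline{A_i \cup A'} = \overline{A_i} \setminus A'$), this yields the telescoping identity
\begin{equation*}
\rk(E_i) - \rk(\overline{A_i \cup A'}) \;=\; \bigl[\rk(E_i) - \rk(\overline{A_i})\bigr] + \bigl[\rk(\overline{A_i}) - \rk(\overline{A_i} \setminus A')\bigr] \;=\; d_i + d'.
\end{equation*}
Notably, this step needs no submodularity; it is simply an additive decomposition of the rank deficit.

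Finally, I will invoke the mediant inequality: if $|A'|/d' = \Phi_{\mathcal{M}_{i+1}} < \Phi_{\mathcal{M}_i} = |A_i|/d_i$, then
\begin{equation*}
\frac{|A_i \cup A'|}{\rk(E_i) - \rk(\overline{A_i \cup A'})} \;=\; \frac{|A_i| + |A'|}{d_i + d'} \;<\; \frac{|A_i|}{d_i} \;=\; \Phi_{\mathcal{M}_i},
\end{equation*}
which contradicts the minimality in the definition of $\Phi_{\mathcal{M}_i}$ (the validity condition $\rk(\overline{A_i \cup A'}) < \rk(E_i)$ follows from $d_i + d' \geq 1$). The main subtlety to be careful about is the two different rank functions involved; once one checks that the restriction means ranks of sets in $\overline{A_i}$ coincide, the rest is a short arithmetic argument.
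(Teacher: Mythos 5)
Your proposal is correct and follows essentially the same route as the paper's proof: assume a smaller ratio is achieved in $\mathcal{M}_{i+1}$, take the union of that witness with $A_i$, use the disjointness and the agreement of the restricted rank function to telescope the rank deficit, and conclude via the mediant-type inequality that $A_i \cup A'$ would beat $\Phi_{\mathcal{M}_i}$, contradicting minimality. The only cosmetic difference is that the paper phrases the final step by bounding the denominator by $(|A_i|+|A'|)/\Phi_{\mathcal{M}_i}$ rather than invoking the mediant inequality by name.
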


\begin{proof}
    Consider an arbitrary step $i$ where the matroid $\mathcal{M}_i = (E,\mathcal{I})$ is processed and let in the following $\Phi := \Phi_{\M_i}$. Let further $A_0$ be a set that gives $\Phi = |A_0|/(\rk(E) - \rk(E \setminus A_0))$. Assume that there is a set $A_1 \subset E \setminus A_0$ such that $\frac{|A_1|}{\rk(E \setminus A_0) - \rk((E \setminus A_0) \setminus A_1)} < \Phi$, or equivalently $\rk(E \setminus A_0) - \rk((E \setminus A_0) \setminus A_1) > |A_1|/\Phi$. By definition of $\Phi$, we also have $\frac{|A_0|}{\rk(E) - \rk(E \setminus A_0)} = \Phi$, or equivalently: $\rk(E) - \rk(E \setminus A_0)= |A_0|/\Phi$. 
    
    Since $A_0 \cap A_1 = \emptyset$, we get
        \begin{align*}
            \frac{|A_1 \cup A_0|}{\rk(E) - \rk(E \setminus (A_0 \cup A_1))} &= \frac{|A_0| + |A_1|}{\rk(E) - \rk(E \setminus A_0) + \rk(E \setminus A_0) - \rk((E \setminus A_0) \setminus A_1)}\\
            &< \frac{|A_0|+|A_1|}{\frac{|A_0|}{\Phi}+\frac{|A_1|}{\Phi}} = \Phi, 
        \end{align*}
    which contradicts the choice of $|A_0|$.
\end{proof}

Now we are ready to show the following lemma. 
\begin{lemma}\label{lm:ideal_welldefined}
    The ideal relative loads are well-defined. 
\end{lemma}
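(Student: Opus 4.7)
The plan is to verify two aspects of well-definedness: that the recursion terminates and assigns a value to every element, and that the assigned values do not depend on the particular choice of minimizer $A_0$ at each step.

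For termination, at each recursive step a non-empty $A_0$ exists whenever $\rk(E) \geq 1$ (the set $A = E$ is always a feasible candidate, and a minimum is attained over the finite collection of subsets), so the recursion moves to the matroid $\M|\overline{A_0}$ on a strictly smaller ground set and halts in at most $|E|$ steps. For the base case in which the residual matroid has rank $0$, every remaining element is a loop and I would simply set $\ell^*(e) = 0$.

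For uniqueness, the structural claim is that the family of minimizers of $|A|/(\rk(E)-\rk(\overline A))$ is closed under union. Writing $f(A) := \rk(E) - \rk(\overline A)$, the submodularity of the rank function yields
\[
f(A \cup B) + f(A \cap B) \;\leq\; f(A) + f(B),
\]
while sizes satisfy $|A \cup B| + |A \cap B| = |A| + |B|$. Combined with the fact that $|X|/f(X) \geq \Phi_\M$ for every feasible $X$, the hypothesis that $A$ and $B$ both attain $\Phi_\M$ forces equality throughout, so $A \cup B$ is again a minimizer (and $A \cap B$ too, whenever it is non-empty and has $f > 0$). Consequently, a unique maximal minimizer $A_0^*$ exists, which I would take as the canonical choice in every recursive step.

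It remains to show that selecting any other minimizer $A_0 \subsetneq A_0^*$ produces the same final $\ell^*$ assignment. From $\rk(E) - \rk(\overline{A_0}) = |A_0|/\Phi_\M$ and $\rk(E) - \rk(\overline{A_0^*}) = |A_0^*|/\Phi_\M$ one obtains
\[
\rk(\overline{A_0}) - \rk(\overline{A_0^*}) \;=\; |A_0^* \setminus A_0|/\Phi_\M,
\]
which identifies $A_0^* \setminus A_0$ as a minimizer of $\M|\overline{A_0}$ with ratio $\Phi_\M$; in particular Lemma \ref{lem_phi_nondec} gives $\Phi_{\M|\overline{A_0}} = \Phi_\M$, so the next recursive level will assign $1/\Phi_\M$ to every element of $A_0^* \setminus A_0$. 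Iterating this observation across the full recursion shows that each element's final value is independent of all intermediate choices. The main obstacle is precisely this bookkeeping across recursive levels: the submodular lattice argument disposes of one level cleanly, but tracing an element's value through the whole recursion requires combining it with the non-decreasing property of $\Phi$.
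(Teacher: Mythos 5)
Your proposal has two problems, one small and one substantive. The small one: the key inequality in your lattice step is stated in the wrong direction. The function $f(A) := \rk(E)-\rk(\overline A)$ is \emph{supermodular}, not submodular: applying rank submodularity to the complements $\overline A,\overline B$ gives $\rk(\overline{A\cup B})+\rk(\overline{A\cap B}) \le \rk(\overline A)+\rk(\overline B)$, hence $f(A\cup B)+f(A\cap B) \ge f(A)+f(B)$. With the "$\le$" you wrote, every inequality in your "forces equality throughout" step points the same way (since minimality only gives $f(X)\le |X|/\Phi_\M$ for all $X$), so nothing is forced and the union-closedness of minimizers does not follow as argued. With the correct "$\ge$" direction the forcing argument does work, so this is fixable, but as written the derivation is incorrect.

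The substantive gap is the part you yourself label "the main obstacle": the bookkeeping across recursive levels is exactly the content of the lemma, and it is not carried out. Your one-level computation (which is correct) only treats the nested case $A_0\subseteq A_0^*$ and shows that $A_0^*\setminus A_0$ attains ratio $\Phi_\M$ in $\M|\overline{A_0}$; but the conclusion "so the next recursive level will assign $1/\Phi_\M$ to every element of $A_0^*\setminus A_0$" does not follow, because the next level again picks an \emph{arbitrary} minimizer, which need not contain $A_0^*\setminus A_0$. To finish along your lines you would need an induction on $|E|$ together with further facts you do not prove, e.g.\ that the maximal minimizer of $\M|\overline{A_0}$ is exactly $A_0^*\setminus A_0$ (this uses the composition identity for ratios plus maximality of $A_0^*$), so that the canonical run of $\M|\overline{A_0}$ continues as the canonical run of $\M|\overline{A_0^*}$ and the assignments also agree on $\overline{A_0^*}$. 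The paper avoids the maximal-minimizer machinery entirely: it takes two \emph{arbitrary} minimizers $S,T$ of the current level and shows, by contradiction via submodularity, that if $T\cap\overline S$ had ratio strictly greater than $\Phi$ in $\M|\overline S$, then $T\cap S$ would have ratio strictly less than $\Phi$ in $\M$; combined with \Cref{lem_phi_nondec} this yields the one-step invariance (every element of every minimizer still receives load $1/\Phi$ after any legal removal), which is precisely the statement your sketch still owes.
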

\begin{proof}
    Suppose we have two sets $S$ and $T$ such that $T \not \subseteq S$ and
    \begin{equation}\label{eq:A0A1}
        \frac{|S|}{\rk(E)-\rk(E\setminus S)} = \Phi = \frac{|T|}{\rk(E)-\rk(E\setminus T)}.
    \end{equation}
    We need to show that all elements  $e\in T$ get value $\ell^*(e)=1/\Phi$, even if we first recurse on $E\setminus S$. In other words, we need to show that 
    \begin{equation*}
     \frac{|T\cap (E\setminus S)|}{\rk(E\setminus S)-\rk((E\setminus S)\setminus (T\cap (E\setminus S)))}= \Phi.
    \end{equation*}
    First, we note that $\rk(E\setminus S)-\rk((E\setminus S)\setminus (T\cap (E\setminus S))) =  \rk(E\setminus S) - \rk((E\setminus S) \cap (E\setminus T))\neq 0$ as otherwise we would get that $ \rk((E\setminus S) \cup (E\setminus T))= \rk(E \setminus (S \cap T)) \leq \rk(E \setminus T)$. Then it would follow that $\frac{|T \cap S|}{\rk(E)-\rk(E\setminus (T \cap S))} < \Phi$ which contradicts the definition of $\Phi$.
    By \Cref{lem_phi_nondec}, we already know that it is at least $\Phi$. So it remains to show that it is at most $\Phi$. For contradiction, assume 
    \begin{equation}\label{eq:contr_assump}
     \frac{|T\cap (E\setminus S)|}{\rk(E\setminus S)-\rk((E\setminus S)\setminus (T\cap (E\setminus S)))}> \Phi.
    \end{equation}

    Then we will show that 
    \begin{equation*}
        \frac{|T\setminus (E\setminus S)|}{\rk(E)-\rk(E \setminus (T\setminus (E\setminus S)))} < \Phi,
    \end{equation*}
    contradicting the value of $\Phi$, which was minimal. 
    First, we consider the numerator: 
    \begin{align*}
        |T\setminus (E\setminus S)| &= |T|-|T\cap (E\setminus S)|\\
        &< 
        \Phi (\rk(E)-\rk(E\setminus T)) - \Phi( \rk(E\setminus S)-\rk((E\setminus S)\setminus (T\cap (E\setminus S)))),
        \intertext{by \Cref{eq:A0A1} and \Cref{eq:contr_assump}}
        &= \Phi (\rk(E)-\rk(E\setminus T)- \rk(E\setminus S)+\rk((E\setminus S)\cap (E\setminus T)))\\
        &\le \Phi( \rk(E) -\rk( (E\setminus S)\cup (E\setminus T))),
    \end{align*}
    by submodularity of the rank function.
    Finally, we note that 
    \begin{equation*}
         (E\setminus S)\cup (E\setminus T) = E\setminus (S\cap T) =E \setminus (T\setminus (E\setminus S)). 
    \end{equation*}
    So we conclude that 
  \begin{equation*}
        \frac{|T\setminus (E\setminus S)|}{\rk(E)-\rk(E \setminus (T\setminus (E\setminus S)))} < \frac{\Phi(\rk(E)-\rk(E \setminus (T\setminus (E\setminus S))))}{\rk(E)-\rk(E \setminus (T\setminus (E\setminus S)))} = \Phi,
    \end{equation*}
    which finishes the proof. 
\end{proof}

\subsection{Base Covering} 
As shown in \cite{Edmonds65partition}, the number of bases that are needed to cover all elements of $e$ are given by 
\begin{equation*}
    \left \lceil \max_{\substack{A\subseteq E\ \rm{s.t.} \\ A \neq \emptyset}} \frac{|A|}{\rk(A)} \right \rceil.
\end{equation*}

To obtain a similar result as for base packing, we relate the optimal base collection to the covering number. This generalizes the arboricity result of de Vos and Christiansen~\cite[Theorem 25]{deVosC24}, which is the statement in the special case of graphic matroids. 
\matroidPacking*
\begin{proof}
    Let $E_i := E \setminus \bigcup_{j < i} A_j$ and let $\M_i = \M|E_i$ for all recursion levels denoted by the indices $i \in I$. First, we note that $ 1/ \min_e\ell^*(e) = \max_{i \in I} \Phi_{\M_i}$ over all recursion levels $i$. Let $j$ be the last iteration in which $\Phi_{\M_i}$ is maximized. Since the $\Phi$-values are non-decreasing, see \Cref{lem_phi_nondec}, this also has to be the last level of the recursion and therefore $E_{j+1} = E_j \setminus A_j = \emptyset$ and $\rk(E_j) = \rk(A_j)$. We get
    \begin{equation*}
        1/ \min_e\ell^*(e) = \frac{|A_j|}{\rk(E_j) - \rk(E_j \setminus A_j)} = \frac{|A_j|}{\rk(A_j)} \leq \max_{\substack{A\subseteq E\ \rm{s.t.} \\ A \neq \emptyset}} \frac{|Y|}{\rk(Y)}.
    \end{equation*}
    Now, we go on to show $1/ \min_e\ell^*(e) \geq \frac{|Y|}{\rk(Y)}$ for any subset $Y \subseteq E$. Note that the recursively defined sets $A_i$ for $i \in I$ are disjoint and that $\bigcup_{i \in I} A_i = E$. Hence, we get
    \begin{equation*}
        |Y| = \sum_{i \in I} |A_i \cap Y|.
    \end{equation*}
    Next, we want to bound $|Y|$ by $\sum_{i\in I} \Phi_{\M_i} \rk(A_i \cap Y)$
    where it remains to show the inequality $|A_i \cap Y| \leq \Phi_{\M_i} \rk(A_i \cap Y)$ for any $i \in I$. Assume towards a contradiction that there is a level $k \in I$ for which the inequality does not hold. In the following let $\overline{A_i} := E_i \setminus A_i$ for each $i \in I$. Now, consider the value
    \begin{align*}
        \Phi' &= \frac{|A_k \setminus Y|}{\rk(E_k) - \rk(\overline{A_k}\cup (A_k \cap Y))} = \frac{|A_k| - |A_k \cap Y|}{\rk(E_k) - \rk(\overline{A_k}\cup (A_k \cap Y))}.
        \intertext{Since the rank function is submodular we get}
        \Phi' &\leq \frac{|A_k| - |A_k \cap Y|}{\rk(E_k) - \rk(\overline{A_k}) - \rk(A_k \cap Y)}.
        \intertext{By our assumption that $|A_k \cap Y| > \Phi_{\M_k} \rk(A_k \cap Y)$ and the definition of $\Phi_{\M_k}$ we further get}
        \Phi' &< \frac{|A_k| - \Phi_{\M_k} \rk(A_k \cap Y)}{\rk(E_k) - \rk(\overline{A_k}) - \rk(A_k \cap Y)} = \frac{\Phi_{\M_k}(\rk(E_k) - \rk(\overline{A_k})) - \Phi_{\M_k} \rk(A_k \cap Y)}{\rk(E_k) - \rk(\overline{A_k}) - \rk(A_k \cap Y)}.
    \end{align*}
    This contradicts the choice of $A_k$, as picking $A_k \setminus Y$ in recursion level $k$ would result in a smaller $\Phi$-value. Next, we show that $\sum_{i \in I} \rk(A_i \cap Y)\leq \rk(Y)$. Note that for every $i \in I$ with $i \neq 0$ we have $E_i = \overline{A_{i-1}}$. Hence, for the last level of recursion $i_{\text{max}}$, we get
    \begin{equation*}
        \sum_{i \in I} \rk(E_i \cap Y) - \rk(\overline{A_i} \cap Y) = \left(\sum_{i \in I, i < i_{\text{max}}} \rk(E_i \cap Y) - \rk(E_{i+1} \cap Y)\right) + \rk(E_{i_{\text{max}}} \cap Y) - \rk(\emptyset) \leq \rk(Y)
    \end{equation*}
    Now it remains to show that $\rk(A_i \cap Y)\leq \rk(E_i \cap Y) - \rk(\overline{A_i} \cap Y)$ for all $i \in I$. In the following, consider an arbitrary $i \in I$. Note that if $\Phi_{\M_i}$ is an integer then
    \begin{equation}\label{eq_ub_rk_aiy}\rk(A_i \cap Y) \leq \frac{|A_i \cap Y|}{\Phi_{\M_i}}
    \end{equation} since the matroid $\M_i$ contains $\Phi_{\M_i}$ disjoint bases. Hence, every base of $\M_i | (A_i \cap Y)$ contains at most $|A_i \cap Y|/ \Phi_{\M_i}$ many elements. Next, we want to show 
    \begin{equation}\label{eq_ub_by_phi}\frac{|A_i \cap Y|}{\rk(E_i \cap Y) - \rk(\overline{A_i} \cap Y)} \leq \Phi_{\M_i}.
    \end{equation}
    Assume the inequality did not hold. Similar to the approach above, consider the value
    \begin{align*}
        \Phi'_{\M_i} &= \frac{|A_i \setminus Y|}{\rk(E_i) - \rk(\overline{A_i}\cup (A_i \cap Y))} = \frac{\Phi_{\M_i}(\rk(E_i) - \rk(\overline{A_i})) - |A_i \cap Y|}{\rk(E_i) - \rk(\overline{A_i}\cup (A_i \cap Y))}.
    \intertext{By the assumption we get}
        \Phi'_{\M_i} &< \frac{\Phi_{\M_i}(\rk(E_i) - \rk(\overline{A_i}) - \rk(E_i \cap Y) - \rk(\overline{A_i} \cap Y))}{\rk(E_i) - \rk(\overline{A_i}\cup (A_i \cap Y))}.
    \end{align*}
    Now, in order to get $\Phi'_{\M_i} < \Phi_{\M_i}$ we need
    \begin{align*}
        \rk(\overline{A_i}\cup (A_i \cap Y)) &\leq \rk(\overline{A_i}) +  \rk(E_i \cap Y) - \rk(\overline{A_i} \cap Y).
    \intertext{This inequality holds since we have} 
        \rk(\overline{A_i}\cup (A_i \cap Y)) + \rk(\overline{A_i} \cap Y) &= \rk(\overline{A_i}\cup (E_i \cap Y)) + \rk(\overline{A_i} \cap (E_i \cap Y))\\
        &\leq \rk(\overline{A_i}) + \rk(E_i \cap Y)
    \end{align*}
    where the last inequality is due to the submodularity of the rank function.
    Now we have that \eqref{eq_ub_rk_aiy} and \eqref{eq_ub_by_phi} hold for any $i \in I$. We can again assume w.l.o.g.\ that $\Phi_{\M_i}$ is an integer using the same technique as described in the proof of \Cref{cor_phi}, as the rank remains unchanged. Putting \eqref{eq_ub_rk_aiy} and \eqref{eq_ub_by_phi} together, we get the desired property which directly gives the following lower bound on the rank of $Y$
    \begin{equation*}
        \sum_{i \in I} \rk(A_i \cap Y) \leq \rk(Y).
    \end{equation*}
    Finally, we get: 
     \begin{align*}
        \frac{|Y|}{\rk(Y)} &\leq \frac{\sum_{i \in I} \Phi_{\M_i} \rk(A_i \cap Y)}{\sum_{i \in I} \rk(A_i \cap Y)}\\
        &\leq \max_{i \in I} \Phi_{\M_i} \\
        &= \frac{1}{\min_e\ell^*(e)}.\qedhere
    \end{align*}    
    
\end{proof}

\newpage
\section{Greedy Base Collection}\label{sc:greedyIdeal}

First, let us define a greedy base collection as follows:
let the weight of an element be the number of bases
an element belongs to, and require that the base in the collection form successive minimum weight bases. This is a natural generalization of the notion for graphic matroids called `greedy tree-packing'\footnote{We opt for `collection' instead of `packing' to avoid any confusion with base packing as in \Cref{thm:matroids_edmonds}.}, see, e.g.,~\cite{Thorup07}.

In this section, we show how to maintain a greedy base collection. Then, we show that when we compute a greedy base collection $\B$ that is large enough, it approximates the ideal relative loads well.

\subsection{Dynamic Greedy Base Collection}
We show how to maintain a greedy base collection under dynamic updates. This lemma is rather straightforward.

\GreedyBaseCol*
\begin{proof}
    Given a greedy base collection $\B$, let there be an deletion or insertion~$e$. First, consider the deletion of $e$. The element $e$ is part of at most $|\B|$ bases. To update our data structures, we need to delete it from every base it is part of. We do this in sequence and show that each such deletion leads to at most $|\B|$ updates to a base. 
    Since each update to a minimum weight base uses at most $O(\log(n))$ queries per update by \Cref{lm:dyn_min_base}, 
    we get $O(|\B|^2\log(n))$ worst-case queries per update in total. 
    
    In the following, for any element $f \in E$ and $i \in \{1, \ldots, |\mathcal{B}|\}$, let $w_i(f)$ be the number of bases $B_j$ that contain $f$ with $ j \le i$. 

    It remains to show how to delete $e$ from a single base $B_{i}$ in $\B$. We do this by first updating $B_{i}$, then $B_{i+1}$ and so forth. We claim (by induction) that during this process, at any base $B_j$ (for $j\ge i_e$) we only need to make one update, i.e., increase the weight of one element $e'$ and potentially replace it.  
    
    Let $B_j\in \B$ be the base that where we need to increase the weight of $e'$. We claim that we get at most $|\B|-j$ changes to other bases. Note that this is non-trivial; at a first glance, the changes over bases could cascade. However, we note that if deleting $e'$ from $B_j$ leads to adding $f$ to $B_j$, then $f$ gets an increased weight $w_j^{\text{new}}(f)=w_j^{\text{old}}(f)+1$. Now consider the next base $B_k$ with $k>j$ that contains $f$. We have two cases: either $f$ remains in the base $B_k$, even though its weight is increased. This means we have $w_k^{\text{new}}(f)=w_k^{\text{old}}(f)+1$ and all other elements still have the same weight. Alternatively, an element $g$ is replacing $f$ in $B_k$. This means that $w_k^{\text{new}}(g)=w_k^{\text{old}}(g)+1$ and $w_k^{\text{new}}(f)=w_k^{\text{old}}(f)+1-1=w_k^{\text{old}}(f)$. So for bases $B_l$ for $l>k$, we only need to adjust the weight of $g$. 
    Hence, there can be at most one change per base in the collection, so $|\B|$ in total. The proof for an insertion of $e$ is analogous.
\end{proof}

\subsection{Greedy Approximates Ideal Base Collection}
Now, we show that when we compute a greedy base collection $\B$ that is large enough, its relative loads $\ell^\B(\cdot)$ give good approximations to the ideal relative loads $\ell^*(\cdot)$. This generalizes the result for packing trees \cite{Thorup07} to matroids, so the proof follows the approach used for the graph case. It utilizes a technique by Young \cite{young95}. The idea is to consider a distribution of bases such that picking the bases for a base collection randomly from this distribution would result in the relative loads being ideal in expectation. Now, the number of violations, where the relative load of an element does not approximate its ideal load well, is analyzed. This is done while replacing the randomly picked bases one after the other by bases computed greedily. For the analysis, we consider a pessimistic estimator for the number of violations at every step of the process and show that its value cannot increase when a greedy base is added. 

Now, we define $\Pi_\mathcal{M}$, a probability distribution of bases of $\mathcal{M}$. We construct $\Pi_\mathcal{M}$  simultaneously with the assignment of the ideal relative loads. See the process that defines ideal relative loads in \Cref{sec:overview_base_collactions}. On each (restricted) matroid $\M$ in this process, let $\B^*$ be a base collection for $\M$ with $\frac{1}{\max_{e\in E} \ell^{\B^*}(e)} = \Phi_\M$. To pick a random base from $\Pi_\M$, pick a base $B$ from $\B^*$ uniformly at random. 

In the following, we generalize the proof of Lemma 13 from~\cite{Thorup07}. 
\begin{lemma}\label{lem_prob}
    For each $e \in E$, we have \[\underset{R \in \Pi_\mathcal{M}}{\Pr}(e \in R) = \ell^*(e). \]
\end{lemma}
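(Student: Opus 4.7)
The plan is to argue by induction on the depth of the recursion in the definition of $\ell^*$, constructing $\Pi_\M$ simultaneously. In the base case $\rk(\M) = 0$, the only base is $\emptyset$, both sides are zero, and the claim is trivial.

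For the inductive step, let $A_0$ be the set used in the first recursive step for assigning $\ell^*$, and let $\M' := \M|\overline{A_0}$. By induction, $\Pi_{\M'}$ is already defined with $\Pr_{R'\sim\Pi_{\M'}}(e \in R') = \ell^*_{\M'}(e)$ for all $e \in \overline{A_0}$. I would define $\Pi_\M$ as follows: draw $R' \sim \Pi_{\M'}$, then, conditionally on $R'$, draw $F$ uniformly at random from a packing of $\Phi_\M$ disjoint bases of the matroid $(\M/R')|A_0$, and output $R := R' \cup F$. As in the proof of \Cref{cor_phi}, we may assume that $\Phi_\M$ is integral by duplicating each element sufficiently many times.

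The main technical step is to verify that $(\M/R')|A_0$ admits a packing of $\Phi_\M$ disjoint bases. This matroid has $|A_0| = \Phi_\M(\rk(E) - \rk(\overline{A_0}))$ elements and rank $\rk(E) - \rk(\overline{A_0})$, using that $R'$ is a base of $\M'$ and therefore spans $\overline{A_0}$ in $\M$. Applying \Cref{thm:matroids_edmonds}, I would check, for every $A \subseteq A_0$, the inequality $|A| \ge \Phi_\M(\rk_{(\M/R')|A_0}(A_0) - \rk_{(\M/R')|A_0}(A_0 \setminus A))$. Using the identity $\rk_{(\M/R')|A_0}(A_0\setminus A) = \rk_\M(R'\cup(A_0\setminus A)) - \rk_\M(R') = \rk(E\setminus A) - \rk(\overline{A_0})$, where the second equality follows because $R'$ spans $\overline{A_0}$ in $\M$ and hence $\rk_\M(R'\cup X) = \rk_\M(\overline{A_0}\cup X)$ for any $X \subseteq A_0$, the inequality reduces to $|A| \ge \Phi_\M(\rk(E) - \rk(E\setminus A))$, which is precisely the defining inequality of $\Phi_\M$ as a minimum (and is vacuous when $\rk(E\setminus A) = \rk(E)$).

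With the packing in hand, the rest is short casework. For $e \in \overline{A_0}$, since $F \subseteq A_0$, we have $e \in R$ iff $e \in R'$, so by induction $\Pr(e \in R) = \Pr(e \in R') = \ell^*_{\M'}(e) = \ell^*(e)$. For $e \in A_0$, the $\Phi_\M$ disjoint bases of $(\M/R')|A_0$ have total cardinality $\Phi_\M(\rk(E)-\rk(\overline{A_0})) = |A_0|$, so they partition $A_0$ and $e$ belongs to exactly one of them; hence $\Pr(e \in F \mid R') = 1/\Phi_\M$, and unconditionally $\Pr(e \in R) = 1/\Phi_\M = \ell^*(e)$. The main obstacle is establishing the Edmonds condition for $(\M/R')|A_0$: cleanly reducing its rank function back to that of $\M$ via the fact that $R'$ spans $\overline{A_0}$, and then recognizing that the resulting condition is exactly the defining inequality of $\Phi_\M$ specialized to subsets of $A_0$.
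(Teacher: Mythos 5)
Your proof is correct, but it constructs $\Pi_\M$ by a genuinely different route than the paper. The paper samples a base $B$ uniformly from an optimal base collection $\B^*$ attaining the maximum in \Cref{cor_phi}, samples $B'$ recursively from $\Pi_{\M|\overline{A_0}}$ independently, and outputs $(B\cap A_0)\cup B'$; the work then goes into an averaging argument showing that every base of $\B^*$ meets $A_0$ in exactly $\rk(E)-\rk(\overline{A_0})$ elements and that all elements of $A_0$ have load exactly $1/\Phi_\M$ in $\B^*$, which simultaneously yields that the output is a base and that the marginal on $A_0$ is $1/\Phi_\M$. You instead condition on the recursive sample $R'$ and pack $\Phi_\M$ disjoint bases in the quotient $(\M/R')|A_0$, verifying the Edmonds condition of \Cref{thm:matroids_edmonds} directly via the contraction rank identity $\rk_{(\M/R')|A_0}(A_0\setminus A)=\rk(E\setminus A)-\rk(\overline{A_0})$ (valid since $R'$ spans $\overline{A_0}$), so that the condition collapses to the defining inequality of $\Phi_\M$; the $1/\Phi_\M$ marginal then falls out because the packing partitions $A_0$, as $|A_0|=\Phi_\M(\rk(E)-\rk(\overline{A_0}))$. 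What each approach buys: the paper avoids contraction and reuses \Cref{cor_phi} wholesale, while your argument gets the base property of $R=R'\cup F$ and the exact marginal almost for free (do state the one-line check that $F$ independent in $\M/R'$ of size $\rk(E)-\rk(\overline{A_0})$ makes $R'\cup F$ a base of $\M$), at the cost of handling integrality yourself: duplicating elements changes the matroid, so you should remark that parallel copies force every base of the duplicated matroid to contain at most one copy of each element, hence the sampled base projects to a base of $\M$ with the intended marginals -- the paper hides the same duplication inside the proof of \Cref{cor_phi}. Note also that your distribution still satisfies the property used right after the lemma (a sample restricted to $\overline{A_0}$ is distributed as $\Pi_{\M|\overline{A_0}}$, and each sample is a minimum weight base with respect to $\ell^*$), so it is a valid drop-in replacement for the rest of the argument.
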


\begin{proof}
First, pick a random base from $\Pi_\M$. Then, pick an independent set $B'$ randomly from the recursively defined distribution $\Pi_{\M|\overline{A_0}}$. We construct a base $B'' := (B \cap A_0) \cup B'$. First, we show that $B''$ is indeed a base for $\M$.

By the exchange property there is a set $S \subseteq B$ such that $B' \cup S \in \mathcal{I}$ and $|B' \cup S| = |B|$, making this set a base. It remains to show that $S = B \cap A_0$. Since $B'$ was a base for $\M|\overline{A_0}$ and therefore spans $\overline{A_0}$, we have $S \subseteq B \cap A_0$ with $|S| = \rk(E) - \rk(\overline{A_0})$. Any set in $\B^*$ is a base for $\M$. Hence, every such base contains at least $\rk(E) - \rk(\overline{A_0})$ elements from $A_0$, giving a lower bound on the total load of all elements in $A_0$. So, the average relative load over $A_0$ is then at least $\frac{\rk(E) - \rk(\overline{A_0})}{|A_0|} = \frac{1}{\Phi_\mathcal{M}}$. This equals the maximum relative load of elements in $A_0$. Hence, all relative loads in $A_0$ are the same and no base in $\B^*$, including $B$, can contain more than $\rk(E) - \rk(\overline{A_0})$ elements from $A_0$. Hence, we get $S = B \cap A_0$ and $B''$ is indeed a base for $\M$.

Further, for an element $e \in A_0$ we have 
\begin{equation*}
    \Pr_{B'' \in \Pi_\M}(e\in B'') = \Pr_{B\in B^* \rm{ u.a.r.}}(e\in B) = \ell^{\B^*}(e) = 1/\Phi_\M= \ell^*(e).
\end{equation*}
For an element $e \in \overline{A_0}$ we instead get 
\begin{equation*}\Pr_{B'' \in \Pi_\M}(e\in B'') = \Pr_{B' \in \Pi_{\M|\overline{A_0}}}(e\in B') = \ell^*(e)
\end{equation*}
where the last equality follows by induction.
\end{proof}

 Note that any $B \in \Pi_\M$ is a minimum weight base with respect to the ideal relative loads $\ell^*(\cdot)$ as by induction $B \cap \overline{A_0}$ is given by a minimum weight base picked from $\Pi_{\M|\overline{A_0}}$. Further, the $\ell^*(\cdot)$-values of elements in $\overline{A_0}$ are smaller or equal compared to elements in $A_0$.

 Now, we go on to show that a large enough greedy base collection approximates the ideal one.

\GreedyColNew*

We first show that \Cref{eq:greedy_coll_low} holds for all $e \in E$ with $\ell^*(e) \leq 1/\gamma$ and then argue that we have \Cref{eq:greedy_coll_high} for all $e \in E$ with $\ell^*(e) \geq 1/\gamma$. The proof follows the structure of the special case~$\eqref{eq:greedy_coll_low}$ of the graphic matroid, i.e., trees, for $\gamma = \Phi$ from \cite[Proposition 16]{Thorup07} using the distribution of bases from above. Many parts are direct generalizations for matroids; all are included here for completeness.

We show the statement in two separate parts, proving $\ell^{\B}(e) \leq \ell^*(e) + \eps/\gamma$ and $\ell^{\B}(e) \ge \ell^*(e) - \eps/\gamma$ using the following approach. We consider an estimator for the number of violations for the statement above, while we construct the greedy base collection one base at a time. We assume that the remaining bases are from the distribution $\Pi_\M$. The proof is comprised of showing the following three properties: (i) initially the value of the estimator is below $1$, (ii) adding a greedy base does not increase the value, and (iii) having a value below $1$ at the end of the process implies the desired property for each element. The proofs of the first two properties closely follow the corresponding parts from \cite{Thorup07}, we include the version for matroids for completeness. The main difference is in the proof of the last property.
\begin{claim}
    For all elements $e \in E$ with $\ell^*(e)\le 1/\gamma$ we have
    \begin{equation}\label{eq_upperbound}
        \ell^{\B}(e) \leq \ell^*(e) + \eps/\gamma.
    \end{equation}
\end{claim}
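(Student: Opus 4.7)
The plan is to use Young's pessimistic-estimator framework~\cite{young95}, adapting Thorup's proof for tree-packings~\cite{Thorup07} to the matroid setting. For a parameter $\alpha = 1+\Theta(\eps)$ and each $0 \le t \le |\B|$, imagine that, after the greedy prefix $B_1,\dots,B_t$ has been placed, the remaining $|\B|-t$ bases are drawn i.i.d.\ from the distribution $\Pi_\M$ of \Cref{lem_prob}. Define
\[
\Psi_t \;:=\; \sum_{e\,:\,\ell^*(e)\le 1/\gamma}\alpha^{L^{\B_t}(e)}\bigl(1+(\alpha-1)\ell^*(e)\bigr)^{|\B|-t}\alpha^{-(\ell^*(e)+\eps/\gamma)|\B|}.
\]
The claim will follow from three steps: (a) $\Psi_0 < 1$; (b) $\Psi_{t+1} \le \Psi_t$ for every greedy step $t\mapsto t+1$; and (c) $\Psi_{|\B|} < 1$ forces \eqref{eq_upperbound} for every relevant $e$.

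Steps (a) and (c) are the standard endpoints of this framework. For (a), \Cref{lem_prob} turns each summand into the Chernoff MGF bound on $\Pr[L^{\B}(e) > (\ell^*(e)+\eps/\gamma)|\B|]$, where $L^{\B}(e)$ is a sum of $|\B|$ i.i.d.\ Bernoullis of mean $\ell^*(e)\le 1/\gamma$; choosing $\alpha = 1+\Theta(\eps)$ and using $|\B|\ge 3\gamma\log n/\eps^2$ makes each summand at most $n^{-3/2}$, so a union bound over at most $n$ elements yields $\Psi_0 < 1$. For (c), each summand of $\Psi_{|\B|}$ equals $\alpha^{L^\B(e)-(\ell^*(e)+\eps/\gamma)|\B|}$, which is $\ge 1$ exactly when $e$ violates \eqref{eq_upperbound}, so $\Psi_{|\B|}<1$ forces the bound for every $e$.

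The heart of the argument, and \textbf{the main obstacle}, is step (b). Writing $w_e := \alpha^{L^{\B_t}(e)}(1+(\alpha-1)\ell^*(e))^{|\B|-t-1}\alpha^{-(\ell^*(e)+\eps/\gamma)|\B|}$, direct expansion gives
\[
\Psi_{t+1} - \Psi_t \;=\; (\alpha-1)\Bigl(\sum_{e\in B_{t+1}}w_e \;-\; \sum_{e}w_e\,\ell^*(e)\Bigr),
\]
and by \Cref{lem_prob} the second sum equals $\EX_{R\sim\Pi_\M}\bigl[\sum_{e\in R}w_e\bigr]\ge \min_{B\text{ base}}\sum_{e\in B}w_e$, since $\Pi_\M$ is supported on bases of $\M$. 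It therefore suffices for the greedy base $B_{t+1}$ to minimize $\sum_{e\in B}w_e$. The difficulty is that the greedy optimizes the linear objective $\sum_{e\in B}L^{\B_t}(e)$ (equivalently $\sum_{e\in B}\alpha^{L^{\B_t}(e)}$, since the matroid greedy depends only on the ordering of weights), whereas $w_e$ carries an additional per-element factor depending on $\ell^*(e)$ that breaks this ordering. My plan is to exploit the recursive definition of $\ell^*$: it is constant on each class $A_i$ of the ideal-load hierarchy, and the same matroid-exchange/contraction argument used in the proof of \Cref{lem_prob} (namely that any minimum weight base of $\M$ for a weight function respecting the hierarchy contains exactly $\rk(\M|E_i)-\rk(\M|E_{i+1})$ elements of $A_i$) should let one compare the greedy base and an optimal $w$-min base class-by-class, so that within each class the $\ell^*$-dependent factor is constant and the two objectives coincide. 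Making this decomposition rigorous is the main matroid-theoretic step beyond the graphic case treated in~\cite{Thorup07}.
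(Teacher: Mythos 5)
Your framework (Young's pessimistic estimators, the distribution $\Pi_\M$ from \Cref{lem_prob}, and the three steps (a)--(c)) is the same as the paper's, and (a) and (c) are fine; the gap is exactly at step (b), and your own reduction there makes it unfixable as stated. You need $\sum_{e\in B_{t+1}}w_e\le \EX_{R\sim\Pi_\M}\bigl[\sum_{e\in R}w_e\bigr]$ (sums restricted to $\ell^*(e)\le 1/\gamma$), but after lower-bounding the expectation by $\min_{B}\sum_{e\in B}w_e$ you would need the greedy base to \emph{minimize} the $w$-weight, which it does not: $B_{t+1}$ is a minimum weight base only for the current loads $L^{\B_t}(\cdot)$, while $w_e$ carries the factor $\bigl(1+(\alpha-1)\ell^*(e)\bigr)^{|\B|-t-1}\alpha^{-(\ell^*(e)+\eps/\gamma)|\B|}$, which reweights elements across ideal-load classes in a $t$-dependent way and changes the ordering. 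The class-by-class repair you sketch does not close this: the fact that a base contains exactly $\rk(\M|E_i)-\rk(\M|E_{i+1})$ elements of each class $A_i$ holds for bases that are minimum weight with respect to weights respecting the ideal hierarchy (e.g.\ the bases in the support of $\Pi_\M$), but the greedy base is minimum weight with respect to $L^{\B_t}$, which only \emph{approximately} tracks $\ell^*$; its class counts can differ from those of a $\Pi_\M$-base, and with class-dependent factors and different intersection sizes the per-class comparisons do not assemble into the needed inequality.

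The paper sidesteps this by removing the per-element $\ell^*$ from the estimator. Fix an arbitrary threshold $d\le 1/\gamma$, let $A:=\{e:\ell^*(e)\le d\}$, and use $\sum_{e\in A}(1+\eps)^{L^{\B}(e)}(1+\eps d)^{t-|\B|}(1+\eps)^{-(dt+\eps t/\gamma)}$. Since $\Pr_{R\sim\Pi_\M}(e\in R)=\ell^*(e)\le d$ for all $e\in A$, the drift factor $(1+\eps d)$ is uniform on $A$, and step (b) reduces to the comparison $\sum_{e\in B_{t+1}\cap A}(1+\eps)^{L^{\B_t}(e)}\le \min_{B'\in\Pi_\M}\sum_{e\in B'\cap A}(1+\eps)^{L^{\B_t}(e)}$, which involves only the actual loads. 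This is where the matroid structure is used: \Cref{claim_restricted_base} shows that $B_{t+1}\cap A$ is contained in a minimum weight base of the restriction $\M|A$, while every $B'\in\Pi_\M$ intersects $A$ in a base of $\M|A$ (otherwise an exchange would yield a base lighter with respect to $\ell^*$, since all elements outside $A$ have $\ell^*>d$). Finally, the per-element bound \eqref{eq_upperbound} is recovered by quantifying over the threshold: for each $e$ apply the argument with $d=\ell^*(e)$. If you wish to keep your single estimator with per-element exponents, you would have to prove the weighted greedy-versus-$\Pi_\M$ comparison directly, which in effect reproduces this restriction argument for every threshold anyway.
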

\begin{proof}
    We show that $\ell^{\B}(e) \leq d + \eps/\gamma$ holds for all elements $e \in E$ with $\ell^*(e) \leq d $ for an arbitrary $d \leq 1/\gamma$. Consider the set $A$ containing all such elements for the given $d$, $A := \{ e \in E \;|\;\ell^*(e) \leq d\}$. Now, we analyze the change of the value of the estimator, where $t$ denotes the total number of bases in the collection at the end of the process,
    \begin{equation}\label{eq_estimator} \sum_{e \in A}\frac{(1+\epsilon)^{L^{\B}(e)}(1+\epsilon d)^{t-|\B|}}{(1+\epsilon)^{d t+\epsilon t/\gamma}}.
\end{equation}
First, we show (i) that in the beginning, when the greedy base collection is still empty, the value of the estimator is below $1$. When $\B = \emptyset$, the estimator \eqref{eq_estimator} becomes
 \begin{equation*} \sum_{e \in A}\frac{(1+\epsilon d)^{t}}{(1+\epsilon)^{d t+\epsilon t/\gamma}}.
\end{equation*}
Analogous to \cite{Thorup07} we get
\begin{equation*}
\sum_{e' \in A}\frac{(1+\epsilon d)^{t}}{(1+\epsilon)^{d t+\epsilon t/\gamma}} < n\left(\frac{e^{\epsilon}}{(1+\epsilon)^{(1+\epsilon)}}\right)^{t/\gamma} \leq ne^{-\epsilon^2t/3\gamma} = ne^{- \log n} = 1.
\end{equation*}
The inequalities follow from  $(1+\eps d) \leq e^{\eps d}$, $d \leq 1/\gamma$ and $t = 3 \gamma \log n / \eps^2$.

Now we consider the end of the greedy base collection process, when $|\B| = t$ to show (iii). If the estimators value is smaller than $1$, \Cref{eq_upperbound} holds for all elements in $A$, since at this point,  \eqref{eq_estimator} becomes
 \begin{equation*} \sum_{e \in A}(1+\epsilon)^{L^{\B}(e)-d t-\epsilon t/\gamma}.
\end{equation*}
Assume there was an element $e \in A$ for which $\ell^{\B}(e) > d + \eps/\gamma$, then the value of the estimator, \Cref{eq_estimator}, would also be at least $1$.

Finally, we need to consider the greedy addition of a new base to the collection and show that it cannot increase the value of the estimator to get (ii). We consider the step that adds the greedy base $B$ to $\B$. The new value of the estimator becomes 
 \begin{equation*} \sum_{e \in A}\frac{(1+\epsilon)^{L^{\B \cup \{B\}}(e)}(1+\epsilon d)^{t-|\B|-1}}{(1+\epsilon)^{d t+\epsilon t/\gamma}}.
\end{equation*}
The difference between the value before and after the deletion is then given by
\begin{equation*} \sum_{e \in A}\frac{((1+\epsilon)^{L^\B(e)}(1+\eps d) - (1+\epsilon)^{L^{\B \cup \{B\}}(e)})(1+\epsilon d)^{t-|\B|-1}}{(1+\epsilon)^{d t+\epsilon t/\gamma}}.
\end{equation*}
Hence, it suffices to show the following upper bound for the quantity
\begin{equation*} q(B) := \sum_{e \in A}(1+\epsilon)^{L^{\B \cup \{B\}}(e)} \leq \sum_{e \in A}(1+\epsilon)^{L^{\B}(e)}(1+\epsilon d).
\end{equation*}
If $B$ was a base randomly picked from $\Pi_\M$ the expected value of the quantity would be
\begin{align*}  \mathbb{E}_{B \in \Pi_\M}q(B) &= \sum_{e \in A}(1+\epsilon)^{L^{\B}(e)}((1+\epsilon)\Pr_{B\in\Pi_\M}(e\in B) + 1 - \Pr_{B\in\Pi_\M}(e\in B))\\
&\leq \sum_{e \in A}(1+\epsilon)^{L^{\B}(e)}(1 + \epsilon d).
\end{align*}
The inequality follows from \Cref{lem_prob} because every element $e \in A$ has ideal relative load $\ell^*(e) \leq d$. The next step is to show that the greedy base has a smaller quantity $q$ than for any base in $\Pi_\M$. We also have
\begin{equation*} q(B) = \sum_{e \in A}(1+\epsilon)^{L^{\B}(e)} + \epsilon \sum_{e \in B \cap A}(1+\epsilon)^{L^{\B}(e)}.
\end{equation*}
Hence going forward, it suffices to compare the values of the latter sum. We write 
\begin{equation*}
c(B) := \sum_{e \in B}(1+\epsilon)^{L^{\B}(e)}.
\end{equation*}
We need to show 
\begin{equation*} c(B \cap A) \leq \min_{B'\in \Pi_\M} c(B' \cap A).
\end{equation*}

Now, we want to find a set $S$, such that $c(S)$ is an upper bound for $c(B \cap A)$ as well as a lower bound for $\min_{B'\in \Pi_\M} c(B' \cap A)$. A suitable set is given by a minimum weight base $B_A$ for the restricted matroid $\M|A := (A, \mathcal{I}\;|\;A)$, where $\mathcal{I}|A := \{ X \in \mathcal{I} \; | \; X \subseteq A\}$.
\begin{claim} \label{claim_restricted_base}
   For any minimum weight base $B$ of $\M$ with respect to the loads $L^{\B}( \cdot)$ and a set $A \subseteq E$ there is a minimum weight base $B_A$ with respect to the loads $L^{\B}(\cdot)$ for $\M|A$, such that $B \cap A \subseteq B_A$.
\end{claim}
\begin{proof}
Given a minimum weight base $B$ for $\M$ we show how to construct a suitable base $B_A$ using the greedy algorithm for computing a minimum weight base. %Let $<_B$ be a strict total order on the elements of $B$, where according to the loads $L^{\B}( \cdot)$, tie-breaking in an arbitrary way (e.g. by comparing original labels). 
First we relabel the elements of $E$ to be increasing according to the following order: for $a,b \in E$ we have $e <_B f$ if and only if
\begin{itemize}
    \item $L^{\B}(e) < L^{\B}(f)$; or
    \item $L^{\B}(e) = L^{\B}(f)$ and
    \begin{itemize}
        \item $e \in B \land f \not \in B$; or otherwise
        \item $e < f$ lexicographically.
    \end{itemize}
\end{itemize}
Note that the relabeling of the elements in this way results in a non-decreasing order with respect to $L^{\B}( \cdot)$ regardless of the choice of $B$.
Now we build sets $S_E$ and $S_A$ using the greedy algorithm that iterates over the elements in $E$ according to the order given by the new labels. At the end of the algorithm $S_E$ will be the base $B$, and $S_A$ will be a minimum weight base of $\M|A$, such that $B \cap A \subseteq S_A$. Initially both sets are empty. In iteration $i$, when $e_i \in E$ is considered, do:
\begin{itemize}
    \item if $S_E + e_i \in \mathcal{I}$, add $e_i$ to $S_E$;
    \item if $e_i \in A$ and $S_A + e_i \in \mathcal{I}$, add $e_i$ to $S_A$.
\end{itemize}
The elements of both $E$ and $A$ are processed in non-decreasing order, simultaneously computing a minimum weight base for $\M$ as well as $\M|A$ with respect to $L^{\B}( \cdot)$.
First, we show that at any point of the algorithm $S_E$ spans $S_A$. Initially, this is true as both sets are empty. Consider iteration $i$ processing element $e_i$. If $S_A$ is not altered the claim stays true. If $e_i$ is added to $S_A$ then $e_i$ is either added to $S_E$ as well, or $S_E + e_i$ was not in $\mathcal{I}$. In both cases $e_i$ is spanned by $S_E$ which proves the claim. Further, we get that $\Span(S_A) \subseteq \Span(S_E)$ and hence we have for all $e_i \in E$ that $S_E + e_i \in \mathcal{I}$ implies $S_A + e_i \in \mathcal{I}$. Therefore all $e_i \in A$ that are added to $S_E$ are added to $S_A$ as well.

Now it remains to prove that after all elements have been processed indeed $S_E = B$. Again, the proof is by induction. Let $B_i := \{e \in B | e <_B e_i\} + e_i$ and show that after processing iteration $i$ we have $S_E = B_i$. In the case that $e_i \in B$, we have that $S_E + e_i \in \mathcal{I}$ holds and $e_i$ is added to $S_E$. Now consider the case that $e_i \not \in B$. If $S_E + e_i \not \in \mathcal{I}$ then $e_i$ is not added to $S_E$ and the claim holds. $S_E + e_i \in \mathcal{I}$ cannot hold as otherwise we could create a base $B'$ by adding elements from $B$ to $S_E + e_i$ according to the exchange property. We would get $B' = (B - b) + e_i$ for some element $b \in B \setminus S_E$ to be processed in a future iteration. Hence, $e_i <_B b$ and since $e_i \not \in b$ and $b \in B$ that is only possible if $L^{\B}(e_i) < L^{\B}(b)$ and $B$ would not have been a minimum weight base. 

Putting both properties of the algorithm together, after processing the last element, we get that $B \cap A \subseteq S_A$ and $S_A$ is a minimum weight base for $\M|A$.
\end{proof}

Now, we get the following: \begin{equation*} c(B \cap A) \leq c(B_A) \leq \min_{B'\in \Pi_\M} c(B' \cap A),
\end{equation*}
where the first inequality holds due to \Cref{claim_restricted_base} and for the second inequality we note that, for any $B' \in \Pi_\M$, we have that $B' \cap A$ is a base for $\M|A$. Otherwise $B' \cap A$ would not span $A$ and there would be an element $e \in A \setminus B'$ such that $(B' \cap A) +e \in \I$.  By the exchange property we could add elements from $B$ to $(B' \cap A) + e$ until we have another base for $\M$. Since the new base contains all elements from $B$ except for one element from $e' \in E \setminus A$ with $\ell^*(e') > d$ that is replaced with $e$ with $\ell^*(e) \leq d$ this contradicts that $B'$ is a minimum weight base with respect to the ideal relative loads $\ell^*(\cdot)$. Therefore, for any base $B''$ of $\M|A$, specifically also for $B' \cap A$, we have $c(B_A) \leq c(B'' \cap A)$ as $B_A$ is also a minimum weight base with respect to $(1 + \epsilon)^{L^{\B}(\cdot)}$ since the ordering between the elements stays the same if this value is considered instead of $L^{\B}(\cdot)$.
\end{proof}

Next, we show the corresponding lower bound following the same structure as above.
\begin{claim}
    For all elements $e \in E$ with $\ell^*(e) \le 1/\gamma$ we have
    \begin{equation*}\label{eq_lowerbound}
        \ell^{\B}(e) \ge \ell^*(e) - \eps/\gamma.
    \end{equation*}
\end{claim}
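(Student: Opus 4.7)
The plan is to mirror the proof of the preceding upper-bound claim using a dual pessimistic estimator, counting \emph{under}-loaded rather than over-loaded elements. Fix any $d \leq 1/\gamma$ and let $A := \{ e \in E : \ell^*(e) \geq d \}$; we will show $\ell^{\B}(e) \geq d - \eps/\gamma$ for every $e \in A$, and then specializing $d := \ell^*(e)$ yields the claim for each $e$ with $\ell^*(e) \leq 1/\gamma$. The proposed estimator is
\[
\Psi := \sum_{e \in A} \frac{(1-\eps)^{L^{\B}(e)}\,(1-\eps d)^{t-|\B|}}{(1-\eps)^{dt - \eps t/\gamma}}.
\]
Because $1-\eps < 1$, any $e \in A$ with $L^{\B}(e) \leq dt - \eps t/\gamma$ at $|\B|=t$ contributes at least $1$ to $\Psi$, so it suffices to show that $\Psi < 1$ throughout the construction.

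For the \emph{initial value} at $|\B|=0$, combining $(1-\eps d)^t \leq e^{-\eps d t}$, the Taylor bound $(1-\eps)^{1-\eps} \geq e^{-\eps + \eps^2/2}$, $d \leq 1/\gamma$, and $t \geq 3\gamma\log n/\eps^2$ yields $\Psi_0 \leq n \cdot (e^{-\eps}/(1-\eps)^{1-\eps})^{t/\gamma} \leq n \cdot e^{-\eps^2 t/(2\gamma)} \leq n^{-1/2}$. For the \emph{random step}, imagine that the next base $B'$ were drawn from $\Pi_\M$; by \Cref{lem_prob} we have $\Pr_{B' \in \Pi_\M}(e \in B') = \ell^*(e) \geq d$ for $e \in A$, so $\mathbb{E}[(1-\eps)^{L^{\B \cup \{B'\}}(e)}] = (1-\eps)^{L^{\B}(e)}(1-\eps\ell^*(e)) \leq (1-\eps)^{L^{\B}(e)}(1-\eps d)$, and hence $\mathbb{E}[\Psi_{\text{new}}] \leq \Psi$.

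The \emph{greedy step} is the main obstacle. Writing $w(e) := (1-\eps)^{L^{\B}(e)}$, a short algebraic manipulation shows that adding the greedy base $B$ does not increase $\Psi$ iff
\[
\sum_{e \in B \cap A} w(e) \;\geq\; d \sum_{e \in A} w(e). \qquad (\star)
\]
In the upper-bound proof, \Cref{claim_restricted_base} combined with $B \cap A \subseteq B_A$ bounded the analogous quantity from \emph{above}; here that containment goes the wrong way for $(\star)$. To close this step we intend to exploit the recursive decomposition $A = A_0 \cup \ldots \cup A_k$ (the layers with $1/\Phi_{\M_i} \geq d$) and reduce $(\star)$ to the layer-wise inequalities $\sum_{e \in B \cap A_i} w(e) \geq \tfrac{1}{\Phi_{\M_i}}\sum_{e \in A_i} w(e)$; summed over $i \leq k$, these imply $(\star)$ because $1/\Phi_{\M_i} \geq d$. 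Each layer-wise inequality would be proved by a contraction-based variant of \Cref{claim_restricted_base} applied inside $\M_i$, where $A_i$ plays the role of the ``top'' layer and greediness in $\M$ projects to greediness in $\M_i$ after contracting the preceding layers. Once $\Psi < 1$ at $|\B|=t$, every $e \in A$ satisfies $\ell^{\B}(e) > d - \eps/\gamma$; specializing $d := \ell^*(e)$ then gives $\ell^{\B}(e) \geq \ell^*(e) - \eps/\gamma$ for all $e$ with $\ell^*(e) \leq 1/\gamma$, completing the claim.
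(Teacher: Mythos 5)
Your estimator, the initial bound (i), the termination argument (iii), and the expectation computation over $\Pi_\M$ via \Cref{lem_prob} all match the paper's proof; the derivation of the condition $(\star)$ for the greedy step is also correct. The genuine gap is exactly where you flag it: the layer-wise inequality $\sum_{e \in B \cap A_i} w(e) \geq \tfrac{1}{\Phi_{\M_i}} \sum_{e \in A_i} w(e)$ is false for greedy bases in general, because a minimum weight base with respect to the \emph{current} loads $L^{\B}$ (which are not the ideal loads) need not respect the ideal-load layers and can avoid a layer entirely. Concretely, take the graphic matroid on vertices $u,v,w$ with three parallel edges $p_1,p_2,p_3$ between $u$ and $v$ and single edges $q_1=uw$, $q_2=vw$: then $A_0=\{q_1,q_2\}$ with $\ell^*=1/2$ and $A_1=\{p_1,p_2,p_3\}$ with $\ell^*=1/3$ (so $\Phi=2$, $\beta=3$, and $d=1/3\le 1/\gamma$ is admissible for $\gamma=3$). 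With all loads zero the first greedy base may be $\{q_1,q_2\}$, so $B\cap A_1=\emptyset$ and your layer-wise inequality fails for $A_1$, even though the aggregate $(\star)$ still holds. Your fallback idea that ``greediness in $\M$ projects to greediness in $\M_i$ after contracting the preceding layers'' also does not go through: chaining \Cref{claim_restricted_base} with a contraction inside $\M_i$ produces containments in the wrong direction ($B\cap E_i$ is only \emph{contained in} a minimum weight base of $\M|E_i$), so you cannot conclude that $B\cap A_i$ contains a minimum weight base of the layer matroid.

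The paper closes this step by contracting the \emph{whole} set $A$ at once rather than decomposing into layers. Its \Cref{claim_contracted_base} shows that for any minimum weight base $B$ of $\M$ there is a minimum weight base $B_A$ of the contracted matroid $\M\cdot A$ with $B_A\subseteq B\cap A$; separately, $B'\cap A$ is a base of $\M\cdot A$ for every $B'$ in the support of $\Pi_\M$ (here one uses that every element of $\overline A$ has smaller ideal load than every element of $A$, so a minimum weight base with respect to $\ell^*$ cannot swap an $A$-element for an $\overline A$-element). Since $x\mapsto -(1-\eps)^x$ is increasing in $x$, $B_A$ maximizes $\sum_e (1-\eps)^{L^{\B}(e)}$ over bases of $\M\cdot A$, which gives $\sum_{e\in B\cap A} w(e) \ge \sum_{e\in B'\cap A} w(e)$ pointwise for every $B'$ in the support; taking expectations and using $\Pr_{B'\in\Pi_\M}(e\in B')=\ell^*(e)\ge d$ for $e\in A$ yields $(\star)$. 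Replacing your layer decomposition by this single contraction argument repairs the proof.
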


\begin{proof}
 Again, we consider $d$ with $0 \leq d \leq 1/\gamma$. Now, let $A := \{e \in E | \ell^*(e) \geq d \land \ell^*(e) \leq 1/\gamma\}$. As in \cite{Thorup07} we consider the new estimator   \begin{equation}\label{eq_estimator2} \sum_{e \in A}\frac{(1-\epsilon)^{L^{\B}(e)}(1-\epsilon d)^{t-|\B|}}{(1-\epsilon)^{d t-\epsilon t/\gamma}}.
\end{equation} 
We now show analogs of (i), (ii) and (iii) from the previous proof.
As all steps to prove that (i) the previous estimator initially has a value below 1 do not depend on whether $\epsilon$ is negative or positive, it directly translates to the new estimator. Similarly, the proof that (iii) the value of the estimator being smaller than 1 at the end of the process implies  $\ell^{\B}(e) \leq \ell^*(e) - \eps/\gamma$ is also analogous. Hence, we go on to show (ii) that adding a greedy base does not increase the new estimators value. Again, we define the quantity $q(B)$ to match the estimator. As above, we need to show that 
\begin{equation*} q(B) := \sum_{e \in A}(1-\epsilon)^{L^{\B \cup \{B\}}(e)} \leq \sum_{e \in A}(1-\epsilon)^{L^{\B}(e)}(1-\epsilon d).
\end{equation*}
Since now $\Pr_{B\in\Pi_\M}(e\in B) \geq d$ for every $e \in A$ we get 
\begin{equation*}  \mathbb{E}_{B \in \Pi_\M}q(B) \leq \sum_{e \in A}(1-\epsilon)^{L^{\B}(e)}(1 - \epsilon d).
\end{equation*}
Thus again, we need to show that for a greedy base $B$ we have $q(B) < q(B')$ for any $B' \in \Pi_\M$. As in the proof for the upper bound 
\begin{equation*} q(B) := \sum_{e \in A}(1-\epsilon)^{L^{\B}(e)} - \epsilon \sum_{B \cap A}(1-\epsilon)^{L^{\B}(e)}
\end{equation*}
and, as above, it suffices to compare the values $c(S) := -\sum_{e\in S}(1-\epsilon)^{L^{\B}(e)}$. Again, we want to find a set that simultaneously gives an upper bound for $c(B \cap A)$ and a lower bound for $\min_{B'\in \Pi_\M} c(B' \cap A)$. We will show that such a set is given by a minimum base of the contracted matroid $\M \cdot A$. For equivalent definitions and more background on contracted matroids, see, e.g., \cite{oxley92}. 
\begin{definition}
    Let $B_{\overline A}$ be a base of $\M|\overline A$ and let $\mathcal{I} \cdot A := \{X \subseteq A \;|\; X \cup B_{\overline A} \in \mathcal{I}\}$. The matroid $\M \cdot A := (A, \mathcal{I} \cdot A)$ is the contraction of $\M$ to $A$.
\end{definition}

For completeness, let us show this is well-defined. 

\begin{claim}
    Let $B$ and $B'$ be bases of $\M|\overline A$, then \[\{X \subseteq A \;|\; X \cup B \in \mathcal{I}\} = \{X \subseteq A \;|\; X \cup B' \in \mathcal{I}\}.\]
\end{claim}
\begin{proof}
    Let $X\subseteq A$ s.t.\ $X\cup B\in \I$, then we have to show that $X \cup B'\in \I$. Suppose not: $X\cup B'\notin \I$, but since $B$ is a base for $\M|\overline A$, we have that $B'\subseteq \Span(B)$, so then $X\cup B \notin \I$, a contradiction. Since this proof is symmetric in $B$ and $B'$, it shows equality. 
\end{proof}

When we consider $\M \cdot A$ for the graphic matroid, then it corresponds to contracting the elements in $\overline A$, usually denoted by $M/\overline{A}$. Since there is no such notion of vertices in matroids, we cannot contract two (or more) vertices into one. We choose to use the notation $M\cdot A$ to stress this difference. Moreover, we only need the contracted matroids conceptually for our analysis, which means we do not need to specify how to query them.

Now, we show the following claim in order to upper bound $c(B \cap A)$.
\begin{claim} \label{claim_contracted_base}
   For any minimum weight base $B$ of $\M$ with respect to the loads $L^{\B}( \cdot)$ and a set $A \subseteq E$ there is a minimum weight base $B_{A}$ with respect to the loads $L^{\B}(\cdot)$ for $\M \cdot A$, such that $B_{A} \subseteq B \cap A$.
\end{claim}
\begin{proof}
Again, we show this by altering the greedy algorithm to find a minimum weight base. We relabel the elements according to the order as described previously. We build the set $S_E$ exactly as before. Simultaneously we build $S_{ A}$ as follows. Let $B_{\overline A}$ be a base for $\M|\overline A$. In iteration $i$, when $e_i$ is considered, add $e_i$ to  $S_{A}$ if $S_{A} \cup B_{\overline A} +e_i \in \mathcal{I}$. Note that $S_{A} \cup B_{\overline A} +e_i \in \mathcal{I}$ holds if and only if $S_{A} + e_i \in \mathcal{I} \cdot A$.
Hence, the algorithm computes a minimum weight base for $\M \cdot A$ in addition to computing the minimum weight base $B$. Now we show that at any point during the algorithm $S_{A} \cup B_{\overline A}$ spans~$S_E$. Initially, this is trivially true. Consider an iteration $i$ in which $e_i$ gets added to $S_E$ otherwise the property continues to hold. If $S_{A} \cup B_{\overline A} + e_i \in \mathcal{I}$, $e_i$ is added to $S_{A}$ as well and $S_{A}$ spans $S_E$ again. If $S_{A} \cup B_{\overline A} + e_i \not \in \mathcal{I}$ then  $S_{A} \cup B_{\overline A}$ spans $S_E + e_i$ by induction. Again, this immediately gives $\Span(S_E) \subseteq \Span(S_{A} \cup B_{\overline A})$ and any element that gets added to $S_{A}$ is also added to $S_E$ during the algorithm. As $B_{\overline A}$ spans $\overline A$, all elements in $S_{A}$ are elements of $A$. Hence, $B \cap A$ contains a minimum weight base $B_A$ for $\M \cdot A$. \end{proof}
As a result, for a base $B_{A}$ as described in \cref{claim_contracted_base} we have
\begin{equation*} c(B \cap A) \leq c(B_A).
\end{equation*}
It also holds that $B' \cap A$ is a base for $\M \cdot A$ for any $B' \in \Pi_\M$. Assume that this was not the case and consider a base of $B_{\overline A}$ of $\M|\overline A$. Then $(B' \cap A) \cup B_{\overline A}$ is not a base of $\M$. Then we could build another base of $\M$ by the exchange property by adding elements from $B'$ to $B_{\overline A}$. We would only add elements of $B \cap A$ in this way as $B_{\overline A}$ spans all other elements of $B'$, but there would be at least one element of $B' \cap A$ that we cannot add. This element is replaced by an element of $B_{\overline A} \subseteq \overline A$. Since $\ell^*(e) < \ell^*(e')$ for any $e \in \overline A$ and $e' \in A$, $B'$ would not have been a minimum weight base for $\M$. Again, because it does not change the order between the elements, the minimum weight bases of $\M \cdot A$ stay the same whether we consider $-(1-\epsilon)^{L^\B(\cdot)}$ or $L^\B(\cdot)$, hence we further get 
\begin{equation*} c(B \cap A) \leq c(B_A) \leq \min_{B' \in \Pi_\M} c(B' \cap A).
\end{equation*}

Now, it remains to show that \Cref{eq:greedy_coll_high} holds for all elements $e \in E$ with ${\ell^*(e) \geq 1/\gamma}$. We do so again in two parts using an estimator for the number of violations. Let ${1/\gamma \leq d \leq 1/\Phi}$. First, we show that $\ell^\B(e) \leq (1 + \eps)\ell$. To do this, let $A := \{\ell^*(e) \leq d\}$ and consider the estimator
\begin{equation*}
\sum_{e \in A}\frac{(1+\epsilon)^{L^{\B}(e)}(1+\epsilon d)^{t-|\B|}}{(1+\epsilon)^{(1+\eps)d t}}.    
\end{equation*}
Note that adding a greedy base still does not increase the value of the estimator (ii), as only the denominator changed compared to the previous proof for \Cref{eq:greedy_coll_low}. Now, we consider the end of the process, when all bases have been added greedily. Again, we need to show that (iii) the estimator being smaller than $1$ implies $\ell^\B(e) \leq (1 + \eps) d$ for all $e \in A$. when $t = |\B|$ the new estimator becomes
\begin{equation*}
\sum_{e \in A}(1+\epsilon)^{L^{\B}(e) - (1 + \eps)d t}.    
\end{equation*}
Hence, if there is an $e \in A$ with $\ell^\B(e) \geq (1 + \eps) d$ then the above summand corresponding to $e$ would be at least $1$.

Now, we need to show that (i) initially the value of this new estimator was also below $1$:
\begin{equation*}
\sum_{e' \in A}\frac{(1+\epsilon d)^{t}}{(1+\epsilon)^{(1+\eps)d t}} \leq n\left(\frac{e^{\epsilon}}{(1+\epsilon)^{(1+\epsilon)}}\right)^{d t} \leq n\left(\frac{e^{\epsilon}}{(1+\epsilon)^{(1+\epsilon)}}\right)^{1/\gamma t} \leq 1.
\end{equation*}
The second to last inequality follows since $d \geq 1/\gamma$ and $e^{\epsilon}/(1+\epsilon)^{(1+\epsilon)} \leq 1$ and the last inequality is as in the first part of the proof.

It remains to show that $\ell^\B(e) \leq (1 - \eps)d$ holds for all $e \in E$ with $d \leq \ell^*(e)$. 

All steps are analogous to the $(1+ \eps)$-case, with $A := \{\ell^*(e) \geq d\}$ and the estimator 
\begin{equation*}
\sum_{e \in A}\frac{(1-\epsilon)^{L^{\B}(e)}(1-\epsilon d)^{t-|\B|}}{(1-\epsilon)^{(1-\eps)d t}}.    
\end{equation*}
 This concludes the proof of \Cref{lem:greedy_collnew}.
\end{proof}

\printbibliography[heading=bibintoc] % Make bibliography show up in table of contents

@article{dougherty2007networks,
  title={Networks, matroids, and non-Shannon information inequalities},
  author={Dougherty, Randall and Freiling, Chris and Zeger, Kenneth},
  journal={IEEE Transactions on Information Theory},
  volume={53},
  number={6},
  pages={1949--1969},
  year={2007},
  publisher={IEEE}
}

@article{burkard1990constrained,
  title={Constrained partitioning problems},
  author={Burkard, Rainer E and Yao, En-Yu},
  journal={Discrete applied mathematics},
  volume={28},
  number={1},
  pages={21--34},
  year={1990},
  publisher={Elsevier}
}

@article{kawase2021optimal,
  title={Optimal matroid partitioning problems},
  author={Kawase, Yasushi and Kimura, Kei and Makino, Kazuhisa and Sumita, Hanna},
  journal={Algorithmica},
  volume={83},
  number={6},
  pages={1653--1676},
  year={2021},
  publisher={Springer}
}

@article{shannon1948mathematical,
  title={A mathematical theory of communication},
  author={Shannon, Claude E},
  journal={The Bell system technical journal},
  volume={27},
  number={3},
  pages={379--423},
  year={1948},
  publisher={Nokia Bell Labs}
}

@article{lehman1964solution,
  title={A solution of the Shannon switching game},
  author={Lehman, Alfred},
  journal={Journal of the Society for Industrial and Applied Mathematics},
  volume={12},
  number={4},
  pages={687--725},
  year={1964},
  publisher={SIAM}
}

@article{casazza2013introduction,
  title={Introduction to finite frame theory},
  author={Casazza, Peter G and Kutyniok, Gitta and Philipp, Friedrich},
  journal={Finite frames: theory and applications},
  pages={1--53},
  year={2013},
  publisher={Springer}
}

@article{arkhipov2025greedy,
  title={Greedy matroid base packings with applications to dynamic graph density and orientations},
  author={Arkhipov, Pavel and Kolmogorov, Vladimir},
  journal={arXiv preprint arXiv:2511.13205},
  year={2025}
}

@inproceedings{CenFLLP25,
 author       = {Ruoxu Cen and
                  Henry L. Fleischmann and
                  George Z. Li and
                  Jason Li and
                  Debmalya Panigrahi},
  title        = {Fast Algorithms for Graph Arboricity and Related Problems},
    booktitle = {66th {IEEE} Annual Symposium on Foundations of Computer Science, {FOCS}
                  2025},
    year = {2025},
 eprinttype    = {arXiv},
  eprint       = {2507.15598},
}

@inproceedings{chandrasekaran2025online,
   author       = {Karthekeyan Chandrasekaran and
                  Chandra Chekuri and
                  Weihao Zhu},
  editor       = {Keren Censor{-}Hillel and
                  Fabrizio Grandoni and
                  Jo{\"{e}}l Ouaknine and
                  Gabriele Puppis},
  title        = {Online Disjoint Spanning Trees and Polymatroid Bases},
  booktitle    = {52nd International Colloquium on Automata, Languages, and Programming,
                  {ICALP} 2025, July 8-11, 2025, Aarhus, Denmark},
  series       = {LIPIcs},
  volume       = {334},
  pages        = {44:1--44:20},
  publisher    = {Schloss Dagstuhl - Leibniz-Zentrum f{\"{u}}r Informatik},
  year         = {2025},
  url          = {https://doi.org/10.4230/LIPIcs.ICALP.2025.44},
  doi          = {10.4230/LIPICS.ICALP.2025.44},
eprinttype    = {arXiv},
  eprint       = {2503.19999},
}

@book{kleinbergT06,
  author       = {Jon M. Kleinberg and
                  {\'{E}}va Tardos},
  title        = {Algorithm design},
  publisher    = {Addison-Wesley},
  year         = {2006},
  isbn         = {978-0-321-37291-8},
}

@book{welsh2010matroid,
  title={Matroid theory},
  author={Welsh, Dominic JA},
  year={2010},
  publisher={Courier Corporation}
}

@book{recski2013matroid,
  title={Matroid theory and its applications in electric network theory and in statics},
  author={Recski, Andr{\'a}s},
  volume={6},
  year={2013},
  publisher={Springer Science \& Business Media}
}

@book{oxley92,
  author       = {James G. Oxley},
  title        = {Matroid theory},
  publisher    = {Oxford University Press},
  year         = {1992},
  isbn         = {978-0-19-853563-8},
}

@article{HlinenyW06,
  author       = {Petr Hlinen{\'{y}} and
                  Geoff Whittle},
  title        = {Matroid tree-width},
  journal      = {Eur. J. Comb.},
  volume       = {27},
  number       = {7},
  pages        = {1117--1128},
  year         = {2006},
  url          = {https://doi.org/10.1016/j.ejc.2006.06.005},
  doi          = {10.1016/J.EJC.2006.06.005}
}

@article{Fujita24,
    author = {Takaaki Fujita},
    title = {Matroid, Ideal, Ultrafilter, Tangle, and so on: Reconsideration of Obstruction to Linear Decomposition},
    journal = {International Journal of Mathematics Trends and Technology (IJMTT)},
    year = {2024},
    volume = {70},
    pages = {18--29},
    doi = {https://doi.org/10.14445/22315373/IJMTT-V70I7P104},
    number = {7},
    eprinttype    = {arXiv},
  eprint       = {2309.09199}
}

@inproceedings{MirzasoleimanK017,
  author       = {Baharan Mirzasoleiman and
                  Amin Karbasi and
                  Andreas Krause},
  editor       = {Doina Precup and
                  Yee Whye Teh},
  title        = {Deletion-Robust Submodular Maximization: Data Summarization with "the
                  Right to be Forgotten"},
  booktitle    = {Proceedings of the 34th International Conference on Machine Learning,
                  {ICML} 2017, Sydney, NSW, Australia, 6-11 August 2017},
  series       = {Proceedings of Machine Learning Research},
  volume       = {70},
  pages        = {2449--2458},
  publisher    = {{PMLR}},
  year         = {2017},
  url          = {http://proceedings.mlr.press/v70/mirzasoleiman17a.html},
}

@inproceedings{ChenP22,
  author       = {Xi Chen and
                  Binghui Peng},
  editor       = {Stefano Leonardi and
                  Anupam Gupta},
  title        = {On the complexity of dynamic submodular maximization},
  booktitle    = {{STOC} '22: 54th Annual {ACM} {SIGACT} Symposium on Theory of Computing,
                  Rome, Italy, June 20 - 24, 2022},
  pages        = {1685--1698},
  publisher    = {{ACM}},
  year         = {2022},
  url          = {https://doi.org/10.1145/3519935.3519951},
  doi          = {10.1145/3519935.3519951},
  eprinttype    = {arXiv},
  eprint       = {2111.03198}
}

@inproceedings{BanihashemBGHJM24,
  author       = {Kiarash Banihashem and
                  Leyla Biabani and
                  Samira Goudarzi and
                  MohammadTaghi Hajiaghayi and
                  Peyman Jabbarzade and
                  Morteza Monemizadeh},
  editor       = {David P. Woodruff},
  title        = {Dynamic Algorithms for Matroid Submodular Maximization},
  booktitle    = {Proceedings of the 2024 {ACM-SIAM} Symposium on Discrete Algorithms,
                  {SODA} 2024, Alexandria, VA, USA, January 7-10, 2024},
  pages        = {3485--3533},
  publisher    = {{SIAM}},
  year         = {2024},
  url          = {https://doi.org/10.1137/1.9781611977912.125},
  doi          = {10.1137/1.9781611977912.125},
 eprinttype    = {arXiv},
  eprint       = {2306.00959},
}

@inproceedings{Karger93,
  author       = {David R. Karger},
  title        = {Random Sampling in Matroids, with Applications to Graph Connectivity
                  and Minimum Spanning Trees},
  booktitle    = {34th Annual Symposium on Foundations of Computer Science, Palo Alto,
                  California, USA, 3-5 November 1993},
  pages        = {84--93},
  publisher    = {{IEEE} Computer Society},
  year         = {1993},
  url          = {https://doi.org/10.1109/SFCS.1993.366879},
  doi          = {10.1109/SFCS.1993.366879},
}

@book{knuth1973matroid,
  title={Matroid partitioning},
  author={Knuth, Donald Ervin},
  year={1973},
  publisher={Computer Science Department, Stanford University}
}

@inproceedings{Blikstad21,
  author       = {Joakim Blikstad},
  editor       = {Nikhil Bansal and
                  Emanuela Merelli and
                  James Worrell},
  title        = {Breaking O(nr) for Matroid Intersection},
  booktitle    = {48th International Colloquium on Automata, Languages, and Programming,
                  {ICALP} 2021, July 12-16, 2021, Glasgow, Scotland (Virtual Conference)},
  series       = {LIPIcs},
  volume       = {198},
  pages        = {31:1--31:17},
  publisher    = {Schloss Dagstuhl - Leibniz-Zentrum f{\"{u}}r Informatik},
  year         = {2021},
  url          = {https://doi.org/10.4230/LIPIcs.ICALP.2021.31},
  doi          = {10.4230/LIPICS.ICALP.2021.31},
  eprinttype    = {arXiv},
  eprint       = {2105.05673}
}

@inproceedings{ChekuriQ17,
  author       = {Chandra Chekuri and
                  Kent Quanrud},
  editor       = {Philip N. Klein},
  title        = {Near-Linear Time Approximation Schemes for some Implicit Fractional
                  Packing Problems},
  booktitle    = {Proceedings of the Twenty-Eighth Annual {ACM-SIAM} Symposium on Discrete
                  Algorithms, {SODA} 2017, Barcelona, Spain, Hotel Porta Fira, January
                  16-19},
  pages        = {801--820},
  publisher    = {{SIAM}},
  year         = {2017},
  url          = {https://doi.org/10.1137/1.9781611974782.51},
  doi          = {10.1137/1.9781611974782.51},
}

@article{Karger98,
  author       = {David R. Karger},
  title        = {Random sampling and greedy sparsification for matroid optimization
                  problems},
  journal      = {Math. Program.},
  volume       = {82},
  pages        = {41--81},
  year         = {1998},
  url          = {https://doi.org/10.1007/BF01585865},
  doi          = {10.1007/BF01585865},
}

@article{Cunningham86,
  author       = {William H. Cunningham},
  title        = {Improved Bounds for Matroid Partition and Intersection Algorithms},
  journal      = {{SIAM} J. Comput.},
  volume       = {15},
  number       = {4},
  pages        = {948--957},
  year         = {1986},
  url          = {https://doi.org/10.1137/0215066},
  doi          = {10.1137/0215066},
}

@book{schrijver2003combinatorial,
  title={Combinatorial optimization: polyhedra and efficiency},
  author={Schrijver, Alexander},
  year={2003},
  publisher={Springer}
}

@inproceedings{McGregorTVV15,
  author       = {Andrew McGregor and
                  David Tench and
                  Sofya Vorotnikova and
                  Hoa T. Vu},
  editor       = {Giuseppe F. Italiano and
                  Giovanni Pighizzini and
                  Donald Sannella},
  title        = {Densest Subgraph in Dynamic Graph Streams},
  booktitle    = {Mathematical Foundations of Computer Science 2015 - 40th International
                  Symposium, {MFCS} 2015, Milan, Italy, August 24-28, 2015, Proceedings,
                  Part {II}},
  series       = {Lecture Notes in Computer Science},
  volume       = {9235},
  pages        = {472--482},
  publisher    = {Springer},
  year         = {2015},
  url          = {https://doi.org/10.1007/978-3-662-48054-0\_39},
  doi          = {10.1007/978-3-662-48054-0\_39},
eprinttype    = {arXiv},
  eprint       = {1506.04417},
}

@inproceedings{BlikstadMNT23,
  author       = {Joakim Blikstad and
                  Sagnik Mukhopadhyay and
                  Danupon Nanongkai and
                  Ta{-}Wei Tu},
  editor       = {Barna Saha and
                  Rocco A. Servedio},
  title        = {Fast Algorithms via Dynamic-Oracle Matroids},
  booktitle    = {Proceedings of the 55th Annual {ACM} Symposium on Theory of Computing,
                  {STOC} 2023, Orlando, FL, USA, June 20-23, 2023},
  pages        = {1229--1242},
  publisher    = {{ACM}},
  year         = {2023},
  url          = {https://doi.org/10.1145/3564246.3585219},
  doi          = {10.1145/3564246.3585219},
   eprinttype    = {arXiv},
  eprint       = {2302.09796},
}

@incollection{Wulff-Nilsen16a,
  author       = {Christian Wulff{-}Nilsen},
  title        = {Faster Deterministic Fully-Dynamic Graph Connectivity},
  booktitle    = {Encyclopedia of Algorithms},
  pages        = {738--741},
  year         = {2016},
  url          = {https://doi.org/10.1007/978-1-4939-2864-4\_569},
  doi          = {10.1007/978-1-4939-2864-4\_569},
  note = {Announced at SODA'13},
 eprinttype    = {arXiv},
  eprint       = {1209.5608},
  publisher={Springer}
}

@inproceedings{Thorup00,
  author       = {Mikkel Thorup},
  editor       = {F. Frances Yao and
                  Eugene M. Luks},
  title        = {Near-optimal fully-dynamic graph connectivity},
  booktitle    = {Proceedings of the Thirty-Second Annual {ACM} Symposium on Theory
                  of Computing, May 21-23, 2000, Portland, OR, {USA}},
  pages        = {343--350},
  publisher    = {{ACM}},
  year         = {2000},
  url          = {https://doi.org/10.1145/335305.335345},
  doi          = {10.1145/335305.335345},
}

@inproceedings{PatrascuT07,
  author       = {Mihai P{\u{a}}tra{\c{s}}cu and
                  Mikkel Thorup},
  title        = {Planning for Fast Connectivity Updates},
  booktitle    = {48th Annual {IEEE} Symposium on Foundations of Computer Science {(FOCS}
                  2007), October 20-23, 2007, Providence, RI, USA, Proceedings},
  pages        = {263--271},
  publisher    = {{IEEE} Computer Society},
  year         = {2007},
  url          = {https://doi.org/10.1109/FOCS.2007.54},
  doi          = {10.1109/FOCS.2007.54},
}

@inproceedings{KapronKM13,
  author       = {Bruce M. Kapron and
                  Valerie King and
                  Ben Mountjoy},
  editor       = {Sanjeev Khanna},
  title        = {Dynamic graph connectivity in polylogarithmic worst case time},
  booktitle    = {Proceedings of the Twenty-Fourth Annual {ACM-SIAM} Symposium on Discrete
                  Algorithms, {SODA} 2013, New Orleans, Louisiana, USA, January 6-8,
                  2013},
  pages        = {1131--1142},
  publisher    = {{SIAM}},
  year         = {2013},
  url          = {https://doi.org/10.1137/1.9781611973105.81},
  doi          = {10.1137/1.9781611973105.81},
}

@article{GibbKKT15,
  author       = {David Gibb and
                  Bruce M. Kapron and
                  Valerie King and
                  Nolan Thorn},
  title        = {Dynamic graph connectivity with improved worst case update time and
                  sublinear space},
  journal      = {CoRR},
  volume       = {abs/1509.06464},
  year         = {2015},
  url          = {http://arxiv.org/abs/1509.06464},
  eprinttype    = {arXiv},
  eprint       = {1509.06464},
}

@article{HuangHKPT23,
  author       = {Shang{-}En Huang and
                  Dawei Huang and
                  Tsvi Kopelowitz and
                  Seth Pettie and
                  Mikkel Thorup},
  title        = {Fully Dynamic Connectivity in O(log n(loglog n)\({}^{\mbox{2}}\))
                  Amortized Expected Time},
  journal      = {TheoretiCS},
  volume       = {2},
  year         = {2023},
  url          = {https://doi.org/10.46298/theoretics.23.6},
  doi          = {10.46298/THEORETICS.23.6},
  note = {Announced at SODA'17},
  eprinttype    = {arXiv},
  eprint       = {1609.05867},
}

@article{HenzingerT97,
  author       = {Monika Rauch Henzinger and
                  Mikkel Thorup},
  title        = {Sampling to provide or to bound: With applications to fully dynamic
                  graph algorithms},
  journal      = {Random Struct. Algorithms},
  volume       = {11},
  number       = {4},
  pages        = {369--379},
  year         = {1997},
}

@inproceedings{NanongkaiS17,
  author       = {Danupon Nanongkai and
                  Thatchaphol Saranurak},
  editor       = {Hamed Hatami and
                  Pierre McKenzie and
                  Valerie King},
  title        = {Dynamic spanning forest with worst-case update time: adaptive, Las
                  Vegas, and O(n\({}^{\mbox{1/2 - {\(\epsilon\)}}}\))-time},
  booktitle    = {Proceedings of the 49th Annual {ACM} {SIGACT} Symposium on Theory
                  of Computing, {STOC} 2017, Montreal, QC, Canada, June 19-23, 2017},
  pages        = {1122--1129},
  publisher    = {{ACM}},
  year         = {2017},
  url          = {https://doi.org/10.1145/3055399.3055447},
  doi          = {10.1145/3055399.3055447},
  eprinttype    = {arXiv},
  eprint       = {1611.03745},
}

@article{HenzingerK99,
  author       = {Monika Rauch Henzinger and
                  Valerie King},
  title        = {Randomized Fully Dynamic Graph Algorithms with Polylogarithmic Time
                  per Operation},
  journal      = {J. {ACM}},
  volume       = {46},
  number       = {4},
  pages        = {502--516},
  year         = {1999},
  url          = {https://doi.org/10.1145/320211.320215},
  doi          = {10.1145/320211.320215},
  timestamp    = {Tue, 06 Nov 2018 12:51:45 +0100},
  biburl       = {https://dblp.org/rec/journals/jacm/HenzingerK99.bib},
  bibsource    = {dblp computer science bibliography, https://dblp.org}
}

@inproceedings{Wulff-Nilsen17,
  author       = {Christian Wulff{-}Nilsen},
  editor       = {Hamed Hatami and
                  Pierre McKenzie and
                  Valerie King},
  title        = {Fully-dynamic minimum spanning forest with improved worst-case update
                  time},
  booktitle    = {Proceedings of the 49th Annual {ACM} {SIGACT} Symposium on Theory
                  of Computing, {STOC} 2017, Montreal, QC, Canada, June 19-23, 2017},
  pages        = {1130--1143},
  publisher    = {{ACM}},
  year         = {2017},
  url          = {https://doi.org/10.1145/3055399.3055415},
  doi          = {10.1145/3055399.3055415},
    eprinttype    = {arXiv},
  eprint       = {1611.02864},
}

@inproceedings{NanongkaiSW17,
  author       = {Danupon Nanongkai and
                  Thatchaphol Saranurak and
                  Christian Wulff{-}Nilsen},
  editor       = {Chris Umans},
  title        = {Dynamic Minimum Spanning Forest with Subpolynomial Worst-Case Update
                  Time},
  booktitle    = {58th {IEEE} Annual Symposium on Foundations of Computer Science, {FOCS}
                  2017, Berkeley, CA, USA, October 15-17, 2017},
  pages        = {950--961},
  publisher    = {{IEEE} Computer Society},
  year         = {2017},
  url          = {https://doi.org/10.1109/FOCS.2017.92},
  doi          = {10.1109/FOCS.2017.92},
  eprinttype    = {arXiv},
  eprint       = {1708.03962},
}

@article{HenzingerK01,
  author       = {Monika Rauch Henzinger and
                  Valerie King},
  title        = {Maintaining Minimum Spanning Forests in Dynamic Graphs},
  journal      = {{SIAM} J. Comput.},
  volume       = {31},
  number       = {2},
  pages        = {364--374},
  year         = {2001},
  url          = {https://doi.org/10.1137/S0097539797327209},
  doi          = {10.1137/S0097539797327209},
}

@article{HolmLT01,
  author       = {Jacob Holm and
                  Kristian de Lichtenberg and
                  Mikkel Thorup},
  title        = {Poly-logarithmic deterministic fully-dynamic algorithms for connectivity,
                  minimum spanning tree, 2-edge, and biconnectivity},
  journal      = {J. {ACM}},
  volume       = {48},
  number       = {4},
  pages        = {723--760},
  year         = {2001},
  url          = {https://doi.org/10.1145/502090.502095},
  doi          = {10.1145/502090.502095},
  note = {Announced at STOC'98}
}

@article{Frederickson97,
  author       = {Greg N. Frederickson},
  title        = {Ambivalent Data Structures for Dynamic 2-Edge-Connectivity and k Smallest
                  Spanning Trees},
  journal      = {{SIAM} J. Comput.},
  volume       = {26},
  number       = {2},
  pages        = {484--538},
  year         = {1997},
  url          = {https://doi.org/10.1137/S0097539792226825},
  doi          = {10.1137/S0097539792226825},
  note = {Announced at FOCS'91}
}

@article{Frederickson85,
  author       = {Greg N. Frederickson},
  title        = {Data Structures for On-Line Updating of Minimum Spanning Trees, with
                  Applications},
  journal      = {{SIAM} J. Comput.},
  volume       = {14},
  number       = {4},
  pages        = {781--798},
  year         = {1985},
  url          = {https://doi.org/10.1137/0214055},
  doi          = {10.1137/0214055},
}

@article{EppsteinITTWY92,
  author       = {David Eppstein and
                  Giuseppe F. Italiano and
                  Roberto Tamassia and
                  Robert Endre Tarjan and
                  Jeffery R. Westbrook and
                  Moti Yung},
  title        = {Maintenance of a Minimum Spanning Forest in a Dynamic Plane Graph},
  journal      = {J. Algorithms},
  volume       = {13},
  number       = {1},
  pages        = {33--54},
  year         = {1992},
  url          = {https://doi.org/10.1016/0196-6774(92)90004-V},
  doi          = {10.1016/0196-6774(92)90004-V},
}

@article{EppsteinGIN97,
  author       = {David Eppstein and
                  Zvi Galil and
                  Giuseppe F. Italiano and
                  Amnon Nissenzweig},
  title        = {Sparsification - a technique for speeding up dynamic graph algorithms},
  journal      = {J. {ACM}},
  volume       = {44},
  number       = {5},
  pages        = {669--696},
  year         = {1997},
  url          = {https://doi.org/10.1145/265910.265914},
  doi          = {10.1145/265910.265914},
  note = {Announced at FOCS'92}
}

@article{AlbertsH98,
  author       = {David Alberts and
                  Monika Rauch Henzinger},
  title        = {Average-Case Analysis of Dynamic Graph Algorithms},
  journal      = {Algorithmica},
  volume       = {20},
  number       = {1},
  pages        = {31--60},
  year         = {1998},
  url          = {https://doi.org/10.1007/PL00009186},
  doi          = {10.1007/PL00009186},
  note = {Announced at SODA'95}
}

@inproceedings{ThorupK00,
  author       = {Mikkel Thorup and
                  David R. Karger},
  editor       = {Magn{\'{u}}s M. Halld{\'{o}}rsson},
  title        = {Dynamic Graph Algorithms with Applications},
  booktitle    = {Algorithm Theory - {SWAT} 2000, 7th Scandinavian Workshop on Algorithm
                  Theory, Bergen, Norway, July 5-7, 2000, Proceedings},
  series       = {Lecture Notes in Computer Science},
  volume       = {1851},
  pages        = {1--9},
  publisher    = {Springer},
  year         = {2000},
  url          = {https://doi.org/10.1007/3-540-44985-X\_1},
  doi          = {10.1007/3-540-44985-X\_1},
}

@inproceedings{DoryEMN21,
  author       = {Michal Dory and
                  Yuval Efron and
                  Sagnik Mukhopadhyay and
                  Danupon Nanongkai},
  editor       = {Samir Khuller and
                  Virginia Vassilevska Williams},
  title        = {Distributed weighted min-cut in nearly-optimal time},
  booktitle    = {{STOC} '21: 53rd Annual {ACM} {SIGACT} Symposium on Theory of Computing,
                  Virtual Event, Italy, June 21-25, 2021},
  pages        = {1144--1153},
  publisher    = {{ACM}},
  year         = {2021},
  url          = {https://doi.org/10.1145/3406325.3451020},
  doi          = {10.1145/3406325.3451020},
  eprinttype    = {arXiv},
  eprint       = {2004.09129},
}

@inproceedings{DBLP:conf/stoc/DagaHNS19,
  author       = {Mohit Daga and
                  Monika Henzinger and
                  Danupon Nanongkai and
                  Thatchaphol Saranurak},
  editor       = {Moses Charikar and
                  Edith Cohen},
  title        = {Distributed edge connectivity in sublinear time},
  booktitle    = {Proceedings of the 51st Annual {ACM} {SIGACT} Symposium on Theory
                  of Computing, {STOC} 2019, Phoenix, AZ, USA, June 23-26, 2019},
  pages        = {343--354},
  publisher    = {{ACM}},
  year         = {2019},
  url          = {https://doi.org/10.1145/3313276.3316346},
  doi          = {10.1145/3313276.3316346},
  timestamp    = {Sun, 02 Oct 2022 16:16:11 +0200},
  biburl       = {https://dblp.org/rec/conf/stoc/DagaHNS19.bib},
  bibsource    = {dblp computer science bibliography, https://dblp.org}
}

@inproceedings{ChekuriQ019,
  author       = {Chandra Chekuri and
                  Kent Quanrud and
                  Chao Xu},
  editor       = {Jeremy T. Fineman and
                  Michael Mitzenmacher},
  title        = {{LP} Relaxation and Tree Packing for Minimum k-cuts},
  booktitle    = {2nd Symposium on Simplicity in Algorithms, {SOSA} 2019, January 8-9,
                  2019, San Diego, CA, {USA}},
  series       = {OASIcs},
  volume       = {69},
  pages        = {7:1--7:18},
  publisher    = {Schloss Dagstuhl - Leibniz-Zentrum f{\"{u}}r Informatik},
  year         = {2019},
  url          = {https://doi.org/10.4230/OASIcs.SOSA.2019.7},
  doi          = {10.4230/OASICS.SOSA.2019.7},
  timestamp    = {Tue, 15 Feb 2022 09:40:03 +0100},
  biburl       = {https://dblp.org/rec/conf/soda/ChekuriQ019.bib},
  bibsource    = {dblp computer science bibliography, https://dblp.org}
}

@inproceedings{Thorup08,
  author       = {Mikkel Thorup},
  editor       = {Cynthia Dwork},
  title        = {Minimum k-way cuts via deterministic greedy tree packing},
  booktitle    = {Proceedings of the 40th Annual {ACM} Symposium on Theory of Computing,
                  Victoria, British Columbia, Canada, May 17-20, 2008},
  pages        = {159--166},
  publisher    = {{ACM}},
  year         = {2008},
  url          = {https://doi.org/10.1145/1374376.1374402},
  doi          = {10.1145/1374376.1374402},
  timestamp    = {Tue, 06 Nov 2018 11:07:05 +0100},
  biburl       = {https://dblp.org/rec/conf/stoc/Thorup08.bib},
  bibsource    = {dblp computer science bibliography, https://dblp.org}
}

@article{karger2000minimum,
  title={Minimum cuts in near-linear time},
  author={Karger, David R},
  journal={Journal of the ACM (JACM)},
  volume={47},
  number={1},
  pages={46--76},
  year={2000},
  publisher={ACM New York, NY, USA}, 
  note={Announced at STOC 1996},
  eprinttype={arXiv},
  eprint={cs.DS/9812007}
}

@article{gabow1995matroid,
  title={A Matroid Approach to Finding Edge Connectivity and Packing Arborescences},
  author={Gabow, HN},
  journal={Journal of Computer and System Sciences},
  volume={2},
  number={50},
  pages={259--273},
  year={1995},
  note ={Announced at STOC 1991}
}

@article{Nash61,
  title={Edge-disjoint spanning trees of finite graphs},
  author={Nash-Williams, C St JA},
  journal={Journal of the London Mathematical Society},
  volume={1},
  number={1},
  pages={445--450},
  year={1961},
  publisher={Oxford University Press}
}

@article{Tutte61,
  title={On the problem of decomposing a graph into n connected factors},
  author={Tutte, William Thomas},
  journal={Journal of the London Mathematical Society},
  volume={1},
  number={1},
  pages={221--230},
  year={1961},
  publisher={Oxford University Press}
}

@article{BanerjeeRS20,
  author       = {Niranka Banerjee and
                  Venkatesh Raman and
                  Saket Saurabh},
  title        = {Fully dynamic arboricity maintenance},
  journal      = {Theor. Comput. Sci.},
  volume       = {822},
  pages        = {1--14},
  year         = {2020},
  url          = {https://doi.org/10.1016/j.tcs.2020.04.010},
  doi          = {10.1016/J.TCS.2020.04.010},
  note          = {Announced at COCOON 2019}
}

@inproceedings{edmonds1970submodular,
  title={Submodular functions, matroids, and certain polyhedra},
  author={Edmonds, J},
  booktitle={Proc. Calgary Int. Conference on Combinatorial Structures and Their Applications, 1970},
  year={1970},
  organization={Gordon and Breach}
}

@inproceedings{lawler1970optimal,
  title={Optimal matroid intersections},
  author={Lawler, Eugene L},
booktitle={Proc. Calgary Int. Conference on Combinatorial Structures and Their Applications, 1970},
  year={1970},
  organization={Gordon and Breach}
}

@article{KarpUW88,
  author       = {Richard M. Karp and
                  Eli Upfal and
                  Avi Wigderson},
  title        = {The Complexity of Parallel Search},
  journal      = {J. Comput. Syst. Sci.},
  volume       = {36},
  number       = {2},
  pages        = {225--253},
  year         = {1988},
  url          = {https://doi.org/10.1016/0022-0000(88)90027-X},
  doi          = {10.1016/0022-0000(88)90027-X},
}

@inproceedings{Quanrud24,
  author       = {Kent Quanrud},
  editor       = {David P. Woodruff},
  title        = {Faster exact and approximation algorithms for packing and covering
                  matroids via push-relabel},
  booktitle    = {Proceedings of the 2024 {ACM-SIAM} Symposium on Discrete Algorithms,
                  {SODA} 2024, Alexandria, VA, USA, January 7-10, 2024},
  pages        = {2305--2336},
  publisher    = {{SIAM}},
  year         = {2024},
  url          = {https://doi.org/10.1137/1.9781611977912.82},
  doi          = {10.1137/1.9781611977912.82},
  eprinttype    = {arXiv},
  eprint       = {2303.01478},
}

@article{Edmonds65,
  title={Lehman's switching game and a theorem of Tutte and Nash-Williams},
  author={Edmonds, Jack},
  journal={J. Res. Nat. Bur. Standards Sect. B},
  volume={69},
  pages={73--77},
  year={1965}
}

@article{Edmonds65partition,
  title={Minimum partition of a matroid into independent subsets},
  author={Edmonds, Jack},
  journal={J. Res. Nat. Bur. Standards Sect. B},
  volume={69},
  pages={67--72},
  year={1965}
}

@article{Edmonds71,
author = {Edmonds, Jack},
title = {Matroids and the greedy algorithm},
year = {1971},
issue_date = {December  1971},
publisher = {Springer-Verlag},
address = {Berlin, Heidelberg},
volume = {1},
number = {1},
issn = {0025-5610},
url = {https://doi.org/10.1007/BF01584082},
doi = {10.1007/BF01584082},
abstract = {Linear-algebra rank is the solution to an especially tractable optimization problem. This tractability is viewed abstractly, and extended to certain more general optimization problems which are linear programs relative to certain derived polyhedra.},
journal = {Math. Program.},
month = dec,
pages = {127–136},
numpages = {10}
}

@article{Thorup07,
  author       = {Mikkel Thorup},
  title        = {Fully-Dynamic Min-Cut},
  journal      = {Comb.},
  volume       = {27},
  number       = {1},
  pages        = {91--127},
  year         = {2007},
  url          = {https://doi.org/10.1007/s00493-007-0045-2},
  doi          = {10.1007/S00493-007-0045-2},
  note         = {Announced at {STOC} 2001},
  timestamp    = {Wed, 22 Jul 2020 22:02:31 +0200},
  biburl       = {https://dblp.org/rec/journals/combinatorica/Thorup07.bib},
  bibsource    = {dblp computer science bibliography, https://dblp.org}
}

@inproceedings{deVosC24,
  author       = {Tijn de Vos and
                  Aleksander B. G. Christiansen},
  editor       = {Yossi Azar and
                  Debmalya Panigrahi},
  title        = {Tree-Packing Revisited: Faster Fully Dynamic Min-Cut and Arboricity},
  booktitle    = {Proceedings of the 2025 Annual {ACM-SIAM} Symposium on Discrete Algorithms,
                  {SODA} 2025, New Orleans, LA, USA, January 12-15, 2025},
  pages        = {700--749},
  publisher    = {{SIAM}},
  year         = {2025},
  url          = {https://doi.org/10.1137/1.9781611978322.21},
  doi          = {10.1137/1.9781611978322.21},
  timestamp    = {Tue, 28 Jan 2025 14:38:41 +0100},
  biburl       = {https://dblp.org/rec/conf/soda/VosC25.bib},
  bibsource    = {dblp computer science bibliography, https://dblp.org}
}

@inproceedings{young95,
  author       = {Neal E. Young},
  editor       = {Kenneth L. Clarkson},
  title        = {Randomized Rounding Without Solving the Linear Program},
  booktitle    = {Proceedings of the Sixth Annual {ACM-SIAM} Symposium on Discrete Algorithms,
                  22-24 January 1995. San Francisco, California, {USA}},
  pages        = {170--178},
  publisher    = {{ACM/SIAM}},
  year         = {1995},
  url          = {http://dl.acm.org/citation.cfm?id=313651.313689},
  timestamp    = {Thu, 05 Jul 2018 07:29:25 +0200},
  biburl       = {https://dblp.org/rec/conf/soda/Young95.bib},
  bibsource    = {dblp computer science bibliography, https://dblp.org}
}

@article{gale68,
title = {Optimal assignments in an ordered set: An application of matroid theory},
journal = {Journal of Combinatorial Theory},
volume = {4},
number = {2},
pages = {176-180},
year = {1968},
issn = {0021-9800},
doi = {https://doi.org/10.1016/S0021-9800(68)80039-0},
url = {https://www.sciencedirect.com/science/article/pii/S0021980068800390},
author = {David Gale},
}

% \newpage 
% \appendix

\end{document}